\newcommand{\tikzmark}[1]{\tikz[overlay,remember picture] \node (#1) {};}
\newcommand*{\AddNote}[4]{%
    \begin{tikzpicture}[overlay, remember picture]
        \draw [decoration={brace,amplitude=0.5em},decorate,thick]
            ($(#3)!(#1.north)!($(#3)-(0,1)$)$) --  
            ($(#3)!(#2.south)!($(#3)-(0,1)$)$)
                node [align=center, text width=2.5cm, pos=0.5, anchor=west] {#4};
    \end{tikzpicture}
}%
\newcommand{\multiline}[1]{%
  \begin{tabularx}{\dimexpr\linewidth-\ALG@thistlm}[t]{@{}X@{}}
    #1
  \end{tabularx}
}
\DeclareMathOperator{\x}{\mathbf{x}}
\DeclareMathOperator{\uu}{\mathbf{u}}
\DeclareMathOperator{\vv}{\mathbf{v}}
\DeclareMathOperator{\ee}{\mathbf{e}}
\DeclareMathOperator{\ff}{\mathbf{f}}
\DeclareMathOperator{\U}{\mathbf{U}}
\DeclareMathOperator{\V}{\mathbf{V}}
\DeclareMathOperator{\Z}{\mathbf{Z}}
\DeclareMathOperator{\A}{\mathbf{A}}
\DeclareMathOperator{\M}{\mathbf{M}}
\DeclareMathOperator{\N}{\mathbf{N}}
\DeclareMathOperator{\D}{\mathbf{D}}
\DeclareMathOperator{\Q}{\mathbf{Q}}
\DeclareMathOperator{\I}{\mathbf{I}}
\DeclareMathOperator{\X}{\mathbf{X}}
\DeclareMathOperator{\Y}{\mathbf{Y}}
\DeclareMathOperator{\W}{\mathbf{W}}
\DeclareMathOperator{\R}{\mathbf{R}}
\DeclareMathOperator{\real}{\mathbb{R}}
\DeclareMathOperator{\Xhat}{\vphantom{\mathbf{X}} \smash[t]{\hat{\mathbf{X}}}}
\DeclareMathOperator{\Xbar}{\vphantom{\mathbf{X}} \smash[t]{\bar{\mathbf{X}}}}
\DeclareMathOperator{\mmu}{\boldsymbol{\mu}}
\DeclareMathOperator{\Delt}{\boldsymbol{\Delta}}
\DeclareMathOperator{\Sig}{\boldsymbol{\Sigma}}
\DeclareMathOperator*{\argmin}{\mathrm{argmin} ~ }
\DeclareMathOperator*{\argmax}{\mathrm{argmax} ~ }
\newcommand{\norm}[1]{\lVert \, #1  \, \rVert}
\begin{document}

\graphicspath{{Figures/}}

\title{Integrated Principal Components Analysis}

\author{\name Tiffany M.\ Tang \email tiffany.tang@berkeley.edu\\
       \addr Department of Statistics\\
       University of California, Berkeley\\
       Berkeley, CA 94720, USA
       \AND
       \name Genevera I.\ Allen \email gallen@rice.edu\\
       \addr Department of Electrical and Computer Engineering\\
       Rice University\\
       Houston, TX 77005, USA}

%\editor{Kevin Murphy and Bernhard Sch{\"o}lkopf}

\maketitle

\begin{abstract}%   <- trailing '%' for backward compatibility of .sty file
Data integration, or the strategic analysis of multiple sources of data simultaneously, can often lead to discoveries that may be hidden in individualistic analyses of a single data source. We develop a new unsupervised data integration method named Integrated Principal Components Analysis (iPCA), which is a model-based generalization of PCA and serves as a practical tool to find and visualize common patterns that occur in multiple data sets. The key idea driving iPCA is the matrix-variate normal model, whose Kronecker product covariance structure captures both individual patterns within each data set and joint patterns shared by multiple data sets. Building upon this model, we develop several penalized (sparse and non-sparse) covariance estimators for iPCA, and using geodesic convexity, we prove that our non-sparse iPCA estimator converges to the global solution of a non-convex problem. We also demonstrate the practical advantages of iPCA through extensive simulations and a case study application to integrative genomics for Alzheimer's disease. In particular, we show that the joint patterns extracted via iPCA are highly predictive of a patient's cognition and Alzheimer's diagnosis.
\end{abstract}

\begin{keywords}
data integration, multi-view data, matrix-variate normal, dimension reduction, integrative genomics
\end{keywords}

\section{Introduction}

The recent growth in both data volume and variety drives the need for principled data integration methods that can analyze multiple sources of data simultaneously. For instance, meteorologists must integrate data from satellites, ground-based sensors, and numerical models for forecasting \citep{ghil1991meteorology}. Audio and video are often combined for surveillance as well as speech recognition \citep{shivappa2010audiovisual}; and as new high-throughput technologies arise in biology, scientists are leveraging information from multiple genomic sources to better understand complex biological processes \citep{huang2017more}. By exploiting the commonalities and diversity of information from different data sets, data integration methods have the potential to provide a holistic and perhaps more realistic model of the phenomena at hand. 

In this work, we focus on facilitating unsupervised learning tasks such as pattern recognition, dimension reduction, and visualization for integrated data in various applications. More specifically, we consider the multi-view data setting, where we observe multiple data sources (or matrices) with features of different types that are measured on the same set of samples. Our goal in this setting is to develop a practical statistical data integration method that 1) leverages multiple data sources to discover and visualize dominant joint patterns among the samples that are common across all data sets; 2) generalizes the classical principal components analysis (PCA), thereby inheriting its nice properties including easily-interpretable visualizations, a unique solution, and nested, orthogonal components that can be quickly obtained all at once; and 3) has provable statistical and optimization guarantees to begin bridging the gap between the theory and practice of data integration.

To this end, we propose Integrated Principal Components Analysis (iPCA), which extends a model-based framework of the classical Principal Components Analysis (PCA) to integrated data. As such, iPCA inherits the many advantages and nice familiar interpretations of PCA. The key idea here behind iPCA is to leverage a new but natural connection between data integration and the matrix-variate normal model. This enables us to flexibly model a rich set of dependencies among features and samples simultaneously, as is often crucial in integrated data. Building upon this model, we develop novel penalized covariance estimators for iPCA including a new geodesically convex penalty with both theoretical and practical advantages. While the main contributions of this work are methodological and applied in nature, we also begin to study the theoretical properties of our approach. In particular, we show that our non-sparse iPCA estimator converges to the global solution of a non-convex problem using geodesic convexity, and in the Appendix, we show that our sparse iPCA estimator consistently estimates the underlying joint subspace. Finally, through simulations and a careful case study application to integrative genomics for Alzheimer's disease, we demonstrate the superior empirical performance of iPCA for integrative exploratory data analysis, joint pattern recognition, and visualization. Specifically, in our case study, iPCA is able to identify joint patterns that are biologically-meaningful and can separate patients by their cognitive capabilities and Alzheimer's diagnosis. This interesting finding and application to Alzheimer's disease not only demonstrates the practical utility of iPCA but also points to promising hypotheses for follow-up experiments in Alzheimer's research.

%On the methodological front, iPCA is the first work to connect the matrix-variate normal model with data integration, and we build on this natural connection to avoid several shortcomings faced by existing data integration methods, discussed in the next section.

\subsection{Related Work}
Currently, existing data integration methods for unsupervised learning tend to fall into one of two categories - the matrix factorizations or the multiblock PCA family - but each have major practical limitations that we address with iPCA. 

First, matrix factorizations, including coupled matrix and tensor (CMTF) factorizations \citep{singh2008collective, acar2014coupled} and Joint and Individual Variation Explained (JIVE) \citep{lock2013joint}, each solve an optimization problem to factorize the integrated data sets into a low-rank joint variation matrix, encoding the shared patterns, and a low-rank individual variation matrix, encoding the patterns specific each data set. Matrix factorizations are well-liked due to their simplicity and computational feasibility, but a practical challenge with these methods is that they do not have a unique solution and heavily depend on the ranks of the factorized matrices. That is, the top factors from CMTF and JIVE are non-nested and can change drastically depending on the user-specified rank. Since the ``optimal'' rank is almost always unknown and must be specified a priori, this poses significant interpretability challenges as the practitioner could end up with many different, but equally valid, solutions corresponding to different choices of ranks and random seeds. On a related front, the generalized SVD (GSVD) \citep{alter2003generalized, ponnapalli2011higher} provides an exact matrix decomposition for integrated data that does not depend on the matrix rank. Nevertheless, it is limited in scope since the GSVD assumes each matrix has full row rank, excluding problems with both high- and low-dimensional data sets.

On the other hand, the multiblock PCA family, which includes methods such as Multiple Factor Analysis (MFA) \citep{escofier1994mfa, abdi2013mfa} and Consensus PCA \citep{westerhuis1998multiblock}, works to generalize PCA to the integrated data regime and does not suffer the rank-dependence or interpretability issues as with matrix factorizations. In these multiblock PCA methods, each data matrix is normalized according to a specific procedure, and then PCA is performed on the normalized concatenated data. However, these normalization schemes are often ad-hoc, and it is unclear which scheme works best and for which situation. Closely related to this is Distributed PCA \citep{fan2017pca}, which integrates data that are stored across multiple servers and implicitly assumes that the data are $i.i.d.$ across the different servers. This assumption differs from our target setting, where we allow for heterogeneity among the different sources (or servers in the distributed context).

%Beyond the factorization methods, there has also been work on extending principal components analysis (PCA) to integrated data problems. This family of methods is known as the multiblock PCA family and includes Multiple Factor Analysis (MFA) \citep{escofier1994mfa, abdi2013mfa} and Consensus PCA \citep{westerhuis1998multiblock}. In these methods, each data matrix is normalized according to a specific procedure, and then PCA is performed on the normalized concatenated data. It is unclear though which normalization or scaling works best and for which situation. Closely related to this is Distributed PCA \citep{fan2017pca}, which integrates data that are stored across multiple servers and implicitly assumes that the data are i.i.d. across the different servers. This assumption differs from our target setting, where we allow for heterogeneity among the different sources (or servers in the distributed context).

To date, an unsupervised data integration method, which both generalizes PCA and automatically determines the best way to normalize for the different scales and signal strengths between sources, does not exist. Motivated by these limitations, we develop iPCA as a proper generalization of PCA, thus inheriting its many properties and advantages (e.g., a unique solution, easily-interpretable visualizations, and nested, orthogonal principal components) unlike the matrix factorizations; and unlike the multiblock PCA methods, iPCA automatically adjusts for the different scales and signals between data sources without the need for a specified normalization scheme. The two main building blocks of iPCA are PCA and the matrix-variate normal distribution, which we review next.

%To date, an unsupervised data integration method, which both generalizes PCA and automatically determines the best way to normalize for the different scales and signal strengths between sources, does not exist. In addition, the existing methods including both matrix factorizations and multiblock PCA methods have largely been algorithmic constructions. These methods lack a rigorous underlying statistical model, and as a result, the statistical properties and theoretical foundations of data integration methods are generally unknown. 

%We address these issues and begin to bridge this gap between data integration methodology and statistical theory by developing a new data integration method, Integrated Principal Components Analysis (iPCA). iPCA extends a model-based framework of PCA to the integrated data setting and serves as a practical tool for integrative exploratory data analysis, joint pattern recognition, and visualization. iPCA also inherits several advantages of PCA (e.g., easily-interpretable visualizations and nested, orthogonal principal components) unlike the matrix factorizations; and unlike the multiblock PCA methods, iPCA automatically adjusts for the different scales and signals between data sources. The two main building blocks of iPCA are PCA and the matrix-variate normal distribution, which we review next.

\iffalse
\subsection{Notation} \label{sec:notation}

\begin{itemize}
\item $\mathcal{S}^p_{++}$
\item $\otimes$
\item $vec(\cdot)$
\item $\X_{i, \cdot}$
\item $\X_{\cdot, j}$
\item $\X_{ij}$
\item norms
\end{itemize}
\fi

\subsection{Principal Components Analysis} \label{sec:pca}
Given a column-centered data set $\X \in \mathbb{R}^{n \times p}$ with $n$ samples and $p$ features, recall that PCA finds orthogonal directions $\vv_1, \dots, \vv_m \in \mathbb{R}^{p}$, which maximize the covariance $\Delt \in \mathcal{S}^p_{++}$, where $\mathcal{S}^p_{++}$ is the set of $p \times p$ positive definite matrices. That is, for each $j = 1, \dots, m$,
\begin{align}
\vv_j = \argmax_{\vv \in \mathbb{R}^{p}} \vv^T \Delt \vv \quad \text{subject to } \vv^T \vv = 1, ~ \vv^T \vv_i = 0 ~ \forall \: i < j.  \label{pca_opt}
\end{align}
It is well-known that the PC loading $\vv_j$ is the eigenvector of $\Delt$ with the $j^{th}$ largest eigenvalue, and its corresponding PC score is $\uu_j := \X \vv_j$. In practice, since the population covariance $\Delt$ is typically unknown, the sample version of PCA plugs in an estimate $\hat{\Delt} := \frac{1}{n} \X^T \X$ for $\Delt$ in \eqref{pca_opt}. It follows that the PC loadings are the eigenvectors of $\hat{\Delt}$, and the PC scores are the scaled eigenvectors of $\hat{\Sig} := \frac{1}{p} \X \X^T$. To later establish the link between iPCA and PCA, we point out that $\hat{\Delt}$ is the maximum likelihood estimator (MLE) of $\Delt$ under the multivariate normal model $\x_1, \dots, \x_n \smash[t]{\stackrel{iid}{\sim}} N(\mathbf{0}, \Delt)$ and $\hat{\Sig}$ is the MLE of $\Sig$ under $\x'_1, \dots, \x'_p \smash[t]{\stackrel{iid}{\sim}} N(\mathbf{0}, \Sig)$, where $\x_i$ is the $i^{th}$ row of $\X$, and $\x'_j$ is the $j^{th}$ column of $\X$. Thus, there is a dual row/column interpretation of the PCA model. While this is not the only way of viewing PCA, this formulation illustrates two points which we explore further in iPCA: (1) PCA finds linear projections of the data that maximize the variance under a multivariate normal model; (2) eigenvectors correspond to the dominant (or variance-maximizing) patterns in the data.

%%%%%%% Matrix-variate normal %%%%%%%

\subsection{Matrix-variate Normal Model} \label{sec:mvn_model}
Laid out in \citet{gupta1999matrix} and \citet{dawid1981some}, the matrix-variate normal distribution is an extension of the multivariate normal distribution such that the matrix is the unit of study. Formally, we say $\X \in \mathbb{R}^{n \times p}$ follows a matrix-variate normal distribution and write $\X \sim N_{n,p}(\mathbf{M}, \Sig \otimes \Delt)$ if $\mathrm{vec}(\X^T)$ follows a multivariate normal distribution with a Kronecker product covariance structure, $\mathrm{vec}(\X^T) \sim N(\mathrm{vec}(\mathbf{M}^T), \Sig \otimes \Delt)$. Here, $\mathrm{vec}(\X) \in \mathbb{R}^{np}$ is the column vector formed by stacking the columns of $\X$ below one another. We call $\mathbf{M} \in \mathbb{R}^{n \times p}$ the mean matrix, $\Sig \in \mathcal{S}^n_{++}$ the row covariance matrix, and $\Delt \in \mathcal{S}^p_{++}$ the column covariance matrix. 

Put differently, the row covariance $\Sig$ encodes the dependencies between rows of $\X$ while the column covariance $\Delt$ encodes the dependencies among columns, i.e., $\smash[tb]{\X_{i,\cdot} \sim N(\mathbf{M}_{i,\cdot}, \Sig_{ii}\Delt)}$ and $\smash[tb]{\X_{\cdot, j} \sim N(\mathbf{M}_{\cdot, j}, \Delt_{jj} \Sig)}$. It can also be shown that if $\Sig = \I$ and $\mathbf{M} = \mathbf{0}$, we are in the familiar multivariate normal setting, $\smash[tb]{\x_1, \dots, \x_n \stackrel{iid}{\sim} N(\mathbf{0}, \Delt)}$, and if $\Delt = \I$ and $\mathbf{M} = \mathbf{0}$, then $\smash[tb]{\x'_1, \dots, \x_p' \stackrel{iid}{\sim} N(\mathbf{0}, \Sig)}$. The matrix-variate normal model, however, is far more general than the multivariate normal. While the multivariate normal can only model relationships between elements of a single row or a single column in $\X$, the matrix-variate normal can model relationships between elements from different rows and columns. With this level of flexibility, the matrix-variate normal has proven to be a versatile tool in various contexts such as graphical models \citep{yin2012model, tsiligkaridis2013convergence, zhou2014gemini}, spatio-temporal models \citep{greenewald2015robust}, and transposable models \citep{allen2010transposable}. Our work on iPCA is the first to consider the matrix-variate normal model in light of data integration.

\subsection{Outline}
Building upon the matrix-variate normal model, we introduce iPCA as a proper generalization of PCA to the integrated data regime in Section~\ref{sec:ipca}. In Section~\ref{sec:cov_est}, we discuss covariance estimation methods for iPCA and begin to study some of their theoretical properties. We then demonstrate the strong empirical performance of iPCA in Section~\ref{sec:empirical_results} through simulations and a real data application to integrative genomics for Alzheimer's disease. Finally, we conclude with a discussion of iPCA in Section~\ref{sec:discussion}.

% In addition to these theoretical guarantees, iPCA can be interpreted in the same way as the classical PCA and gives ordered, nested, orthogonal PCs unlike the matrix factorizations. iPCA also automatically adjusts for differences in scale and signals across different data sources. 

%%%%%%%%%%%%%%%%%%%%%%%%%%%%%%%%%%%%%%%%%%%%%%%%

\section{Integrated PCA} \label{sec:ipca}

At its core, iPCA, like PCA, is an unsupervised tool for exploratory data analysis, pattern recognition, and visualization. Unlike PCA however, iPCA aims to extract dominant joint patterns which are \textit{common} to multiple data sets, not necessarily the variance-maximizing patterns since they might be specific to one data set. These joint patterns are typically of considerable interest to practitioners as its common occurrence in multiple data sets may point to some foundational mechanism or structure. For instance, scientists may be more interested in uncovering the patterns (or clusters) of patients who have similar gene expression levels and miRNA expression levels than those patients with similar gene expression levels alone. 

\begin{figure}[!t]
\centering
\includegraphics[width =  1\linewidth]{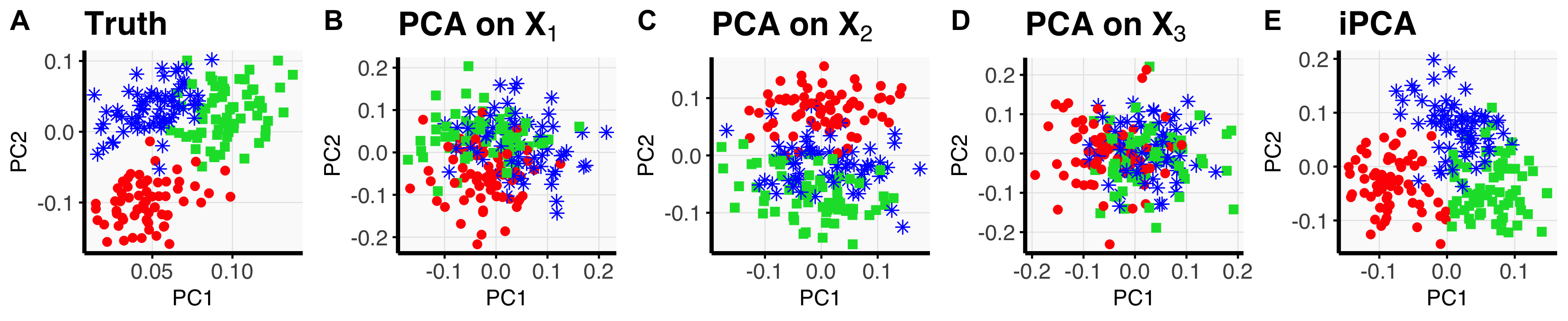}
\caption{Coupled matrices $\X_1, \X_2, \X_3$ with $n = 200$, $p_1 = 300$, $p_2 = 500$, $p_3 = 400$ were simulated from the iPCA model \eqref{pop_model}. Here, $\Sig$ and $\Delt_1, \Delt_2, \Delt_3$ were taken to be as in the base simulation described in Section~\ref{sec:sims}. (A) plots the top two eigenvectors of $\Sig$. In separate PCA analyses (B-D), the individual signal in each data set masks the true joint signal, but (E) iPCA (using the multiplicative Frobenius estimator) exploits the integrated data structure and recovers the true joint signal.}
\label{fig:illustrative}
\end{figure}

Figure~\ref{fig:illustrative} illustrates a motivating example for when iPCA is advantageous. In the example, strong dependencies among features obscure the true joint patterns among the samples so that the true joint signal is not the variance-maximizing direction. As a result, applying PCA separately to each of the data sets (panels B-D) fails to reveal the joint signal. iPCA can better recover the joint signal because it exploits the known integrated data structure and extracts the shared information among all three data sets simultaneously. 

Generally speaking, iPCA finds these joint patterns by modeling the dependencies between and within data sets via the matrix-variate normal model. The inherent Kronecker product covariance structure enables us to decompose the total covariance of each data matrix into two components---an individual column covariance structure which is unique to each data set and a joint row covariance structure which is shared among all data sets. The joint row covariance structure is our primary interest, and maximizing this joint variation will yield the dominant patterns which are common across all data sets. In the following sections, we will introduce iPCA and provide interpretations and intuition into the model.

\subsection{Population Model of iPCA} \label{sec:ipca_pop_model}

Suppose we observe $K$ coupled data matrices, $\X_{1}, \ldots, \X_{K}$, of dimensions $n \times p_{1}, \ldots, n \times p_{K}$, where $n$ is the number of samples and $p_k$ is the number of features in $\X_k$. Throughout this paper, we let $p := \sum_{k=1}^{K} p_k$ and $\tilde{\X} := [\X_1, \dots, \X_K]$. Since iPCA is primarily interested in the data variation, let us assume that each data matrix $\X_k$ has been mean centered so that each column of $\X_k$ has a mean of 0. Suppose also that each of the data matrices are measured on the same $n$ samples and that all rows of $\X_{k}$ are perfectly aligned (see Figure~\ref{fig:setting}). Under the iPCA model, we assume that each data set $\X_{k}$ arises from a matrix-variate normal distribution,
\begin{align} \label{pop_model}
\X_k \stackrel{ind.}{\sim} N_{n, p_k} (\mathbf{0}, \: \Sig \otimes \Delt_k), \qquad (k = 1, \dots, K)
\end{align}
where $\Sig$ is an $n \times n$ row covariance matrix that is jointly shared by all data matrices, and $\Delt_k$ is a $p_{k} \times p_{k}$ column covariance matrix that is specific to $\X_{k}$. We next provide additional intuition into the model parameters $\Sig, \Delt_1, \dots, \Delt_K$ and unpack the iPCA modeling assumptions.

\paragraph{Feature Dependencies $\Delt_k$}
By properties of the matrix-variate normal, we can interpret $\Delt_k$ as describing the dependence structure among features in $\X_k$, giving rise to feature patterns unique to $\X_k$. As a consequence of the iPCA model in \eqref{pop_model}, this feature (or column) dependence is revealed as
\begin{align}
[\X_k]_{i, \cdot} \sim N(\mathbf{0}, \: \Sig_{ii}\Delt_k), \qquad(k = 1, \dots, K, \: i = 1, \dots, n).
\end{align}
In other words, each sample or row in $\X_k$ has its own variance scaling factor given by $\Sig_{ii}$ while all dependencies among the features are captured by $\Delt_k$.

\paragraph{Sample Dependencies $\Sig$}
Analogously, we can interpret $\Sig$ as describing the common row dependence structure, corresponding to patterns among the samples that are shared by all $K$ data sets. According to the iPCA model in \eqref{pop_model}, the sample (or row) dependence  manifests itself as
\begin{align}
[\X_k]_{\cdot, j} \sim N(\mathbf{0}, \: [\Delt_k]_{jj}\Sig), \qquad(k = 1, \dots, K, \: j = 1, \dots, p_k).
\end{align}
Here, each feature or column in $\X_k$ gets its own variance scaling factor $[\Delt_k]_{jj}$ while all dependencies among the samples are given by $\Sig$.

\begin{figure}[!tb]
\centering
\includegraphics[width =  .6\linewidth]{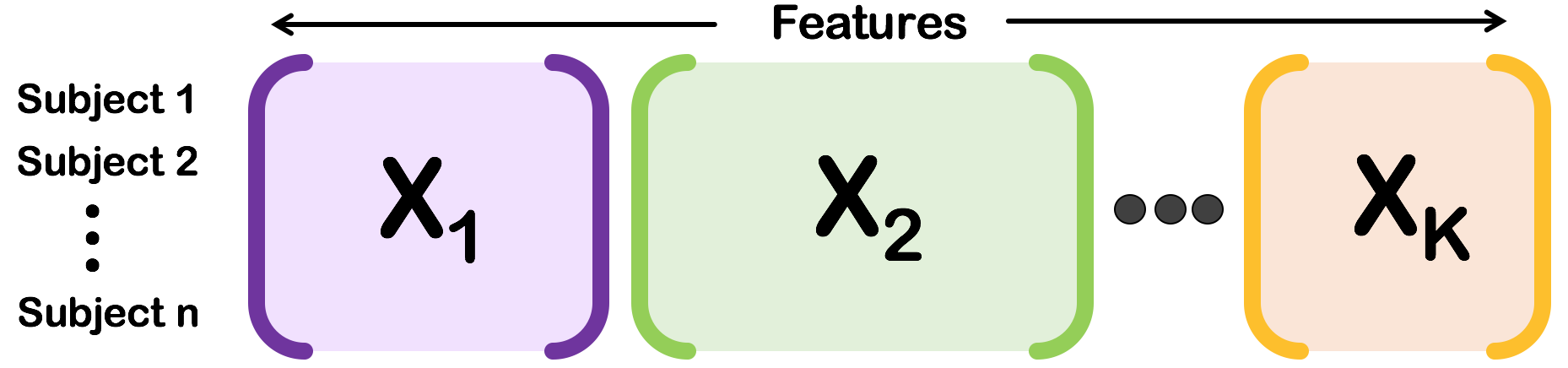}
\caption{Integrated Data Setting for iPCA: We observe $K$ different but coupled data matrices, each with a distinct set of features that are measured on the same set of $n$ samples. Assume that the rows align.}
\label{fig:setting}
\end{figure}

\paragraph{iPCA Through a Whitening Lens}
In addition to these marginal interpretations of $\Sig$ and $\Delt_k$, it is also important to gain intuition into how $\Sig$ and $\Delt_k$ interact together within the iPCA model. The simplest way to do so is through a whitening perspective, where we note that the iPCA model in \eqref{pop_model} can be equivalently rewritten as follows: for each $k = 1, \dots, K$,
\begin{align}
\Sig^{-1/2} \X_k \Delt_k^{-1/2} = \Z_k, \qquad \text{where } [\Z_k]_{ij} \stackrel{iid}{\sim} N(0, 1). \label{eq:ipca_model_equiv_z}
\end{align}
This is to say that after whitening each data matrix $\X_k$ of both its feature dependencies in $\Delt_k$ and the shared row dependencies in $\Sig$, the remainder is distributed $i.i.d.$ from a standard normal distribution. Thus, all dependencies in $\X_k$ must be captured by $\Sig$ and $\Delt_k$. This assumption may seem strong at first glance, but it is a simple generalization of the usual assumptions in PCA.

To see this, notice another equivalence relation to the iPCA model \eqref{pop_model} based on whitening: for each $k = 1, \dots, K$,
\begin{align}
\Y_k := \X_k \Delt_k^{-1/2} \stackrel{ind.}{\sim} N_{n, p_k} (\mathbf{0}, \: \Sig \otimes \mathbf{I}_{p_k}), \label{eq:ipca_model_equiv}
\end{align}
or equivalently, in vector-form,
\begin{align}
[\X_k \Delt_k^{-1/2}]_{\cdot, j} \stackrel{iid}{\sim} N(\mathbf{0}, \: \Sig) \quad \text{for each } j = 1, \dots, p_k.
\end{align}
That is, in the idealistic scenario when population covariances are known, the iPCA model simply states that each data set $\X_k$, after being whitened of its feature dependencies $\Delt_k$, follows a normal distribution with the common row covariance $\Sig$. This is not a new assumption. In fact, it is the primary modeling assumption if we were to apply classical PCA to the concatenated centered and whitened data $[\Y_{1}, \dots, \Y_{K}]$. Further, like PCA, while iPCA assumes and works best with normally-distributed data, it is still practically effective in the non-normal regime. We provide empirical evidence of this robustness in Section~\ref{sec:empirical_results}.

It is important to note, however, that when $\Delt_k \neq \mathbf{I}$ such as in most cases with real data, the underlying iPCA model greatly differs from that of applying PCA to the concatenated un-whitened data (henceforth referred to as concatenated PCA). While the normal assumption in iPCA is often insignificant in practice, it is crucial to account for $\Delt_k$ in the integrated data model. Intuitively, the feature dependencies $\Delt_k$ in iPCA can be viewed as nuisance parameters that obscure the sample dependencies $\Sig$ and vice versa, so ignoring the feature dependencies completely as in concatenated PCA can be extremely problematic. The challenge here though is that $\Delt_k$ and $\Sig$ are both unknown in practice and must be estimated. In the setting where we only observe one data set $\X_1$, it is not possible to distinguish or estimate both $\Delt_1$ and $\Sig$, but with multiple observed data sets $\X_1, \dots, \X_K$, we can and should leverage the additional data to distinguish between the feature and sample dependencies. As we will see later, this is an upshot of iPCA---namely, that iPCA models and estimates both $\Delt_k$ and $\Sig$ concurrently, exploiting the integrated structure while also accounting for the opposition's nuisance effects.

Now under these modeling assumptions from the iPCA model in \eqref{pop_model}, which we have seen to be generalizations of the classical PCA assumptions, iPCA extends the variance-maximizing ideas of PCA and achieves its objective of finding the dominant joint and individual patterns in the data by maximizing the joint row covariance $\Sig$ and individual column covariances $\Delt_1, \dots, \Delt_K$ simultaneously. Namely, for $k =1, \dots, K$, iPCA solves
\begin{align}
\uu_i &= \argmax_{\uu \in \mathbb{R}^n} \uu^T \Sig \uu \qquad \text{ subject to } \uu^T \uu = 1, ~ \uu^T \uu_l = 0 ~ \forall \: l < i, \quad (i = 1, \dots, n) \label{max_joint_ipca} \\
\vv^k_j &= \argmax_{\vv \in \mathbb{R}^{p_k}} \vv^T \Delt_k \vv \qquad \text{subject to } \vv^T \vv = 1, ~ \vv^T \vv^k_l = 0 ~ \forall \: l < j, \quad (j = 1, \dots, p_k) \label{max_individual_ipca}
\end{align}
for which we know the solution to be given by the eigendecompositions of $\Sig$ and $\Delt_k$, respectively. That is, $\uu_i$ is the eigenvector of $\Sig$ with the $i^{th}$ largest eigenvalue, and $\vv_j^k$ is the eigenvector of $\Delt_k$ with the $j^{th}$ largest eigenvalue. Most notably, $\uu_1$ maximizes the joint variation and is interpreted as the most dominant pattern among the samples, which occurs in all $K$ data sets. We call the columns of $\U := [\uu_1, \dots, \uu_n]$ the integrated principal component (iPC) scores and the columns of $\V_k := [\vv^k_1, \dots, \vv^k_{p_k}]$ the iPC loadings for the $k^{th}$ data set. Note that though the population covariances in \eqref{pop_model} are not identifiable (e.g., $\Sig \otimes \Delt_k = c\Sig \otimes \frac{1}{c} \Delt_k$ for $c \in \mathbb{R}$), the iPC scores and loadings are identifiable since eigenvectors are scale-invariant. 

Since we are often most interested in the joint patterns, we primarily plot the iPC scores $\U$ to visualize the joint patterns in sample space. To visualize the individual feature patterns from the $k^{th}$ data set, we can also plot the iPC loadings $\V_k$.

\iffalse
\begin{remark}
In addition to providing intuition into the separability assumption, the whitening perspective also sheds light on an alternative interpretation of $\Sig$. Previously, we have referred to $\Sig$ as encoding the common row patterns in the data. To make this notion concrete, we consider the whitened interpretation of the iPCA model, $[\X_k \Delt_k^{-1/2}]_{\cdot, j} \sim N(\mathbf{0}, \Sig)$ for each $j = 1, \dots, p_k$ and $k = 1, \dots, K$. In other words, after whitening and removing the individual feature effects, $\Sig$ captures the remaining variation between the samples that is shared by all $K$ data sets. Thus, when maximizing the joint variation $\Sig$, iPCA finds the dominant sample patterns while also accounting for the individual feature variation through the whitening matrix $\Delt_k$. %This is analogous to performing PCA on the concatenated whitened data, having estimated the feature-whitening matrices $\Delt_k$ and the sample covariance $\Sig$ simultaneously.
\end{remark}
\fi

\subsection{Sample Version of iPCA} \label{sec:ipca_sample}

To perform iPCA in practice, we must typically plug in estimators $\hat{\Sig}$ and $\hat{\Delt}_k$ for $\Sig$ and $\Delt_k$ since the population covariances in \eqref{pop_model} are almost always unknown. We summarize the sample version of iPCA as follows:
\begin{enumerate}
\item \textbf{Model} each (column-centered) data set via a matrix-variate normal model: $\X_{k} \sim N_{n,p_{k}}( \mathbf{0} , \: \Sig \otimes \Delt_k)$, $\: k = 1, \dots, K$.
\item \textbf{Estimate} the covariance matrices $\Sig$ and $\Delt_1, \dots, \Delt_K$ simultaneously to obtain $\hat{\Sig}$ and $\hat{\Delt}_1, \dots, \hat{\Delt}_K$. Methods for covariance estimation will be discussed in Section~\ref{sec:cov_est}.
\item Compute the \textbf{eigenvectors}, say $\hat{\U} = \text{eigenvectors of } \hat{\Sig}$ and $\hat{\V}_k = \text{eigenvectors of } \hat{\Delt}_k$. We interpret $\hat{\U}$ as the dominant joint patterns in sample space and $\hat{\V}_k$ as the dominant patterns in feature space which are specific to $\X_k$.
\item \textbf{Visualize} and \textbf{explore} the dominant joint patterns by plotting the iPC scores $\hat{\U}$ and the dominant individual patterns by plotting the iPC loadings $\hat{\V}_k$.
\end{enumerate}

\subsubsection{Variance Explained by iPCA} \label{sec:var_explained}

After performing iPCA, we can also interpret the signal in the obtained iPCs through a notion of variance explained, analogous to that in PCA.

\begin{definition} \label{def:ipca_var_explained}
Assume that $\X_k$ has been column centered. We define the cumulative proportion of variance explained in data set $\X_k$ by the top $m$ iPCs to be
\begin{align} \label{ipca_var_explained}
\text{PVE}_{k,m} := \frac{\norm{ ( \smash[t]{\U^{(m)}} )^T \X_k \smash{\V_k^{(m)}}}_F^2}{\norm{\X_k}_F^2},
\end{align}
where $\U^{(m)} = [\uu_1, \dots, \uu_m]$ are the top $m$ iPC scores, and $\V_k^{(m)} = [\vv^k_1, \dots, \vv^k_m]$ are the top $m$ iPC loadings associated with $\X_k$.
\end{definition}

\begin{definition} \label{def:mar_ipca_var_explained}
The marginal proportion of variance explained in data set $\X_k$ by the $m^{th}$ iPC is defined as $\text{MPVE}_{k,m} := \text{PVE}_{k,m} - \text{PVE}_{k,m-1}$.
\end{definition}

We verify in Appendix~\ref{sec:s_var_explained} that $\text{PVE}_{k,m}$ is a proportion and monotonically increasing as $m$ increases. Aside from being well-defined, we also show in Appendix~\ref{sec:s_var_explained} that \eqref{ipca_var_explained} generalizes the cumulative proportion of variance explained in PCA and hence, is a natural definition. Note however that unlike PCA, it may be that $\text{MPVE}_{k,m+1} > \text{MPVE}_{k,m}$ in iPCA. This is because iPCA does not maximize the \textit{total} variance. For example, if $\text{MPVE}_{1,2} > \text{MPVE}_{1,1}$, this simply means that data set $\X_1$ contributed more variation to the joint pattern in iPC2 than in iPC1.

%%%%%%%%%%%%%%%%%%%%%%%%%%%%%%%%%%%%%%%%%%%%%%%%

\subsection{Connections to Existing Methods} \label{sec:ipca_relation}

Throughout our development of iPCA, we have established several connections between iPCA and PCA, which demonstrate that iPCA is indeed a natural extension of PCA. We also find it instructive to draw on connections between iPCA and other existing data integration methods to develop an even deeper understanding of iPCA.

\subsubsection{Relationship to Multiblock PCA Family}
As discussed in \citet{abdi2013mfa}, multiblock PCA methods reduce to performing PCA on the normalized concatenated data $\tilde{\X}' = [\X_1', \dots, \X_K']$, where each $\X_k$ has been normalized to $\X_k'$ according to some procedure. We will show later in Proposition~\ref{mle_naive} that performing PCA on the unnormalized concatenated data (i.e., concatenated PCA) is a special case of iPCA, where we assume $\Delt_k = \I$ for each $k$. Proposition~\ref{mle_naive} can also be easily extended to show that multiblock PCA methods are a special case of iPCA for some fixed $\Delt_1, \dots, \Delt_K$, and the exact form of $\Delt_k$ depends on the normalization procedure. For example, since MFA normalizes $\X_k$ by dividing all of its entries by its largest singular value $\sigma_{\max, k}$, MFA is a special case of iPCA, where each $\Delt_k = \sigma_{\max, k} \I$. Put differently, MFA assumes $[\X_k \sigma_{\max, k}^{-1/2}]_{\cdot, j} \stackrel{iid}{\sim} N(\mathbf{0}, \Sig)$ whereas iPCA assumes $[\X_k \Delt_k^{-1/2}]_{\cdot, j} \stackrel{iid}{\sim} N(\mathbf{0}, \Sig)$. 

This gives rise to another interpretation of iPCA: iPCA is a generalization and unifying framework for the entire multiblock PCA family. However, while the multiblock PCA methods assume that $\Delt_k$ takes a specific form, iPCA does not impose any restrictions on the form of $\Delt_k$ and instead freely estimates the full $\Delt_k$ matrix simultaneously with $\Sig$. In doing so, iPCA acts as an automatic way of normalizing for the different scales and signals between data sources. Without appropriate normalization, the estimated principal components from multi-block PCA methods such as concatenated PCA will be biased as quantified by the following proposition.

\begin{restatable}{proposition}{bias} \label{bias_prop}
Suppose that $\X_k \sim N_{n, p_k}(\mathbf{0}, \Sig \otimes \Delt_k)$ for each $k = 1, \dots, K$, and define $\X = [\X_1, \dots, \X_K]$, $\Delt = \text{diag}(\Delt_1, \dots, \Delt_K)$, and $\Z = [\Z_1, \dots, \Z_K]$, where $\Z_k$ is as defined in \eqref{eq:ipca_model_equiv_z} previously. If $\X \Delt^{-1/2} = \mathbf{U} \mathbf{D} \mathbf{V}^T$ and $\X = \tilde{\mathbf{U}} \tilde{\mathbf{D}} \tilde{\mathbf{V}}^T$ are their respective compact SVDs, then $\mathbf{U}$ and $\tilde{\mathbf{U}}$ denote the iPCA and concatenated PCA scores, respectively, and
\begin{align}
\mathbf{U} - \tilde{\mathbf{U}} = \Sig^{1/2} \Z (\mathbf{V} \mathbf{D}^{-1} - \Delt^{1/2} \tilde{\mathbf{V}} \tilde{\mathbf{D}}^{-1}).
\end{align} 
Moreover, the iPCA and concatenated PCA scores are equal if and only if $\mathbf{V} \mathbf{D}^{-1} - \Delt^{1/2} \tilde{\mathbf{V}} \tilde{\mathbf{D}}^{-1}$ is in the nullspace of $\boldsymbol{\Z}$ or $\Delt^{1/2} \tilde{\mathbf{V}} = \mathbf{V} \mathbf{D}^{-1} \tilde{\mathbf{D}}$.
\end{restatable}

Proposition~\ref{bias_prop} implies that if the feature dependencies, captured by $\Delt_k$, rotate the data $\X_k$ in a way such that it does not obscure the joint row patterns, then concatenated PCA will work adequately. However, for most situations and examples of $\Delt_k$ that we expect to find in real data, this is unlikely to occur, and the resulting eigenvectors of concatenated PCA will be severely biased. To avoid this bias, both the sample and feature dependencies must be accounted for in the estimation procedure as done in iPCA.

\subsubsection{Relationship to Matrix Factorizations}

Coupled matrix factorizations (CMF) decompose each data set $\X_k \in \real^{n \times p_k}$ into the product of low-rank joint factor $\U \in \real^{n \times m}$ and a low-rank individual factor $\V_k \in \real^{m \times p_k}$ so that $\X_k \approx \U \V_k$ \citep{singh2008collective, acar2014coupled}. This approximate factorization is related to iPCA in that our matrix-variate normal model assumes a similar multiplicative structure. Specifically, from \eqref{eq:ipca_model_equiv_z}, the iPCA model can be viewed as $\X_k = \Sig^{1/2} \Z_k \Delt_k^{1/2}$, where $\Z_k$ is standard normal random matrix. Moreover, an argument similar to Theorem 2 from \citet{hastie2015matrix} shows that one solution of the CMF optimization problem (with $\ell_2$ penalties) is the solution of concatenated PCA and thereby a special case of iPCA. 

Despite this connection however, there is a fundamental difference between CMF and iPCA. On the one hand, CMF assumes $\X_k$ can be approximated by a low-rank matrix, and the estimation of the CMF factors actively depends on the pre-specified rank $m$. On the other hand, the rank of $\X_k$ plays absolutely no role in the iPCA assumptions or the estimation step of iPCA. Consequently, the joint and individual CMF factors can change drastically depending on the pre-specified rank unlike in iPCA. CMF also does not have a unique solution nor enforces orthogonal components whereas iPCA gives unique, nested, orthogonal components that can be interpreted in the same way as in PCA.

In contrast to the multiplicative models of CMF and iPCA, JIVE, which has been commonly used in integrative genomics, assumes an additive model and decomposes coupled data into the sum of a low-rank joint variation matrix, a low-rank individual variation matrix, and an error matrix \citep{lock2013joint}. Additive and multiplicative models, being quite different models, are each advantageous in different situations, but as with CMF, the estimation of JIVE depends on the pre-specified ranks of its factors and results in non-nested, rank-dependent joint and individual components.

%JIVE, like CMF, also suffers from this dependence on the rank as an approximate matrix factorization method. Unlike CMF and iPCA, JIVE is an additive model and decomposes coupled data into the sum of a low-rank joint variation matrix, a low-rank individual variation matrix, and an error matrix. While CMF and iPCA are closely related, JIVE and iPCA are quite different models and are each advantageous in different situations.

\section{Covariance Estimators for iPCA} \label{sec:cov_est}

We next return to address the covariance estimation step when fitting the iPCA model to data. In Section~\ref{sec:unpenalized}, we consider the traditional maximum likelihood approach but find that it suffers from substantial limitations in the integrated data setting. These limitations ultimately drive the need for new estimators, which we develop in Section~\ref{sec:penalized}.

%Having outlined the iPCA framework, we return to address the covariance estimation step when fitting the iPCA model to data. While pursuing the traditional maximum likelihood approach, as one does in PCA, may seem natural here, we will see in Section~\ref{sec:unpenalized} that there are substantial challenges with maximum likelihood estimation in this setting. This motivates the new estimation techniques developed in Section~\ref{sec:penalized}.

%In order to estimate these iPCA covariance parameters $\Sig$ and $\Delt_1, \dots, \Delt_K$, we would like to take a maximum likelihood approach as one does in PCA. However, there are substantial challenges with this approach as we will see in Section~\ref{sec:unpenalized}. This motivates the new estimation techniques developed in Section~\ref{sec:penalized}.

%In practice, fitting the iPCA model requires estimating the covariances in \eqref{pop_model}. Guided by the formulation of PCA in Section~\ref{sec:pca}, we would like to estimate these covariances via maximum likelihood estimation, but there are substantial challenges with this approach as we will see in Section~\ref{sec:unpenalized}. This motivates the new estimation techniques introduced in Section~\ref{sec:penalized}.

\subsection{Unpenalized Maximum Likelihood Estimators} \label{sec:unpenalized}

Guided by the formulation of PCA in Section~\ref{sec:pca}, we instinctively try to estimate the iPCA population covariances $\Sig$ and $\Delt_1, \dots, \Delt_K$ via maximum likelihood estimation. Under the iPCA population model \eqref{pop_model}, the log-likelihood function reduces to
\begin{align} \label{ff_loglike}
\ell(\Sig^{-1}, \Delt^{-1}_1, \dots, \Delt^{-1}_K) &\propto p \log |\Sig^{-1}| + n \sum_{k=1}^{K} \log |\Delt^{-1}_k| - \sum_{k = 1}^{K} \mathrm{tr} \left( \Sig^{-1} \X_{k}\Delt^{-1}_k \X_{k}^{T} \right),
\end{align}
so by taking partial derivatives of \eqref{ff_loglike} with respect to each parameter, we obtain
\begin{restatable}{lemma}{unpenalizedmle}\label{unpenalized_mle}
The unpenalized MLEs of $\Sig$ and $\Delt_1, \dots, \Delt_K$ satisfy
\begin{align}
\hat{\Sig} &= \frac{1}{p} \sum_{k=1}^{K} \X_{k} \smash[t]{\hat{\Delt}}^{-1}_k \X_{k}^{T} \label{unpenalized_sig}\\
\hat{\Delt}_k &= \frac{1}{n} \X_{k}^{T} \smash[t]{\hat{\Sig}}^{-1} \X_{k} \quad \forall \: k = 1, \dots, K \label{unpenalized_delt}.
\end{align}
\end{restatable}

%Notice that the unpenalized MLE of $\mmu_k$ is the vector of column means of $\X_k$, so $\X_k - \mathbf{1}_n \hat{\mmu}_k^T$ is simply $\X_k$ with its columns centered to mean $0$. We can proceed to compute the unpenalized MLEs of $\Sig$ and $\Delt_1, \dots, \Delt_K$ via a Flip-Flop algorithm (also known as block coordinate descent) analogous to \citet{dutilleul1999mle}, which iteratively optimizes over each of the parameters while keeping all other parameters fixed. 

However, with only one matrix observation per matrix-variate normal model in the iPCA context, existence of the MLE is not guaranteed. In fact, the following theorem essentially implies that the MLE does not exist for all practical purposes.

\begin{restatable}{theorem}{nonexistence} \label{nonexistence_thm}
\begin{enumerate}[label=(\roman*)]
\item Suppose that $\X_k$ has been column-centered so that each column has a mean of 0 for each $k = 1, \dots, K$. Then the unpenalized MLEs for $\Sig$ and $\Delt_1, \ldots, \Delt_K$ are not positive definite and hence do not exist. 
\item Suppose that $\X_k$ has not been column-centered but that $\mathrm{rank}(\tilde{\X}) = n$ and $\mathrm{rank}(\X_k) = p_k$ for $k = 1, \ldots , K$. 
\begin{enumerate}
\item If $n \neq p_k$ for some $k = 1, \dots, K$, then the unpenalized log-likelihood function $\ell(\Sig^{-1}, \Delt_1^{-1}, \dots, \Delt_K^{-1})$ is unbounded.
\item If the unpenalized log-likelihood function $\ell(\Sig^{-1}, \Delt_1^{-1}, \dots, \Delt_K^{-1})$ is bounded, then the unpenalized MLE for $\Sig$, $\Delt_1, \ldots, \Delt_K$ exist.
\end{enumerate}
\end{enumerate} 
\end{restatable}

\iffalse
\begin{restatable}{theorem}{nonexistence} \label{nonexistence_thm}
\begin{enumerate}[label=(\roman*)]
\item If the population means $\mmu_1, \dots, \mmu_K$ in \eqref{pop_model} are unknown, then the unpenalized MLEs for $\Sig$ and $\Delt_1, \ldots, \Delt_K$ are not positive definite and hence do not exist. 
\item Suppose the population means $\mmu_1, \dots, \mmu_K$ in \eqref{pop_model} are known, $\mathrm{rank}(\tilde{\X}) = n$, and $\mathrm{rank}(\X_k) = p_k$ for $k = 1, \ldots , K$. 
\begin{enumerate}
\item The unpenalized log-likelihood function $\ell(\Sig^{-1}, \Delt_1^{-1}, \dots, \Delt_K^{-1})$ is unbounded when $n \neq \sum_{k = 1}^K p_k$.
\item If the unpenalized log-likelihood function $\ell(\Sig^{-1}, \Delt_1^{-1}, \dots, \Delt_K^{-1})$ is bounded, then the unpenalized MLE for $\Sig$, $\Delt_1, \ldots, \Delt_K$ exist.
\end{enumerate}
\end{enumerate} 
\end{restatable}
\fi

The proof of Theorem~\ref{nonexistence_thm} also shows that if the unpenalized MLEs for $\Sig$ and $\Delt_{1}, \dots, \Delt_K$ exist, then $p_k = n \leq p$ for each $k = 1, \dots, K$. Thus in summary, if $p_k \neq n$ for some $k$, $n > p$, or the data matrices have been column-centered to have mean 0, then the unpenalized MLEs do not exist. These severe restrictions motivate new covariance estimators.

For example, one alternative but naive approach is to estimate $\Sig$ and $\Delt_k$ by setting their counterparts to $\I$.

\begin{restatable}{proposition}{mlenaive} \label{mle_naive}
\begin{enumerate}[label=(\roman*)]
\item The MLE for $\Delt_k$, assuming that $\Sig = \I$,  is
\begin{align}
\hat{\Delt}_k = \frac{1}{n} \X_{k}^{T} \X_{k}. 
\end{align}
\item The MLE for $\Sig$, assuming $\Delt_k = \I$ for all $k = 1, \dots, K$, is 
\begin{align}
\hat{\Sig} = \frac{1}{p} \sum_{k=1}^{K} \X_{k} \X_{k}^{T} = \frac{1}{p} \tilde{\X} \tilde{\X}^{T}. \label{naive_sig}
\end{align}
\end{enumerate}
\end{restatable}
 
This approach for estimating $\Sig$ is the familiar MLE for the concatenated data $\tilde{\X}$, and hence, performing concatenated PCA is equivalent to a special case of iPCA, where we set $\Delt_k = \mathbf{I}$ for each $k$. While this formalizes the connection between PCA and iPCA, we will see in Section~\ref{sec:empirical_results} that concatenated PCA performs poorly when the data sets are of different scales or when the feature dependencies are stronger than the sample dependencies. In the next section, we discuss more effective methods for estimating the iPCA covariances.

\subsection{Penalized Maximum Likelihood Estimators} \label{sec:penalized}

Given that the unpenalized MLEs do not exist for a large class of problems, one possible solution is to develop penalized MLEs, which solve
\begin{align}
\smash[t]{\hat{\Sig}}^{-1}, \smash[t]{\hat{\Delt}}^{-1}_1, \dots, \smash[t]{\hat{\Delt}}^{-1}_K &=  \smash{\argmax_{\substack{\Sig^{-1} \succ 0 \\ \Delt^{-1}_1, \dots, \Delt^{-1}_K \succ 0}}} \Big \{  p \log | \Sig^{-1} | + n \sum_{k=1}^{K} \log | \Delt^{-1}_k | - \sum_{k = 1}^{K} \mathrm{tr}\left( \Sig^{-1} \X_{k} \Delt^{-1}_k \X_{k}^{T} \right) \nonumber \\
& \qquad \qquad \qquad \qquad \qquad - P(\Sig^{-1}, \Delt^{-1}_1, \dots, \Delt^{-1}_K) \Big \}. \label{penalized_mle}
\end{align}

Similar to previous work on the penalized matrix-variate normal log-likelihood \citep{yin2012model, allen2010transposable}, we can apply an additive-type penalty and define the additive $L_q$ iPCA penalty to be 
\begin{align*}
\smash[tb]{P_q(\Sig^{-1}, \Delt^{-1}_1, \dots, \Delt^{-1}_K) = \lambda_{\Sig} \norm{\Sig^{-1}}_q + \sum_{k=1}^{K} \lambda_k \norm{\Delt^{-1}_k}_q},
\end{align*}
where $\lambda_{\Sig}, \lambda_1, \dots, \lambda_K > 0$ are tuning parameters. Though there are many potential choices of norm-penalties here, one natural choice is the additive Frobenius penalty, $\norm{\cdot}_q = \norm{\cdot}_{F}^2$, as it is a proper generalization of PCA. That is, performing iPCA with the additive Frobenius penalty is equivalent to PCA in the $K = 1$ case (see Theorem 1 in \citealt{allen2010transposable}). When $K \geq 1$, the additive Frobenius penalty induces a smoothness over the eigenvalues of the covariance matrices and returns a dense full-rank covariance estimator. In the sparse covariance setting, we can instead induce sparsity through the additive $L_1$ penalty $\norm{\cdot}_q = \norm{\cdot}_{1,\text{off}}$, where $\norm{\mathbf{A}}_{1,\text{off}} = \sum_{i \neq j} A_{ij}$. Applying the additive $L_1$ penalty to the inverse covariance matrix is common practice, but one can alternatively apply the additive $L_1$ penalty to the inverse correlation matrix. This latter approach was adopted in \cite{zhou2014gemini} and \cite{rothman2008sparse}. Given the plethora and well-studied nature of sparse graphical model estimation, we leverage existing ideas and tools to show that this approach, namely, the additive $L_1$ penalty applied to the inverse correlation matrix, consistently estimates the true underlying joint subspace under certain conditions at a rate of $O \left( \sum_{k=1}^{K}\frac{p_k}{p} \sqrt{\frac{\max\{s_{\Sig}, 1\} \log(\max\{n, p_k\})}{p_k}} \right)$, where $s_{\Sig}$ is the number of non-zero off-diagonal entries in $\Sig^{-1}$. However, due to the superior empirical performance of the iPCA Frobenius estimators over the sparse iPCA estimators (see Section~\ref{sec:empirical_results}), we leave the precise theorem statement and discussion of the $L_1$ subspace consistency result to Appendix~\ref{sec:s_consistency}.% These results are detailed in appendix E.

Despite the popularity of additive-type penalties in the literature, an overarching downside with these existing penalties in the integrated data regime is that solving \eqref{penalized_mle} with additive penalties is a non-convex problem, for which we can only guarantee convergence to a local solution. Nonetheless, though \eqref{penalized_mle} is non-convex in Euclidean space, \citet{wiesel2012geodesic} showed that the matrix-variate normal log-likelihood is geodesically convex (g-convex) with respect to the manifold of positive definite matrices. G-convexity is a generalized notion of convexity on a Riemannian manifold, and like convexity, all local minima of g-convex functions are globally optimal. Exploiting this idea of g-convexity, we propose a novel type of penalty, named the multiplicative Frobenius iPCA penalty
\begin{align*}
\smash[tb]{P^*(\Sig^{-1}, \Delt^{-1})  = \sum_{k=1}^{K} \lambda_k \norm{\Sig^{-1} \otimes \Delt^{-1}_k}_F^2,}
\end{align*}
which we will show to be g-convex in Theorem~\ref{multfrob_gconvex}. Note that since $\norm{\mathbf{A} \otimes \mathbf{B}}_F^2 = \norm{\mathbf{A}}_F^2 \norm{\mathbf{B}}_F^2$, the multiplicative penalty can be rewritten as a product $\norm{\Sig^{-1}}_F^2 \sum_{k=1}^{K} \lambda_k \norm{\Delt^{-1}_k}_F^2$, giving rise to its name.

Like the additive Frobenius iPCA estimator, the multiplicative Frobenius iPCA estimator is a shrinkage technique that returns a dense covariance estimate with smoothed eigenvalues when $K \geq 1$, and when $K = 1$, it is equivalent to PCA (see Appendix~\ref{sec:s_equiv}).

Having introduced several different types of penalized iPCA covariance estimators, namely, the additive Frobenius estimator, multiplicative Frobenius estimator, additive $L_1$ covariance estimator, and additive $L_1$ correlation estimator, the question for practitioners becomes how to compute these estimators, how to select the penalty parameters, and which estimator to use in which situation. We discuss each in turn next.

\subsubsection{Flip-flop Algorithms for iPCA Estimators}

For each of the aforementioned penalties, we can compute the corresponding penalized MLEs via Flip-Flop algorithms (also known as block coordinate descent algorithms), which iteratively optimize over each of the parameters, one at a time, while keeping all other parameters fixed. These algorithms are derived fully in Appendix~\ref{sec:s_penalized_mle}, but in general, for the Frobenius penalties, each Flip-Flop update has a closed form solution determined by a full eigendecomposition. For the $L_1$ penalties (also known as the Kronecker Graphical Lasso, \citealp{tsiligkaridis2013convergence}), each update can be solved by the graphical lasso \citep{hsieh2011quic}. We provide the multiplicative Frobenius Flip-Flop algorithm here in Algorithm~\ref{alg:multfrob}, and as the other algorithms take similar forms, we leave them to Appendix~\ref{sec:s_penalized_mle}. The following theorem guarantees numerical convergence of the Flip-Flop algorithms to a local solution for the multiplicative Frobenius, additive Frobenius, and additive $L_1$ penalties.

\begin{algorithm}[!t]
\caption{Flip-Flop Algorithm for Multiplicative Frobenius iPCA Estimators}\label{alg:multfrob}
\begin{algorithmic} [1]
\State Center the columns of $\X_1, \dots, \X_K$, and initialize $\hat{\Sig}$, $\hat{\Delt}_1, \dots, \hat{\Delt}_K$ to be positive definite.
\While{not converged}
\State Take eigendecomposition: $\sum_{k=1}^{K} \X_{k} \smash[t]{\hat{\Delt}}^{-1}_k \X_{k}^{T} = \U \boldsymbol{\Gamma} \U^T$~ \tikzmark{top1}
\State Regularize eigenvalues: $\Phi_{ii} = \frac{1}{2p} \left( \Gamma_{ii} + \sqrt{\Gamma_{ii}^{2} + 8 p \sum_{k=1}^{K} \lambda_{k} \norm{\smash[t]{\hat{\Delt}}^{-1}_k}_{F}^{2} } \right)$~\tikzmark{right1}
\State Update $\smash[t]{\hat{\Sig}}^{-1} = \U \boldsymbol{\Phi}^{-1} \U^{T}$ \tikzmark{bottom1}

\For{$k = 1, \dots, K$}
\State Take eigendecomposition: $\X_{k}^{T}  \smash[t]{\hat{\Sig}}^{-1} \X_{k} = \V \boldsymbol{\Phi} \V^{T}$\tikzmark{top2}
\State Regularize eigenvalues: $\Gamma_{ii} = \frac{1}{2n} \left(\Phi_{ii} + \sqrt{\Phi_{ii}^{2} + 8 n  \lambda_{k} \norm{\smash[t]{\hat{\Sig}}^{-1}}_{F}^{2}} \right)$.
\State Update $\smash[t]{\hat{\Delt}}^{-1}_k = \V \boldsymbol{\Gamma}^{-1} \V^{T}$ \tikzmark{bottom2} 
\EndFor
\EndWhile
\vspace{-16pt}
\end{algorithmic}
\AddNote{top1}{bottom1}{right1}{Update $\Sig$}
\AddNote{top2}{bottom2}{right1}{Update $\Delt_k$}
\end{algorithm}

\begin{restatable}{theorem}{localconvergence}\label{converge_to_stationary}
Suppose that the objective function in \eqref{penalized_mle} is bounded below. Suppose also that either (i) $P(\Sig^{-1}, \Delt^{-1}_1, \dots, \Delt^{-1}_K)$ is a differentiable convex function with respect to each coordinate or (ii) $P(\Sig^{-1}, \Delt^{-1}_1, \dots, \Delt^{-1}_K) = P_{0}(\Sig^{-1}) + \sum_{k=1}^{K} P_k(\Delt^{-1}_k)$, where $P_i$ is a (non-differentiable) convex function for each $k = 1, \dots, K$. If either (i) or (ii) holds, then the Flip-Flop algorithm corresponding to \eqref{penalized_mle} converges to a stationary point of the objective.
\end{restatable}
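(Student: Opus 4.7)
The plan is to recognize the Flip-Flop iteration as exact block coordinate descent (BCD) on the negative penalized log-likelihood
$F(\Sig^{-1}, \Delt_1^{-1}, \dots, \Delt_K^{-1})$ obtained by negating the objective in \eqref{penalized_mle}, taken over the product of $K+1$ open positive-definite cones, and then to invoke standard BCD convergence theorems. First I would verify that each inner Flip-Flop update is the exact minimizer of $F$ along a single block. With all but one block held fixed, the restricted function has the form $-\alpha \log|A| + \mathrm{tr}(A C) + Q(A)$ with $A \succ 0$, $\alpha > 0$, $C \succeq 0$ fixed by the data and the other blocks, and $Q$ either smooth-and-convex (case (i)) or nondifferentiable-and-separable-convex (case (ii)). Because $-\log|\cdot|$ is strictly convex on $\mathcal{S}^d_{++}$, each such subproblem has a unique minimizer, which is the closed-form eigendecomposition update of Algorithm~\ref{alg:multfrob} and its analogues derived in Appendix~\ref{sec:s_penalized_mle}. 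In particular, each block sweep weakly decreases $F$, strictly so unless the current iterate is already block-optimal.

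Next I would establish that the iterate sequence remains in a compact subset of the interior of the PD cone. Since $F$ is bounded below by hypothesis, all iterates lie in the sublevel set $\{F \le F^{(0)}\}$. The $-\log|\cdot|$ barriers force the smallest eigenvalues of each $\Sig^{-1}$ and $\Delt_k^{-1}$ to stay bounded away from zero, while the interplay of the trace-coupled data terms with the convex penalty $P$ and the assumed lower bound on $F$ prevents the largest eigenvalues from diverging. On this compact set $F$ is continuous, so every subsequence of iterates has a convergent sub-subsequence with limit in the interior.

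With well-posedness, monotone descent, and compactness in hand, I would apply a standard BCD convergence theorem separately in each case. In case (i), $F$ is continuously differentiable on the interior of the PD cone and the block minimizers are unique; the classical BCD theorem (e.g., Bertsekas, \emph{Nonlinear Programming}, Prop.~2.7.1, as refined by Grippo--Sciandrone, 2000) then guarantees that every limit point of the Flip-Flop iterates is a stationary point of $F$. In case (ii), the nonsmooth part of $F$ is separable across the blocks as $P = P_0(\Sig^{-1}) + \sum_{k=1}^{K} P_k(\Delt_k^{-1})$ with each $P_i$ convex, and the smooth part is continuously differentiable on the PD cone. This is precisely the hypothesis pattern of Tseng's 2001 theorem on block coordinate descent for nondifferentiable minimization, which supplies the same conclusion.

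The hard part will be the compactness argument: ruling out blow-up of the largest eigenvalues of the iterates using only the generic hypothesis ``$F$ is bounded below'', rather than coercivity of a specific penalty. For the additive Frobenius, additive $L_1$, and multiplicative Frobenius penalties used in the paper, coercivity of $P$ makes this immediate, but in the generality of the theorem one must argue along each direction of possible blow-up that some combination of the $\mathrm{tr}$ terms, the $-\log|\cdot|$ barriers, and the convex $P$ diverges, leveraging the positive-(semi)definiteness of the data-dependent matrices $C$ that appear in each block subproblem. Once this is secured, the remainder of the proof is a bookkeeping application of the cited BCD convergence machinery.
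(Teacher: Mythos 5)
Your proposal follows essentially the same route as the paper: the paper also treats Flip-Flop as exact block coordinate descent, notes that $-\ell_P$ is regular (via Lemma~3.1 of Tseng, 2001, since the smooth part is Gateaux-differentiable on an open domain and the nonsmooth part is separable) and strictly convex in each block because $-\log|\cdot|$ is strictly concave, and then concludes by Theorem~4.1(c) of Tseng (2001). The one substantive difference is that the compactness/cluster-point issue you flag as ``the hard part'' is not addressed in the paper's proof at all---it is absorbed into the citation of Tseng's theorem together with the boundedness hypothesis---so your explicit worry about ruling out eigenvalue blow-up is a fair critique of the published argument rather than a defect of your own.
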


However, building upon \citet{wiesel2012geodesic} and the notion of g-convexity, we can prove a far stronger result for the multiplicative Frobenius iPCA estimator.

\begin{restatable}{theorem}{globalconvergence}\label{multfrob_gconvex}
The multiplicative Frobenius iPCA estimator is jointly geodesically convex in $\Sig^{-1}$ and $\Delt^{-1}_1, \ldots, \Delt^{-1}_K$. Because of this, the Flip-Flop algorithm for the multiplicative Frobenius iPCA estimator given in Algorithm~\ref{alg:multfrob} converges to the global solution.
\end{restatable}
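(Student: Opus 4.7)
The plan is to split the negated objective in \eqref{penalized_mle} into the matrix-variate normal negative log-likelihood
$$\ell_0(\Sig^{-1}, \Delt_1^{-1}, \ldots, \Delt_K^{-1}) = -p \log|\Sig^{-1}| - n \sum_{k=1}^K \log|\Delt_k^{-1}| + \sum_{k=1}^K \mathrm{tr}\!\left(\Sig^{-1} \X_k \Delt_k^{-1} \X_k^T\right)$$
plus the multiplicative Frobenius penalty $P^*$. The first piece is already known to be jointly geodesically convex on the product of PD cones by \citet{wiesel2012geodesic}, so the entire argument reduces to showing that $P^*$ is itself jointly g-convex. Since positively-weighted sums of g-convex functions are g-convex, it suffices to prove joint g-convexity of each summand $\|\Sig^{-1}\|_F^2\,\|\Delt_k^{-1}\|_F^2$.

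For a single summand, I would prove the stronger lemma that $\X \mapsto \log\|\X\|_F^2 = \log\mathrm{tr}(\X^2)$ is g-convex on $\mathcal{S}^n_{++}$. Parametrizing the geodesic from $\A$ to $\B$ by $\X(t) = \A^{1/2} e^{t\mathbf{H}} \A^{1/2}$ with $\mathbf{H} = \log(\A^{-1/2}\B\A^{-1/2})$ symmetric, and diagonalizing $\mathbf{H} = \U\Lambda\U^T$, a direct computation yields
$$\mathrm{tr}\bigl(\X(t)^2\bigr) = \mathrm{tr}\!\left(\A\, e^{t\mathbf{H}}\, \A\, e^{t\mathbf{H}}\right) = \sum_{i,j} c_{ij}^2\, e^{t(\lambda_i + \lambda_j)},$$
where $c_{ij}$ are the entries of $\U^T \A \U$. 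This is a positive combination of exponentials in $t$, and any such sum $f(t) = \sum_m a_m e^{b_m t}$ with $a_m \geq 0$ is log-convex: by Cauchy--Schwarz, $(f')^2 \leq f \cdot f''$, so $(\log f)'' = (ff'' - (f')^2)/f^2 \geq 0$. Hence $\log\|\X\|_F^2$ is g-convex along every geodesic.

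Joint g-convexity of $\|\Sig^{-1}\|_F^2\,\|\Delt_k^{-1}\|_F^2$ then drops out: geodesics on the product manifold $\mathcal{S}^n_{++} \times \mathcal{S}^{p_k}_{++}$ are Cartesian products of single-component geodesics, so $\log\|\Sig^{-1}\|_F^2 + \log\|\Delt_k^{-1}\|_F^2$ is a sum of g-convex functions in disjoint variables and is jointly g-convex, making its exponential $\|\Sig^{-1}\|_F^2 \,\|\Delt_k^{-1}\|_F^2$ jointly log-g-convex and hence jointly g-convex. Summing over $k$ with positive weights $\lambda_k$ gives joint g-convexity of $P^*$, which when added to the g-convex $\ell_0$ establishes joint g-convexity of the full penalized objective over the product of all $K+1$ PD cones.

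To finish, I would invoke Theorem~\ref{converge_to_stationary}, which guarantees that the Flip-Flop iterates of Algorithm~\ref{alg:multfrob} converge to a stationary point of the objective (the multiplicative Frobenius penalty is differentiable and convex in each block argument, so condition (i) there applies). By the standard fact that every stationary point of a g-convex function on a Riemannian manifold of non-positive sectional curvature is a global minimizer, this stationary point is globally optimal. The main obstacle is the g-log-convexity of the Frobenius norm; once that lemma is in hand the rest is essentially bookkeeping through products of geodesics and invocations of Wiesel's theorem and Theorem~\ref{converge_to_stationary}.
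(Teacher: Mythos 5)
Your proposal is correct, and its overall architecture matches the paper's: split the objective into the negative log-likelihood (g-convex by \citet{wiesel2012geodesic}) plus the multiplicative penalty, show the penalty is jointly g-convex, use closure of g-convexity under sums, and then combine Theorem~\ref{converge_to_stationary} with the fact that stationary points of g-convex functions are global minimizers. Where you genuinely diverge is in how you establish joint g-convexity of each penalty summand $\norm{\Sig^{-1}}_F^2 \norm{\Delt_k^{-1}}_F^2$. The paper first rewrites the whole penalty as $\norm{\bar{\Delt}^{-1} \otimes \Sig^{-1}}_F^2$ for a rescaled block-diagonal $\bar{\Delt}$, shows $\Q \mapsto \norm{\Q}_F^2$ is g-convex on $\mathbb{S}_{++}^{np}$ by computing $\mathrm{tr}(\Q_t^2) = \sum_{i,l} a_{il}^2 (d_i d_l)^t$ along a geodesic, and then transfers joint g-convexity back to the factors via Wiesel's result that $f(\Q_1 \otimes \cdots \otimes \Q_K)$ is jointly g-convex whenever $f$ is g-convex (Proposition~\ref{prop:gconvex_operations}(ii)). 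You instead prove the strictly stronger statement that $\log \mathrm{tr}(\X^2)$ is g-convex on each factor --- your $\sum_{i,j} c_{ij}^2 e^{t(\lambda_i + \lambda_j)}$ is the identical computation, upgraded from convexity to log-convexity via Cauchy--Schwarz --- and then handle the product structure by additivity of logarithms over product geodesics, followed by exponentiation (which preserves convexity along each geodesic since $e^x$ is convex and increasing). This buys you a self-contained argument for the penalty that avoids the Kronecker-product preservation lemma entirely and isolates a reusable fact (log-g-convexity of the squared Frobenius norm on the PD cone); the paper's route is shorter given that it already needs the Kronecker lemma for the log-likelihood term anyway. Both are sound; the one hypothesis you should state explicitly when invoking Theorem~\ref{converge_to_stationary} is that the penalized objective is bounded below, which that theorem assumes.
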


There are currently only a handful of non-convex problems where there exists an achievable global solution, so this guarantee that the multiplicative Frobenius iPCA estimator always reaches a global solution is both extremely rare and highly desirable. In Section~\ref{sec:empirical_results}, we will also see that the multiplicative Frobenius iPCA estimator undoubtedly gives the best empirical performance, indicating that in addition to its optimization-theoretic advantages from global convergence, there are significant practical advantages associated with the g-convex penalty. A self-contained review of g-convexity and the proof of Theorem~\ref{multfrob_gconvex} are given in Appendix~\ref{sec:s_geodesic}.

\subsubsection{Tuning Penalty Parameters}\label{sec:penalty_params}

To select penalty parameters for \eqref{penalized_mle} in a data-driven manner. We propose to do this via a cross-validation-like framework. We note also that the following framework can also be used to perform iPCA in missing data scenarios.

Let $\Lambda$ denote the space of penalty parameters, and let $\lambda := (\lambda_{\Sig}, \lambda_{1}, \dots, \lambda_K)$ be a specific choice of penalty parameters in $\Lambda$. The idea is to first randomly leave out scattered elements from each $\X_k$. Then, for each $\lambda \in \Lambda$, impute the missing elements via an EM-like algorithm, similar to \citet{allen2010transposable}. Finally, select the $\lambda$ which minimizes the error between the imputed values and the observed values.  

Searching over all combinations of penalty parameters in $\Lambda$ however can be computationally intractable if $K$ or the data sets themselves are large. In these cases, we can select the penalty parameters in a greedy manner: first fix $\lambda_1, \dots, \lambda_K$ and optimize over $\lambda_{\Sig}$, then fix $\lambda_{\Sig}, \lambda_2, \dots, \lambda_K$ and optimize $\lambda_{1}$, and so forth, and we stop after optimizing $\lambda_K$. 

Note also that it can be substantially easier and faster to tune the multiplicative Frobenius penalty, which has $K$ penalty parameters, compared to the additive iPCA penalties with $K + 1$ parameters. Because the choice of penalty parameter can significantly impact the empirical performance of iPCA, having one less parameter to tune is an extremely important practical advantage, and we attribute part of the strong empirical performance of the multiplicative Frobenius iPCA estimator, displayed in Section~\ref{sec:empirical_results}, to this advantage. 

Details, technical derivations, and numerical results regarding our imputation method and penalty parameter selection are provided in Appendix~\ref{sec:s_selecting_penalty_parameters}.

\subsection{Choosing the Type of Penalized iPCA Estimator}

As a result of its global convergence guarantee and the reduced complexity in tuning fewer penalty parameters, we strongly recommend using the multiplicative Frobenius estimator in practice. Though we have yet to prove statistical guarantees for the multiplicative Frobenius estimator, the strong empirical performance of the multiplicative Frobenius estimator, seen next in Section~\ref{sec:empirical_results}, firmly supports this recommendation. Even in the sparse setting (see Figure~\ref{fig:sparse}), the multiplicative Frobenius estimator performs only slightly worse than the additive $L_1$ iPCA estimators, for which we have proved subspace consistency guarantees (see Appendix~\ref{sec:s_consistency}). This empirically demonstrates the robustness and applicability of the multiplicative Frobenius estimator to a diverse array of problems.

\section{Empirical Results} \label{sec:empirical_results}

In the following simulations and case study, we evaluate iPCA against individual PCAs on each of the data sets $\X_k$, concatenated PCA, distributed PCA, JIVE, and MFA. Note that many data integration methods from the multiblock PCA family are known to perform similarly to MFA \citep{abdi2013mfa}, so we only include MFA to minimize redundancy. We also omit CMF as it performs similarly to concatenated PCA and the GSVD since it is not applicable for integrated data with both low-dimensional and high-dimensional data sets.

Our focus here will be on the non-sparse setting while we leave the sparse simulations to Appendix~\ref{sec:s_sims}. Within the class of iPCA estimators, we thus concentrate our attention on the additive and multiplicative Frobenius iPCA estimators in these dense simulations, but to also represent the sparse estimators, we include the most commonly used sparse estimator, the additive $L_1$ penalty ($\norm{\cdot}_{1,\text{off}}$) applied to the inverse covariance matrices. The stopping rule in the Flip-Flop algorithms for the additive and multiplicative Frobenius iPCA estimators is given by $\bar{\boldsymbol{\lambda}}^{1/2} ||\hat{\Sig}_t^{-1} - \hat{\Sig}_{t-1}^{-1}||_F / ||\hat{\Sig}_{t-1}^{-1}||_F < 10^{-6}$, where $\bar{\boldsymbol{\lambda}}$ denotes the mean of the penalty parameters and $\hat{\Sig}_t$ denotes the estimate of $\Sig$ in the $t^{th}$ iteration. Due to computational constraints, we stop the $L_1$ Flip-Flop algorithm after one iteration, and we select the iPCA penalty parameters in a greedy manner, as discussed in Section~\ref{sec:penalty_params}.

\subsection{Simulations} \label{sec:sims}

The base simulation is set up as follows: Three coupled data matrices, $\X_{1}, \X_{2}, \X_{3}$, with $n = 150$, $p_{1} = 300$, $p_{2} = 500, p_{3} = 400$, were simulated according to the iPCA Kronecker covariance model \eqref{pop_model}. Here, $\Sig$ is taken to be a full-rank spiked covariance matrix, where the top two eigenvalues are simulated to be much larger than the rest. These top two factors in $\Sig$ form the three clusters as shown in Figure~\ref{fig:illustrative}A. $\Delt_{1}$ is an autoregressive Toeplitz matrix with entry $(i,j)$ given by $.9^{|i-j|}$; $\Delt_{2}$ is the observed covariance matrix of miRNA data from TCGA ovarian cancer \citep{cancer2011integrated}; and $\Delt_{3}$ is a block-diagonal matrix with five equally-sized blocks. We also ensured that the largest eigenvalue of each $\Delt_k$ was larger than that of $\Sig$ so that the joint patterns are intentionally obscured by individualistic patterns. From this base simulation, we systematically varied the parameters---number of samples, number of features, and strength of the joint signal in $\Sig$ (i.e. $\norm{\Sig}_2$)---one at a time while keeping everything else constant.

We evaluate the performance of various methods using the subspace recovery error: If the true underlying subspace of $\Sig$ was simulated to be of dimension $d$ with the orthogonal eigenbasis $\uu_1, \dots, \uu_d$ and the top $d$ eigenvectors of the estimate $\hat{\Sig}$ are given by $\hat{\uu}_1, \dots, \hat{\uu}_d$, then the subspace recovery error is defined to be $\frac{1}{d}\norm{\smash{\hat{\U} \hat{\U}^T - \U \U^T}}_F^2$, where $\U = [\uu_1, \dots, \uu_d]$ and $\hat{\U} = [\hat{\uu}_1, \dots, \hat{\uu}_d]$. This metric simply quantifies the distance between the true subspace of $\Sig$ and the estimated subspace from $\hat{\Sig}$. We note that a lower subspace recovery error implies higher estimation accuracy, and in the base simulation, the true subspace of $\Sig$ is given by the number of spikes so that $d = 2$. Although there are other metrics like canonical angles, which also quantify the distance between subspaces, these metrics behave similarly to the subspace recovery error and are omitted for brevity.

\begin{figure}
\centering
\includegraphics[width =  1\linewidth]{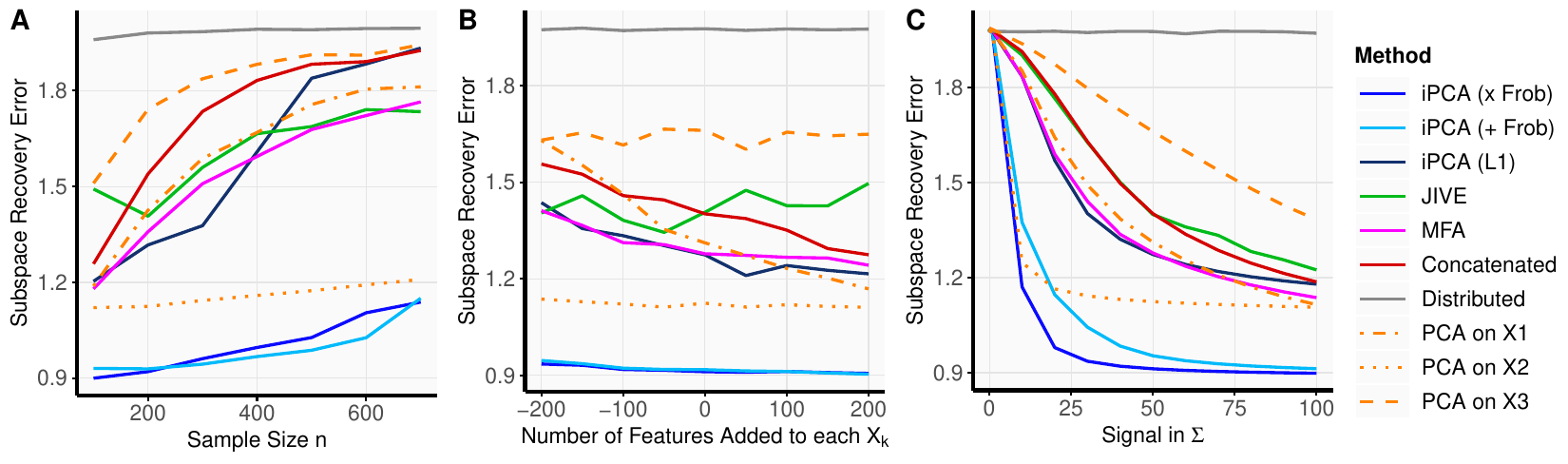}
\caption{\em \footnotesize Subspace recovery as simulation pararmeters vary from the base simulation: (A) As the number of samples increases, it becomes more difficult to estimate the joint row subspace; (B) As the number of features increases, it becomes slightly easier to estimate the joint row subspace; (C) Performance drastically improves as  the strength of the joint signal in $\Sig$ (i.e. the top singular value of $\Sig$) increases. Moreover, in almost every scenario, the multiplicative and additive Frobenius iPCA estimators outperform their competitors.}
\label{fig:sims}
\end{figure}

The average subspace recovery error, measured over $50$ trials, from various simulations are shown in Figure~\ref{fig:sims}. We clearly see that the additive and multiplicative Frobenius iPCA estimators consistently outperformed all other methods. Since $\Sig$ was not simulated to be sparse, it is no surprise that the Frobenius iPCA estimators outperformed the $L_1$ iPCA estimator. It is also expected that distributed PCA performs poorly since the $\Delt_k$'s are not all identical, violating its basic assumption. What may be surprising is that doing PCA on $\X_2$ performed better than its competitors, excluding the Frobenius iPCA estimators. We speculate that this is because the observed covariance $\Delt_2$ happened to be a very low-rank matrix, and because $\Delt_2$ was low-rank, the signal from $\Sig$ most likely dominated much of the variation in the second PC. Looking ahead at Figure~\ref{fig:sims_robust}A, as Laplacian error was added to the simulated data, PCA on $\X_2$ failed to recover the true signal since the Laplacian error increasingly contributed to the variation in the data. We also point out that MFA always yielded a lower error than concatenated PCA in Figure~\ref{fig:sims}, indicating that there is value in normalizing data sets to be on comparable scales. On the other hand, we must be weary of this normalization process. In the case of these simulations, PCA on $\X_2$ outperformed MFA, illustrating that normalization can sometimes remove important information.

To verify that these simulation results are not heavily dependent by the base simulation setup of $\Sig$ and $\Delt_k$, we also ran simulations, varying the dimension $d$ of the true joint subspace $\U$ and the number of data sets $K$. We provide these results in Appendix~\ref{sec:s_sims}.

Beyond simulating from the iPCA model \eqref{pop_model}, we check for robustness from the two main iPCA assumptions---normality and separability (i.e., the Kronecker covariance structure). To deviate from normality, we add Laplacian noise to the base simulation setup, and to depart from the Kronecker covariance structure, we simulate data from the JIVE model. The results are summarized in Figure~\ref{fig:sims_robust}, and we leave the simulation details as well as other simulations that demonstrate robustness to Appendix~\ref{sec:s_sims}.

As seen in Figure~\ref{fig:sims_robust}A, the Frobenius iPCA estimators appear to be relatively robust to non-Gaussian noise and outperformed their competitors even as the standard deviation of added Laplacian errors increased.  From the simulations under the JIVE model in Figure~\ref{fig:sims_robust}B, we see that as the amount of noise increases, JIVE given the known ranks yields the lowest error, as expected. But similar to how JIVE was comparable to competing methods under the iPCA model (Figure~\ref{fig:sims}), the iPCA estimators are comparable to competing methods for high noise levels under the JIVE model. At low noise levels however, the Frobenius iPCA estimators are surprisingly able to recover the true joint subspace better than JIVE with the known ranks. Further investigation into this peculiar behavior reveals that the Frobenius iPCA estimators give lower subspace recovery errors but much larger approximation errors $\norm{\Sig - \hat{\Sig}}_F^2$, compared to JIVE with the known ranks. This brings up a subtle, but important distinction---iPCA revolves around estimating the underlying subspace, determined by eigenvectors, while JIVE focuses on minimizing the matrix approximation error. These are inherently different objectives, and it is common for iPCA to estimate the eigenvectors well at the cost of a poor matrix approximation due to the regularized eigenvalues. 

\begin{figure}
\centering
\includegraphics[width =  .76\linewidth]{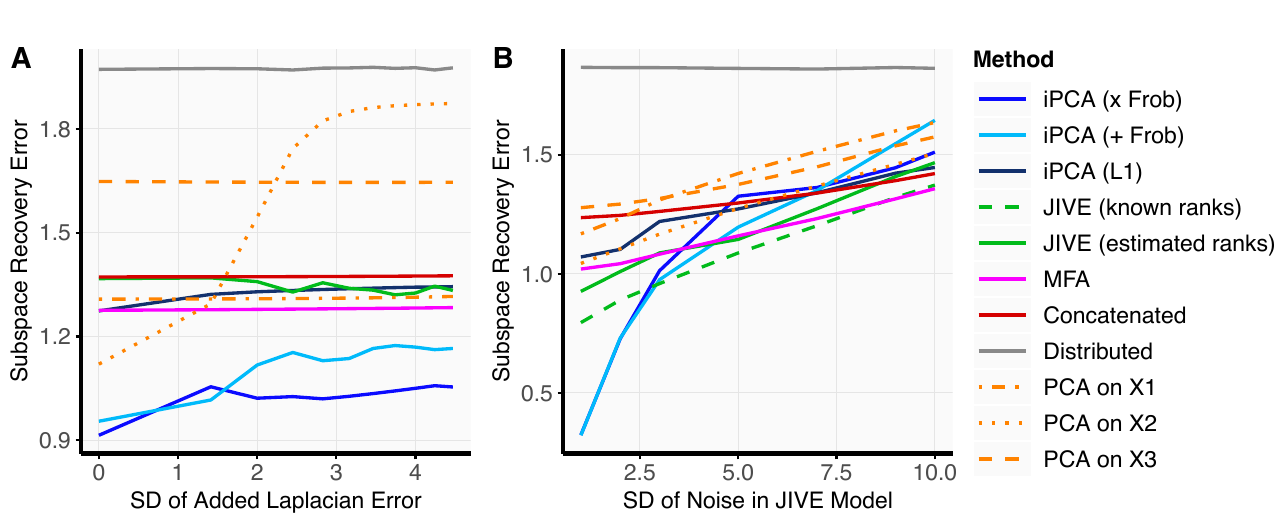}
\caption{\em \footnotesize Robustness Simulations: (A) As Laplacian error is increasingly added to the simulated data sets, the Frobenius iPCA estimators appear to be robust to the departures from Gaussianity; (B) As the amount of noise in the JIVE model increases, iPCA seems to be comparable to existing methods, illustrating its relative robustness to departures from the Kronecker product model.}
\label{fig:sims_robust}
\end{figure}

%%%%%%%%%%%%%%%%%%%%%%%%%%%%%%%%%%%%%%%%%%%%%%%%

\subsection{Case Study: Integrative Genomics of Alzheimer's Disease} \label{sec:rosmap}

A key motivating example for our research is in integrative genomics, where the goal is to combine multiple genomic sources to gain insights into the genetic basis of diseases. In particular, apart from the APOE gene, little is known about the genomic basis of Alzheimer's disease (AD) and the genes which contribute to dominant expression patterns in AD. In this case study, we delve into the integrative genomics of AD and jointly analyze miRNA expression, gene expression via RNASeq, and DNA methylation data obtained from the Religious Orders Study Memory and Aging Project (ROSMAP) Study \citep{mostafavi2018rosmap}. The ROSMAP study is a longitudinal clinical-pathological cohort study of aging and AD, consisting of $507$ subjects, $309$ miRNAs, $900$ genes, and $1250$ CpG (methylation) sites after preprocessing (which we detail in Appendix~\ref{sec:s_rosmap}). This data is uniquely positioned for the study of AD since its genomics data is collected from post-mortem brain tissue from the dorsolateral prefrontal cortex, an area known to play a critical role in cognitive functions.

For our analysis, we consider two clinical outcomes: clinician's diagnosis and global cognition score. The clinician's diagnosis is the last clinical evaluation prior to the patient's death and is a categorical variable with three levels---Alzheimer's disease (AD), mild cognitive impairment (MCI), and no cognitive impairment (NCI). Global cognition score, a continuous variable, is the average of 19 cognitive tests and is the last cognitive testing score prior to death. While the clinician's diagnosis is sometimes subjective, global cognition score is viewed as a more objective measure of cognition. Our goal is to find common patterns among patients, which occur in all three data sets, and to understand whether these joint patterns are predictive of AD, as measured by the clinician's diagnosis and global cognition score.

To this end, we run iPCA and other existing methods to extract dominant patterns from the ROSMAP data. Figure~\ref{fig:rosmap_pc} shows the PC plots obtained from the various methods---each point represents a subject and is colored by either clinician's diagnosis or cognition score.

\begin{figure}[!b]
\centering
\includegraphics[width =  1\linewidth]{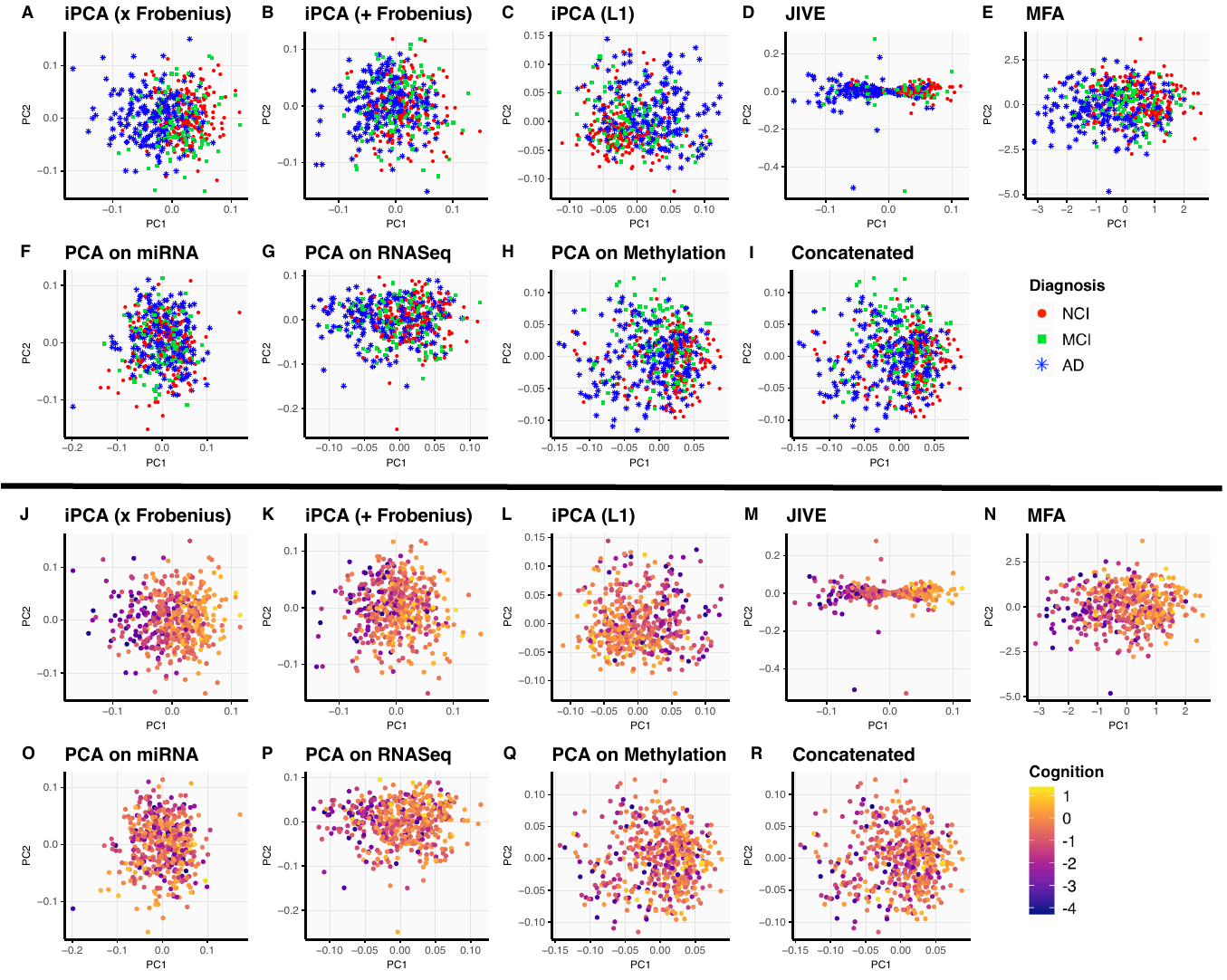}
\caption{\em \footnotesize We plot the first two (integrated) principal components from various methods applied to the ROSMAP data. Each point represents a subject, colored by the clinician's diagnosis in panels A-I and by global cognition score in panels J-R.}
\label{fig:rosmap_pc}
\end{figure}

Since visuals are a subjective measure of performance, we quantify it by taking the top PCs and using them as predictors in a random forest to predict the outcome of interest. The random forest test errors, averaged over 100 random training/test splits, are shown in Figure~\ref{fig:rosmap_rf}. Here, we see that the joint patterns extracted from iPCA using the multiplicative Frobenius penalty were the most predictive of the clinician's diagnosis of AD and the patient's global cognition score. Moreover, most of the predictive power can be attributed to the first three iPCs, which we visualize in Figure~\ref{fig:rosmap_3d_var}A-B. We also note that the top iPCs from iPCA with the multiplicative Frobenius penalties were more predictive than combining the PCs from the three individual PCAs performed on each data set. This showcases empirically that a joint analysis of the integrated data sets can be advantageous over three disparate analyses.

\begin{figure}[!tb]
\centering
\includegraphics[width =  .725\linewidth]{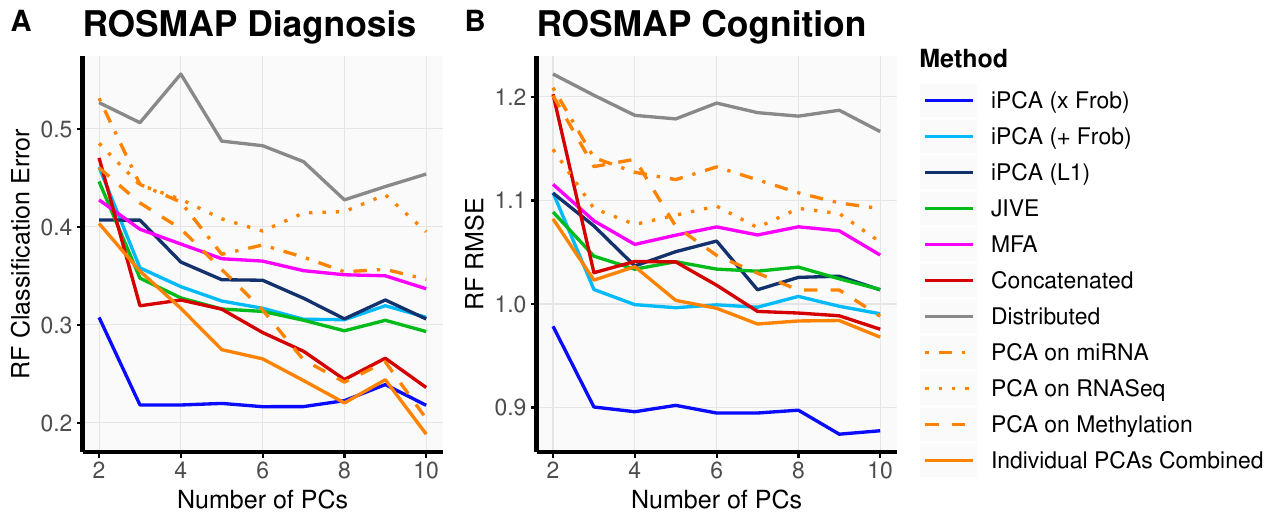}
\caption{\em \footnotesize We took the top PCs and used them as predictor variables in a random forest to predict (A) the clinician's diagnosis and (B) the global cognition score. For the random forest, we split the ROSMAP data into a training ($n = 375$) and test set ($n = 132$) and used the default random forest settings in R. The average test error from the random forests over 100 random splits are shown as the number of PCs used in the random forests increases.}
\label{fig:rosmap_rf}
\end{figure}

Beyond the high predictive power of iPCA with the multiplicative Frobenius penalty, it is perhaps more important for scientists to be able to interpret the iPCA results. One way is through the proportion of variance explained by the joint iPCs, as defined in Section~\ref{sec:var_explained}. Figure~\ref{fig:rosmap_3d_var}C shows the marginal proportions for the top 5 iPCs. It reveals that the RNASeq data set contributed the most variation in the joint patterns found by iPC1 and iPC2, and the miRNA data set contributed the most variation in iPC3. More interestingly, even though iPC2 and iPC3 have relatively small variances, iPCA is able to pick out these weak joint signals, which we found to be predictive of AD. This reiterates that the most variable patterns in the data are not necessarily the most relevant patterns for the question at hand. In this case, our goal was to find joint patterns which occur in all three data sets, and since the joint signal is not the most dominant source of variation in each data set, no individualistic PCA analysis would have identified the joint signal found by iPCA.

\begin{figure}[!t]
\centering
\includegraphics[width =  1\linewidth]{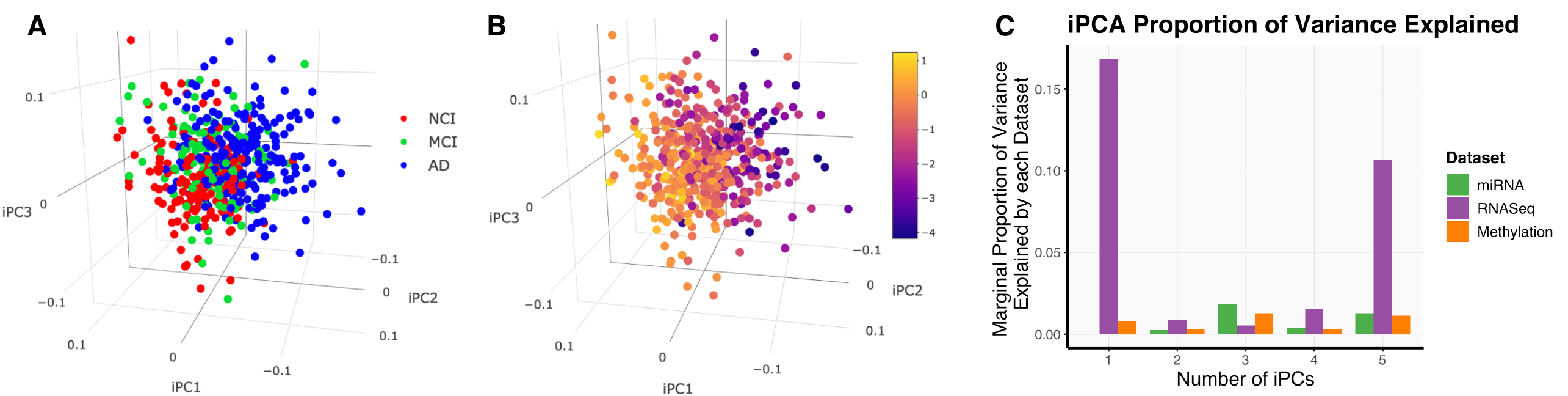}
\caption{\em \footnotesize (A)-(B) We show the top 3 iPCs obtained from iPCA with the multiplicative Frobenius estimator; the points are colored by clinician's diagnosis and global cognition score in (A) and (B), respectively. (C) We plot the marginal proportion of variance explained by the top iPCs in each data set in the ROSMAP analysis (using the multiplicative Frobenius iPCA estimator).}
\label{fig:rosmap_3d_var}
\end{figure}

We conclude our ROSMAP analysis by extracting the top genetic features which are associated to the joint patterns shown in Figure~\ref{fig:rosmap_3d_var}. Since iPCA provides an estimate of both $\Sig$ and $\Delt_k$, we can select the top features by applying sparse PCA to each $\hat{\Delt}_k$ obtained from iPCA. Table~\ref{tab:top_genes} lists the top miRNAs, genes, and genes affiliated with the selected CpG sites obtained from sparse PCA. Here, we used the \texttt{sparseEigen} R package \citep{benidis2016sparsepca} and chose the tuning parameter such that there were only $12$ non-zero features. 

Because the RNASeq data contributed most of the variation in iPC1, we did a literature search on the top five genes extracted by sparse PCA on $\hat{\Delt}_2$. Out of the top five genes, we found evidence in the biological literature, which links four of the five genes (the exception being SVOP) to AD \citep{carrette2003panel, li2017synaptic, han2014pituitary, espuny2017hallmarks}. While this is only a preliminary investigation into the importance of the genetic features obtained from iPCA, it is encouraging evidence and may potentially hint at candidate genes for future research.

\begin{table}[!t]
%\spacingset{1.1}
\centering
\footnotesize
\begin{tabular}{rlll}
  \hline
 & \textbf{miRNA} & \textbf{RNASeq} & \textbf{Methylation} \\ 
  \hline
1 & miR 216a & VGF & TMCO6 \\ 
  2 & miR 127 3p & SVOP & PHF3 \\ 
  3 & miR 124 & PCDHGC5 & BRUNOL4 \\ 
  4 & miR 30c & ADCYAP1 & OSCP1 \\ 
  5 & miR 143 & LINC01007 & GRIN2B \\ 
  6 & miR 27a & FRMPD2L1 & CASP9 \\ 
  7 & miR 603 & SLC30A3 & ZFP91; LPXN; ZFP91-CNTF \\ 
  8 & miR 423 3p & NCALD & CNP \\ 
  9 & miR 204 & S100A4 & YWHAE \\ 
  10 & miR 128 & AZGP1 & C11orf73 \\ 
  11 & miR 193a 3p & PAK1 & TMED10 \\ 
  12 & ebv miRBART14 & MAL2 & RELL1 \\ 
   \hline
\end{tabular}
\caption{Top genetic features obtained by applying Sparse PCA to each $\hat{\Delt}_k$ in ROSMAP analysis (using the multiplicative Frobenius iPCA estimator)} \label{tab:top_genes}
\end{table}

\section{Discussion} \label{sec:discussion}

As showcased in the simulations and the Alzheimer's disease case study, iPCA is not simply a theoretical construct that generalizes PCA to the integrated data setting. iPCA is also a useful and effective tool in practice to discover interesting joint patterns that are shared across multiple data sets. We believe that iPCA's strong empirical performance is due in part to its flexibility to handle a rich set of dependencies among samples and features concurrently, which is particularly appropriate and necessary for integrated data problems. This flexibility is inherently driven by the underlying matrix-variate normal model, and from a whitening perspective, we can view the iPCA model as a natural generalization of the classical PCA model and assumptions. More specifically, in relation to PCA, iPCA can be viewed as performing PCA on the concatenated feature-whitened data, having estimated the individual feature covariances $\Delt_k$ and the joint sample covariance $\Sig$ simultaneously.

%The iPCA model assumes that the variation in each data set is a separable function of the variation among samples and the variation among features. From a whitening perspective, the iPCA model can be seen as simply a generalization of the PCA model and assumptions. Moreover, because each data set is measured on the same set of samples, the variation among samples should naturally be the same across all data sets. We encode this shared sample variation as $\Sig$ and the individualistic feature variation as $\Delt_k$. 

%At a high-level, the iPCA model assumes that the variation in each data set can be decomposed into the variation among the samples and the variation among the features, and because each data set is measured on the same set of samples, then the variation among samples should be the same across all data sets. We encode this joint sample variation as $\Sig$ and the individualistic feature variation as $\Delt_k$. Moreover, we assume that this variance decomposition is separable

%This is analogous to performing PCA on the concatenated whitened data, having estimated the feature-whitening matrices $\Delt_k$ and the sample covariance $\Sig$ simultaneously.

While we discuss many potential penalized iPCA estimators for $\Sig$ and $\Delt_k$ in this work, we recommend that practitioners strongly consider using our new multiplicative Frobenius iPCA estimator. The simulations show that the Frobenius penalties are relatively robust to departures from model assumptions, and furthermore, the multiplicative Frobenius penalized estimator requires one less penalty parameter to tune and always converges to the global solution. Similar in spirit to other shrinkage penalties \citep{ledoit2004well}, the multiplicative Frobenius iPCA estimator is a shrinkage technique that induces a smoothness over the eigenvalues of the covariance matrices. However, its multiplicative form is especially unique and well-suited for integrated data problems as it performs automatic re-weighting of the integrated data sets and accounts for the concurrent estimation of $\Sig$ and $\Delt_k$ in each penalty term. Further investigation into the multiplicative Frobenius penalty and its statistical properties is left for future research.

Moreover, we believe that this work opens the door for research into the theoretical underpinnings of dimension reduction and data integration in ways that other non-model-based methods cannot. Building upon the model-based construction of iPCA, we show in Appendix~\ref{sec:s_consistency} that the additive $L_1$ correlation estimator satisfies one of the first statistical guarantees in the data integration context. However, this is only the beginning and certainly not meant to be the final investigation into theoretical properties of the iPCA estimators and data integration. Minimax results and consistency guarantees for the multiplicative Frobenius iPCA estimator in particular are challenging and will require careful follow up in future work.    

%Though it is not obvious as to why the multiplicative Frobenius iPCA estimator drastically outperforms the other data integration methods in the ROSMAP case study, we speculate that it is related to g-convexity and convergence to the global solution. If not this, then it might be that the multiplicative Frobenius penalty simply gives the appropriate scalings and interactions between penalty parameters and covariance estimates. We leave this as open question for future research.

Still, there are many other open avenues for future exploration. Analogous to PCA, one might imagine similar fruitful extensions of iPCA to higher-order data, functional data, and other structured applications. One could continue exploring g-convex penalties in different contexts and problems. Another interesting area for future research would be to develop a general framework to prove the consistency of g-convex estimators using the intrinsic manifold space, rather than the Euclidean space. We believe this intersection of g-convexity and statistical theory is a particularly ripe area of future research, but overall, in this work, we developed a theoretically sound and practical tool for performing dimension reduction in the integrated data setting, thus facilitating holistic analyses at a large scale.

\iffalse
\bigskip
\begin{center}
{\large\bf Supplementary Materials}
\end{center}

Appendix~\ref{sec:s_var_explained}: details regarding variance explained by iPCA; Appendix~\ref{sec:s_cov_est}: proofs, derivations, and algorithms related to covariance estimation; Appendix~\ref{sec:s_geodesic}: introduction to geodesic convexity and the proof of Theorem~\ref{multfrob_gconvex}; Appendix~\ref{sec:s_equiv}: proof of equivalence between iPCA (x Frob) and PCA when $K = 1$; Appendix~\ref{sec:s_consistency}: supplementary results and proof of Theorem~\ref{main_thm}; Appendix~\ref{sec:s_selecting_penalty_parameters}: details regarding penalty parameter selection; Appendix~\ref{sec:s_sims}: additional simulations; Appendix~\ref{sec:s_rosmap}: preprocessing details; R code can be found at \url{https://github.com/DataSlingers/iPCA}.
\fi

% Acknowledgements should go at the end, before appendices and references

\acks{The authors acknowledge support from NSF DMS-1264058, NSF DMS-1554821 and NSF NeuroNex-1707400. T.T. also acknowledges support from the NSF Graduate Research Fellowship Program DGE-1752814. The authors thank Dr. Joshua Shulman and Dr. David Bennett for help acquiring the ROS/MAP data and acknowledge support from NIH P30AG10161, RF1AG15819, R01AG17917, and R01AG36042 for this data. The authors thank Dr. Zhandong Liu and Dr. Ying-Wooi Wan for help with processing and interpreting the ROS/MAP data. The authors also thank Cole Franks for help with improving and strengthening Theorem~\ref{nonexistence_thm}.}

% Manual newpage inserted to improve layout of sample file - not
% needed in general before appendices/bibliography.

%\newpage

\appendix
\section{Variance Explained by iPCA} \label{sec:s_var_explained}

To ensure that the cumulative proportion of variance explained from Definition~\ref{def:ipca_var_explained} is a well-defined concept, we check that $\text{PVE}_{k,m}$ is a proportion and is an increasing function as $m$ increases. This implies that the marginal proportion of variance explained given in Definition~\ref{def:mar_ipca_var_explained} is also a proportion.

\begin{proposition} \label{prop:prop_var_explained}
The cumulative proportion of variance explained in $\X_k$ by the top $m$ iPCs, as defined in \eqref{ipca_var_explained}, satisfies the following properties: for each $k = 1, \dots, K$ and $m = 1, \dots, \min\{n, p_k\}$,
\begin{enumerate}
\item [(i)] $0 \leq \text{PVE}_{k,m} \leq 1$;
\item [(ii)] $\text{PVE}_{k,m-1} \leq \text{PVE}_{k,m}$.
\end{enumerate}
\end{proposition}

\begin{proof}
(i) Since the Frobenius norm is always non-negative, it is clear that $\text{PVE}_{k,m} \geq 0$. So it suffices to show that $\text{PVE}_{k,m} \leq 1$, or equivalently, $\norm{ ( \smash{\U^{(m)}} )^T \X_k \smash{\V_k^{(m)}}}_F^2 \leq \norm{\X_k}_F^2$.

By definition of the Frobenius norm, we have that
\begin{align*}
\norm{ ( \smash{\U^{(m)}} )^T \X_k \smash{\V_k^{(m)}}}_F^2  &= \sum_{i=1}^{m} \sum_{j = 1}^{m} \left( ( \U^{(m)} )^T \X_k \V_k^{(m)} \right)_{ij}^2 \\
&= \sum_{i=1}^{m} \sum_{j = 1}^{m} \left( \sum_{q=1}^{n} \sum_{r=1}^{p_k} (\U^{(m)})^T_{iq} \X_{k,qr} \V^{(m)}_{k,rj} \right)^2 \\
&\stackrel{[1]}{\leq} \sum_{i=1}^{n} \sum_{j = 1}^{p_k} \left( \sum_{q=1}^{n} \sum_{r=1}^{p_k} \U_{iq}^T \X_{k,qr} \V_{k,rj} \right)^2 \\
&= \norm{\U^T \X_k \V_k}_F^2 \\
&\stackrel{[2]}{=} \norm{\X_k}_F^2.
\end{align*}
Here, [2] holds by the orthogonality of $\U$ and $\V_k$, and [1] follows from the facts that $m \leq \min\{ n, p_k\}$, $\U^{(m)}$ and $\V_k^{(m)}$ are precisely the first $m$ columns of $\U$ and $\V_k$ respectively, and the summand is non-negative. This concludes the proof of part (i).

\vspace{6pt}
\noindent
(ii) We follow a similar argument as part (i) to see that
\begin{align*}
\norm{ ( \smash{\U^{(m-1)}} )^T \X_k \smash{\V_k^{(m-1)}}}_F^2 &=  \sum_{i=1}^{m-1} \sum_{j = 1}^{m-1} \left( \sum_{q=1}^{n} \sum_{r=1}^{p_k} (\U^{(m-1)})^T_{iq} \X_{k,qr} \V^{(m-1)}_{k,rj} \right)^2 \\
&\leq \sum_{i=1}^{m} \sum_{j = 1}^{m} \left( \sum_{q=1}^{n} \sum_{r=1}^{p_k} (\U^{(m)})^T_{iq} \X_{k,qr} \V^{(m)}_{k,rj} \right)^2 \\
&= \norm{ ( \U^{(m)} )^T \X_k \V_k^{(m)}}_F^2.
\end{align*}
This implies that $\text{PVE}_{k,m-1} \leq \text{PVE}_{k,m}$.
\end{proof}

Next, we claim that Definition~\ref{def:ipca_var_explained} is a generalization of the cumulative proportion of variance explained in PCA. Recall that in PCA, if $\X = \U \D \V^T$ is the SVD of $\X$, then the cumulative proportion of variance explained by the top $m$ PCs is $(\sum_{i=1}^{m} d_i^2)/(\sum_{i=1}^{p} d_i^2)$, where $\D = \text{diag}(d_1, \dots, d_p)$. We can rewrite this as
\begin{align} \label{pca_var_explained}
\frac{\sum_{i=1}^{m} d_i^2}{\sum_{i=1}^{p} d_i^2} = \frac{\norm{\smash{\D^{(m)}}}_F^2}{\norm{\X}_F^2} = \frac{\norm{(\smash{\U^{(m)}})^T \X \smash{\V^{(m)}}}_F^2}{\norm{\X}_F^2}
\end{align}
using properties of the SVD. Since $\U$ are the PC scores, and $\V$ are the PC loadings from PCA, then Definition~\ref{def:ipca_var_explained} is indeed a natural definition in the sense that it generalizes the PCA cumulative proportion of variance explained. 

\section{Difference between iPCA and Concatenated PCA} \label{sec:s_cov_est}

To formalize the difference between iPCA and concatenated PCA, we examine the bias of the eigenvectors from concatenated PCA when sample and feature dependencies simultaneously obscure each other under the matrix-variate normal model.

\bias*

\begin{proof}
By construction of their methodology, we know that the iPCA scores $\U$ and the concatenated PCA scores $\tilde{\U}$ satisfy the following compact SVDs:
\begin{align*}
\X \Delt^{-1/2} &= \U \D \V^T, \\
\X &= \tilde{\U} \tilde{\D} \tilde{\V}^T. 
\end{align*}

We also know from \eqref{eq:ipca_model_equiv_z} that 
\begin{align}
\X = \Sig^{1/2} \Z \Delt^{1/2}. \label{eq:bias_x}
\end{align}

Using the orthogonality of the singular vectors and plugging in \eqref{eq:bias_x} for $\X$ in the compact SVD formulas then gives
\begin{align*}
\U &= \Sig^{1/2} \Z \V \D^{-1}, \\
\tilde{\U} &= \Sig^{1/2} \Z \Delt^{1/2} \tilde{\V} \tilde{\D}^{-1}.
\end{align*}

Therefore,
\begin{align*}
\U - \tilde{\U} = \Sig^{1/2} \Z \left(\V \D^{-1} - \Delt^{1/2} \tilde{\V} \tilde{\D}^{-1} \right).
\end{align*}
\end{proof}

\section{Covariance Estimation for iPCA} \label{sec:s_cov_est}

In this section, we provide the proofs and derivations related to the unpenalized and penalized maximum likelihood estimators under the iPCA model.

\subsection{Unpenalized Maximum Likelihood Estimators} \label{sec:s_unpenalized_mle}

We begin this section by deriving the log-likelihood equation associated with the iPCA population model \eqref{pop_model}.

Recall that the probability density function for each matrix-variate normal model $(k = 1, \dots, K)$ is given by
\begin{align*}
f\left(\X_{k} | \Sig, \Delt_k \right) = \left(2\pi\right)^{-\frac{np_k}{2}} | \Sig |^{-\frac{p_k}{2}} | \Delt_k |^{-\frac{n}{2}} \mathrm{exp}\left( -\frac{1}{2} \mathrm{tr} \left( \Sig^{-1} \X_{k} \Delt_k^{-1} \X_{k}^{T} \right) \right).
\end{align*}
Hence, the log-likelihood function is
\begin{align*}
\ell(\Sig^{-1}, \Delt_1^{-1}, \dots, \Delt_K^{-1}) &= \sum_{k = 1}^{K} \Big[ -\frac{np_k}{2} \log (2\pi) + \frac{p_k}{2} \log | \Sig^{-1} | + \frac{n}{2} \log | \Delt_k^{-1} | \\
& \qquad \qquad - \frac{1}{2} \mathrm{tr} \left( \Sig^{-1} \X_{k} \Delt_k^{-1} \X_{k}^{T} \right) \Big] \\
&\propto p \log | \Sig^{-1} | + n\sum_{k = 1}^{K} \log | \Delt_k^{-1} | - \sum_{k = 1}^{K} \mathrm{tr}\left( \Sig^{-1} \X_{k} \Delt_k^{-1} \X_{k}^{T} \right).
\end{align*}
The unpenalized MLEs are designed to solve the optimization problem
\begin{align} \label{s_ff_loglike}
\argmin_{\Sig^{-1}, \Delt_1^{-1}, \dots, \Delt_K^{-1}} -\ell(\Sig^{-1}, \Delt_1^{-1}, \dots, \Delt_K^{-1})
\end{align}
Taking partial derivatives with respect to each covariance parameter thus gives the following:

\unpenalizedmle*

%\begin{restatable}{lemma}{unpenalizedmle}\label{unpenalized_mle}
%The unpenalized MLEs of $\mmu_1, \dots, \mmu_K, \Sig, \Delt_1, \dots, \Delt_K$ satisfy
%\begin{align}
%\hat{\mmu}_k &= \frac{\X_k^T \mathbf{1}_n}{n} \quad \forall \: k = 1, \dots, K \\
%\hat{\Sig} &= \frac{1}{p} \sum_{k=1}^{K} \left(\X_{k} - \mathbf{1}_{n} \hat{\mmu}_{k}^{T} \right) \smash[t]{\hat{\Delt}}^{-1}_k \left(\X_{k} - \mathbf{1}_{n} \hat{\mmu}_{k}^{T} \right)^{T} \label{unpenalized_sig}\\
%\hat{\Delt}_k &= \frac{1}{n} \left(\X_{k} - \mathbf{1}_{n} \hat{\mmu}_{k}^{T} \right)^{T} \smash[t]{\hat{\Sig}}^{-1} \left(\X_{k} - \mathbf{1}_{n} \hat{\mmu}_{k}^{T} \right) \quad \forall \: k = 1, \dots, K \label{unpenalized_delt}.
%\end{align}
%\end{restatable}

\begin{proof}
Taking the partial derivatives of the log-likelihood equation with respect to $\Sig^{-1}$ and $\Delt_k^{-1}$ respectively yields the gradient equations:
\begin{align*}
\frac{\partial \ell}{\partial \Sig^{-1}} &= p \Sig - \sum_{k=1}^{K} \X_{k} \Delt_k^{-1} \X_{k}^{T}\\
\frac{\partial \ell}{\partial \Delt_k^{-1}} &= n \Delt_k - \X_{k}^{T} \Sig^{-1} \X_{k}.
\end{align*}
Setting the gradient equations equal to $\mathbf{0}$ gives the desired result.
\end{proof}

Assuming that the unpenalized MLEs exist, we can compute the unpenalized MLEs of $\Sig$ and $\Delt_1, \dots, \Delt_K$ via Algorithm~\ref{alg:unpenalized}, which is analogous to the Flip-Flop algorithm provided in \citet{dutilleul1999mle}.

%Notice that the unpenalized MLE of $\mmu_k$ is the vector of column means of $\X_k$, so $\X_k - \mathbf{1}_n \hat{\mmu}_k^T$ is simply $\X_k$ with its columns centered to mean $0$. Assuming that the unpenalized MLEs exist, we can proceed to compute the unpenalized MLEs of $\Sig$ and $\Delt_1, \dots, \Delt_K$ via a Flip-Flop algorithm (also known as block coordinate descent) analogous to \citet{dutilleul1999mle}, which iteratively optimizes over each of the parameters while keeping all other parameters fixed. This algorithm is provided in \ref{alg:unpenalized}.

\begin{algorithm}
\caption{Flip-Flop Algorithm for iPCA Unpenalized MLEs} \label{alg:unpenalized}
\begin{algorithmic} [1]
\State Assume that $\X_k$ has been centered appropriately and that the unpenalized MLEs exist.
\State Initialize $\hat{\Sig}$, $\hat{\Delt}_1, \dots, \hat{\Delt}_K$ to be symmetric positive definite.
\While{not converged}
\State Update $\hat{\Sig} = \frac{1}{p} \sum_{k=1}^{K} \X_k \smash[t] {\hat{\Delt}}^{-1}_k \X_k^T$
\For{$k = 1, \dots, K$}
\State Update $\hat{\Delt}_k = \frac{1}{n} \X_k^T \smash[t] {\hat{\Sig}}^{-1} \X_k$
\EndFor
\EndWhile
\end{algorithmic}
\end{algorithm}

The next theorem provides very restrictive conditions for which the unpenalized MLEs exist, but in almost all practical cases, the unpenalized maximum likelihood problem is ill-posed for iPCA.

\nonexistence*

\begin{proof}
$(ii)$ Suppose that $\X_k$ has not been column-centered but that $\mathrm{rank}(\tilde{\X}) = n$ and $\mathrm{rank}(\X_k) = p_k$ for $k = 1, \dots, K$. We will first prove part $(b)$, so assume also that the unpenalized log-likelihood is bounded. From the unpenalized MLEs in Lemma~\ref{unpenalized_mle}, it is easy to see that $\hat{\Sig}$ and $\hat{\Delt}_1, \dots, \hat{\Delt}_K$ are symmetric positive semidefinite. We next claim that $\hat{\Sig}$ and $\hat{\Delt}_1, \dots, \hat{\Delt}_K$ are full rank and hence positive definite. To prove this claim, we proceed by induction on the unpenalized Flip-Flop iteration counter $m$. Let $\hat{\Sig}^{m}$ and $\hat{\Delt}_k^{m}$ denote the $m^{th}$ Flip-Flop update of $\hat{\Sig}$ and $\hat{\Delt}_k$, respectively.

Clearly, the base case holds since $\Sig^{0}$ and $\Delt_1^{0}, \ldots, \Delt_K^{0}$ are initialized to be symmetric positive definite. So suppose $\Sig^{m}$ and $\Delt_1^{m} \ldots, \Delt_K^{m}$ are full rank. Then the unpenalized Flip-Flop iterates at the $(m+1)^{th}$-update step are
\begin{align*}
\hat{\Sig}^{m+1} &= \frac{1}{p} \sum_{k=1}^{K} \X_{k} (\hat{\Delt}_{k}^{m})^{-1} \X_{k}^{T} = \frac{1}{p} \tilde{\X} (\tilde{\Delt}^{m})^{-1} \tilde{\X}^T, \\
\hat{\Delt}_{k}^{m+1} &= \frac{1}{n} \X_{k}^{T} (\hat{\Sig}^{m+1})^{-1} \X_{k},
\end{align*}
where $\tilde{\X} = [\X_1, \dots, \X_K]$ and $\tilde{\Delt}^{m} = \mathrm{diag}(\hat{\Delt}_1^{m}, \dots, \hat{\Delt}_K^{m})$. 

Therefore, we have that
\begin{align*}
\mathrm{rank}\left(\hat{\Sig}^{m+1}\right) = \mathrm{rank}\left( \tilde{\X} (\tilde{\Delt}^{m})^{-1} \tilde{\X}^T \right) = \mathrm{rank} \left( \tilde{\X} (\tilde{\Delt}^{m})^{-\frac{1}{2}} \right) = \mathrm{rank} (\tilde{\X}) = n.
\end{align*}
Here, the second equality holds because $\mathrm{rank}(\A^T \A) = \mathrm{rank}(\A) = \mathrm{rank}(\A^T)$. The third equality holds because $(\tilde{\Delt}^{m})^{-\frac{1}{2}}$ is full rank (i.e. rank = $p$) by the inductive hypothesis, and the last equality holds by hypothesis. Thus, $\hat{\Sig}^{m+1}$ is positive definite. 

Similarly, for each $k = 1, \dots, K$,
\begin{align*}
\mathrm{rank} \left( \hat{\Delt}_{k}^{m+1} \right) = \mathrm{rank}\left( \X_k^T (\hat{\Sig}^{m+1})^{-1} \X_k\right) = \mathrm{rank}(\X_k) = p_k.
\end{align*}
So for each $k = 1, \dots, K$, $\hat{\Delt}_{k}^{m+1}$ is positive definite.

By induction, $\hat{\Sig}^m, \hat{\Delt}_{1}^m, \dots, \hat{\Delt}_{K}^m \succ 0$ for each iterate of Algorithm~\ref{alg:unpenalized}, and by Corollary~\ref{unpenalized_conv}, Algorithm~\ref{alg:unpenalized} converges to the global solution of \eqref{s_ff_loglike}. Thus, the unpenalized MLEs for $\Sig$ and $\Delt_1^{-1}, \dots, \Delt_K^{-1}$ exist under the assumptions given in part $(ii)(b)$.

To prove part $(a)$ of $(ii)$, let $\X_1, \dots, \X_K$ be given, and assume that $n \neq p_k$ for some $k = 1, \dots, K$. By the rank assumptions, it must be that $p_k \leq n \leq p$ for each $k = 1, \dots, K$. Define $\mathcal{K}^0 = \{ k \in \{1, \dots, K\} \mid p_k < n \}$ and $\mathcal{K}^1 = \{ k \in \{1, \dots, K\} \mid p_k = n \}$. Note that by assumption, $\mathcal{K}^0$ is not empty, and for each $k \in \mathcal{K}^0$, $\X_k^T$ has a kernel. We can then choose bases for $\mathbb{R}^n$ and $\mathbb{R}^{p_k}$ ($k = 1, \dots, K$) such that the first row of $\X_k$ is zero for each $k \in \mathcal{K}^0$ and the first row of $\X_k$ is the first standard basis vector, denoted $\ff_{k, 1}$, of $\mathbb{R}^{p_k}$ for each $k \in \mathcal{K}^1$. Let $\ee_1$ denote the first standard basis vector in $\mathbb{R}^n$. 

We will next construct a family of diagonal matrices for $\Sig, 
\Delt_1, \dots, \Delt_K$ that sends the unpenalized log-likelihood to infinity. Specifically, let $\Sig = e^{t \ee_1 \ee_1^T}$, $\Delt_k = \mathbf{I}_{p_k}$ for $k \in \mathcal{K}^0$, and $\Delt_k = e^{-t \ff_{k, 1} \ff_{k, 1}^T}$ for $k \in \mathcal{K}^1$, where $e$ denotes the matrix exponential. Then for $k \in \mathcal{K}^0$,
\begin{align*}
\mathrm{tr}\left( \Sig^{-1} \X_k \Delt_k^{-1} \X_k^T \right) = \mathrm{tr}\left( \Sig^{-1} \X_k \X_k^T \right) = \sum_{i = 2}^n \x_{k, i}^T \x_{k, i},
\end{align*}
and for $k \in \mathcal{K}^1$,
\begin{align*}
\mathrm{tr}\left( \Sig^{-1} \X_k \Delt_k^{-1} \X_k^T \right) = 1 + \sum_{i = 2}^n \left( (e^{t} - 1) (\x_{k, i}^T \ff_{k, 1})^2 + \x_{k, i}^T \x_{k, i} \right),
\end{align*}
where $\x_{k, i}$ denotes the $i^{th}$ row of $\X_k$.

Therefore, for every $t \in \mathbb{R}$, we have that
\begin{align*}
\ell(\Sig^{-1}, \Delt_1^{-1}, \dots, \Delt_K^{-1}) &= -p \log | \Sig | - n \sum_{k \in \mathcal{K}^0} \log | \Delt_k | - n \sum_{k \in \mathcal{K}^1} \log | \Delt_k | \\
&\qquad - \sum_{k \in \mathcal{K}^0} \mathrm{tr}\left( \Sig^{-1} \X_k \Delt_k^{-1} \X_k^T \right) - \sum_{k \in \mathcal{K}^1} \mathrm{tr}\left( \Sig^{-1} \X_k \Delt_k^{-1} \X_k^T \right) \\
&= -p \log(e^t) - n \sum_{k \in \mathcal{K}^0} 0 - n \sum_{k \in \mathcal{K}^1} \log(e^{-t})  \\
&\qquad - \sum_{k \in \mathcal{K}^0} \sum_{i = 2}^n \x_{k, i}^T \x_{k, i} - \sum_{k \in \mathcal{K}^1} \left( 1 + \sum_{i = 2}^n \left( (e^{t} - 1) (\x_{k, i}^T \ff_{k, 1})^2 + \x_{k, i}^T \x_{k, i} \right) \right) \\ 
&= t(n |\mathcal{K}^1| - p) - \sum_{k \in \mathcal{K}^0} \sum_{i = 2}^n \x_{k, i}^T \x_{k, i} \\
&\qquad - \sum_{k \in \mathcal{K}^1} \left( 1 + \sum_{i = 2}^n \left( (e^{t} - 1) (\x_{k, i}^T \ff_{k, 1})^2 + \x_{k, i}^T \x_{k, i} \right) \right).
\end{align*}

As $t \rightarrow -\infty$, the first term approaches $\infty$ since $n |\mathcal{K}^1| < p$ while the second and third terms approach a constant. Thus, $\ell(\Sig^{-1}, \Delt_1^{-1}, \dots, \Delt_K^{-1})$ is unbounded, as desired.

\vspace{4pt}
\noindent
$(i)$ Suppose that we have centered each data matrix $\X_k$ to have column means $\mathbf{0}$. Recall that the unpenalized MLEs for $\Sig$ and $\Delt_1, \dots, \Delt_K$ are obtained by
\begin{align*}
\hat{\Sig} &= \frac{1}{p} \sum_{k=1}^{K} \X_{k} \smash[t] {\hat{\Delt}}^{-1}_k \X_{k}^{T} = \frac{1}{p} \tilde{\X} \smash[t] {\tilde{\Delt}}^{-1} \tilde{\X}^T, \\
\hat{\Delt}_k &= \frac{1}{n} \X_{k}^{T} \smash[t] {\hat{\Sig}}^{-1} \X_{k},
\end{align*}
where $\smash[t] {\tilde{\Delt}} := \text{diag}(\smash[t] {\hat{\Delt}}_1, \dots, \smash[t] {\hat{\Delt}}_K)$.

For the sake of contradiction, suppose that there exists $n, p_1, \ldots, p_k$ such that $\Sig^{-1}$ and $\Delt_1^{-1}, \ldots, \Delt_K^{-1}$ are symmetric positive definite. By the same argument as in part $(ii)(b)$, $\mathrm{rank}(\hat{\Sig}) = \mathrm{rank}(\tilde{\X})$, but since each $\X_k$ has been centered to have column means $\mathbf{0}$, then the $n$ rows of $\tilde{\X}$ are linearly dependent. Hence, $\mathrm{rank}(\hat{\Sig}) = \mathrm{rank}(\tilde{\X}) < n$, which implies that $\hat{\Sig}$ can never be positive definite, a contradiction. Therefore, the unpenalized MLEs never exist if each $\X_k$ has be column-centered to have mean $\mathbf{0}$.
\end{proof}

\begin{remark}
Notice that if the unpenalized MLEs for $\Sig, \Delt_1, \dots, \Delt_K$ exist, then 
\begin{align*}
n &\stackrel{[1]}{=} \mathrm{rank}(\hat{\Sig}) \stackrel{[2]}{=} \mathrm{rank}(\tilde{\X}) \stackrel{[3]}{\leq} \min\{ n, p \} \\
p_k &\stackrel{[1]}{=} \mathrm{rank}(\hat{\Delt}_k) \stackrel{[2]}{=} \mathrm{rank}(\X_k) \stackrel{[3]}{\leq} \min \{ n, p_k \} \qquad \forall \: k = 1, \dots, K
\end{align*}
where [1] follows from positive definiteness of $\hat{\Sig}$ and $\hat{\Delt}_k$, [2] follows the same argument as in the proof of Theorem~\ref{nonexistence_thm}, and [3] holds by properties of rank and the dimensions of $\tilde{\X}$ and $\X_k$. By combining these rank constraints with the result of Theorem~\ref{nonexistence_thm}, we see that if the unpenalized MLEs for $\Sig, \Delt_1, \dots, \Delt_K$ exist, it must be that $p_k = n \leq p$ for each $k = 1, \dots, K$.
\end{remark}

Proposition~\ref{mle_naive} can be proved in the same way as Lemma~\ref{unpenalized_mle}, so we omit the proof.

\subsection{Penalized Maximum Likelihood Estimators} \label{sec:s_penalized_mle}

In this section, we develop Flip-Flop algorithms and analyze the convergence results for both Frobenius and $L_1$ penalties. For the sake of notation, let
\begin{align*}
-\ell_{P}(\Sig^{-1}, \Delt_1^{-1}, \ldots, \Delt_K^{-1}) &= -p \log | \Sig^{-1} | - n \sum_{k=1}^{K} \log | \Delt_k^{-1} | + \sum_{k = 1}^{K} \mathrm{tr} \left( \Sig^{-1} \X_{k} \Delt_k^{-1} \X_{k}^{T} \right) \nonumber \\
& \qquad \qquad + P(\Sig^{-1}, \Delt_1^{-1}, \dots, \Delt_K^{-1})
\end{align*}

We give the overarching framework of the Flip-Flop algorithms in Algorithm~\ref{alg:ff}, and we show in Theorem~\ref{converge_to_stationary} that Algorithm~\ref{alg:ff} can be used to find a local solution of \eqref{penalized_mle} for a certain class of penalties, which includes the additive Frobenius, multiplicative Frobenius, and additive $L_1$ penalties. The main idea behind the proof is to use convexity and view Algorithm~\ref{alg:ff} as a block coordinate descent algorithm so that each update of the Flip-Flop algorithm is a descent direction.  

\begin{algorithm}
\caption{Outline of Flip-Flop Algorithm for Penalized iPCA Covariance Estimators}\label{alg:ff}
\begin{algorithmic} [1]
\State Center the columns of $\X_1, \dots, \X_K$, and initialize $\hat{\Sig}$, $\hat{\Delt}_1, \dots, \hat{\Delt}_K$ to be positive definite.
\While{not converged}
\State Update $\Sig$ while fixing all other variables:

\setlength\parindent{2cm} $\displaystyle \smash[t] {\hat{\Sig}}^{-1} = \argmin_{\Sig^{-1} \succ 0} ~ -\ell_{P}(\Sig^{-1}, \smash[t] {\hat{\Delt}}^{-1}_1, \ldots, \smash[t] {\hat{\Delt}}^{-1}_K)$

\For{$k = 1, \dots, K$}
\State Update $\Delt_k$ while fixing all other variables:

\setlength\parindent{2cm} $\displaystyle \smash[t] {\hat{\Delt}}^{-1}_k = \argmin_{\Delt_k^{-1} \succ 0} ~ -\ell_{P}(\smash[t] {\hat{\Sig}}^{-1}, \smash[t] {\hat{\Delt}}^{-1}_1, \ldots, \smash[t] {\hat{\Delt}}^{-1}_{k-1}, \Delt_k^{-1}, \smash[t] {\hat{\Delt}}^{-1}_{k+1}, \ldots, \smash[t] {\hat{\Delt}}^{-1}_K)$
\EndFor
\EndWhile
\end{algorithmic}
\end{algorithm}

\localconvergence*

\begin{proof}
Suppose either $P(\Sig^{-1}, \Delt_1^{-1}, \dots, \Delt_K^{-1})$ is a differentiable convex function with respect to each coordinate or $P(\Sig^{-1}, \Delt_1^{-1}, \dots, \Delt_K^{-1}) = P_{0}(\Sig^{-1}) + \sum_{k=1}^{K} P_k(\Delt_k^{-1})$ where $P_i$ is a (non-differentiable) convex function for each $i = 1, \dots, K$. 

Let $\ell(\Sig^{-1}, \Delt_1^{-1}, \dots, \Delt_K^{-1}) = p \log | \Sig^{-1} | + n \sum_{k=1}^{K} \log | \Delt_k^{-1} | - \sum_{k = 1}^{K} \mathrm{tr} \left( \Sig^{-1} \X_{k} \Delt_k^{-1} \X_{k}^{T} \right)$. Since the domain of $-\ell$ is open and $-\ell$ is Gateaux-differentiable on its domain, then $-\ell_P$ is regular in the domain of $-\ell_P$ by Lemma 3.1 in \citet{tseng2001convergence}. 

Note also that since the log-determinant is a strictly concave function on the set of symmetric positive definite matrices, the trace function is linear, and the penalty term is convex with respect to each coordinate by hypothesis, then 
\begin{itemize}
\item $-\ell_P$ is strictly convex in $\Sig^{-1}$ with $\Delt_1^{-1}, \dots, \Delt_K^{-1}$ fixed, and
\item for each $k = 1, \dots, K$, $-\ell_P$ is strictly convex in in $\Delt_k^{-1}$ with $\Sig^{-1}, \Delt_j^{-1}$, $j \neq k$ fixed.
\end{itemize}

Because $-\ell_P$ is regular and strictly convex with respect to each coordinate, it follows that the Flip-Flop algorithm corresponding to \eqref{penalized_mle} converges to a stationary point of the objective function by Theorem 4.1(c) in \citet{tseng2001convergence}.
\end{proof}

In the following sections, we will derive the specific form of the Flip-Flop updates for each of the penalized iPCA estimators.

\subsubsection{Additive Frobenius Penalized Flip-Flop Estimator} \label{sec:s_addfrob}

To compute the additive Frobenius penalized estimator, we solve
\begin{align}\label{addfrob}
\smash[t] {\hat{\Sig}}^{-1}, \smash[t] {\hat{\Delt}}^{-1}_1, \dots, \smash[t] {\hat{\Delt}}^{-1}_K &=  \smash{\argmax_{\substack{\Sig^{-1} \succ 0 \\ \Delt_1^{-1}, \dots, \Delt_K^{-1} \succ 0}}} \Big \{ p \log | \Sig^{-1} | + n \sum_{k=1}^{K} \log | \Delt_k^{-1} | - \sum_{k = 1}^{K} \mathrm{tr}\left( \Sig^{-1} \X_{k} \Delt_k^{-1} \X_{k}^{T} \right) \nonumber \\
& \qquad \qquad \qquad \qquad \qquad - \lambda_{\Sig} \norm{\Sig^{-1}}_F^2 - \sum_{k=1}^{K} \lambda_k \norm{\Delt_k^{-1}}_F^2 \Big \}.
\end{align}
The gradient equations corresponding to \eqref{addfrob} are given by
\begin{align}
p \hat{\Sig} - \sum_{k=1}^{K} \X_{k} \smash[t] {\hat{\Delt}}^{-1}_k \X_{k}^{T} - 2 \lambda_{\Sig} \smash[t] {\hat{\Sig}}^{-1} &= 0, \label{add_grad_eq1} \\ 
n \hat{\Delt}_k - \X_{k}^{T} \smash[t] {\hat{\Sig}}^{-1} \X_{k} - 2 \lambda_{k} \smash[t] {\hat{\Delt}}^{-1}_k &= 0 \qquad \forall \: k = 1, \dots, K \label{add_grad_eq2}.
\end{align}

\begin{lemma} \label{lemma:addfrob_explicit}
If gradient equations \eqref{add_grad_eq1} and \eqref{add_grad_eq2} are satisfied, then
\begin{align}
\hat{\Sig} &= \left( \frac{2 \lambda_{\Sig}}{p} \I + \frac{1}{4 p^2} \bigg( \sum_{k=1}^{K} \X_{k} \smash[t] {\hat{\Delt}}^{-1}_k \X_{k}^{T} \bigg)^2 \right)^\frac{1}{2} + \frac{1}{2p} \sum_{k=1}^{K} \X_{k} \smash[t] {\hat{\Delt}}^{-1}_k \X_{k}^{T} \label{addfrob_explicit_sol1} \\
\hat{\Delt}_k &= \left( \frac{2 \lambda_k}{n} \I + \frac{1}{4 n^2} \left( \X_{k}^{T} \smash[t] {\hat{\Sig}}^{-1} \X_{k} \right)^2 \right)^\frac{1}{2} + \frac{1}{2n} \X_{k}^{T} \smash[t] {\hat{\Sig}}^{-1} \X_{k} \qquad \forall \: k = 1, \dots, K \label{addfrob_explicit_sol2}
\end{align}
\end{lemma}

\begin{proof}
Define $\hat{\mathbf{S}}_{\Sig} = \sum_{k=1}^{K} \X_{k} \smash[t] {\hat{\Delt}}^{-1}_k \X_{k}^{T}$, and right multiply \eqref{add_grad_eq1} by $\hat{\Sig}$ to obtain
\begin{align}
p \hat{\Sig}^2 - \hat{\mathbf{S}}_{\Sig} \hat{\Sig} - 2 \lambda_{\Sig} \I &= \mathbf{0} \\
\implies \hat{\Sig}^2 - \frac{1}{p}\hat{\mathbf{S}}_{\Sig} \hat{\Sig} + \frac{1}{4p^2} \hat{\mathbf{S}}_{\Sig}^2 &= \frac{2 \lambda_{\Sig}}{p} \I + \frac{1}{4p^2} \hat{\mathbf{S}}_{\Sig}^2. \label{addfrob_explicit_eq1}
\end{align}
On the other hand, we can multiply \eqref{add_grad_eq1} by $\hat{\Sig}$ on the left to obtain
\begin{align}
p \hat{\Sig}^2 - \hat{\Sig} \hat{\mathbf{S}}_{\Sig} - 2 \lambda_{\Sig} \I &= \mathbf{0} \\
\implies \hat{\Sig}^2 - \frac{1}{p} \hat{\Sig} \hat{\mathbf{S}}_{\Sig} + \frac{1}{4p^2} \hat{\mathbf{S}}_{\Sig}^2 &= \frac{2 \lambda_{\Sig}}{p} \I + \frac{1}{4p^2} \hat{\mathbf{S}}_{\Sig}^2. \label{addfrob_explicit_eq2}
\end{align}
Adding \eqref{addfrob_explicit_eq1} and \eqref{addfrob_explicit_eq2} and then dividing by 2 gives
\begin{align*}
\hat{\Sig}^2 - \frac{1}{2p}\hat{\mathbf{S}}_{\Sig} \hat{\Sig} - \frac{1}{2p} \hat{\Sig} \hat{\mathbf{S}}_{\Sig} + \frac{1}{4p^2} \hat{\mathbf{S}}_{\Sig}^2 &= \frac{2 \lambda_{\Sig}}{p} \I + \frac{1}{4p^2} \hat{\mathbf{S}}_{\Sig}^2 \\
\implies \left( \hat{\Sig} -  \frac{1}{2p} \hat{\mathbf{S}}_{\Sig} \right)^2 &= \frac{2 \lambda_{\Sig}}{p} \I + \frac{1}{4p^2} \hat{\mathbf{S}}_{\Sig}^2.
\end{align*}
Since $\frac{2 \lambda_{\Sig}}{p} \I + \frac{1}{4p^2} \hat{\mathbf{S}}_{\Sig}^2$ is positive definite, its has a unique square root. Thus,
\begin{align*}
\hat{\Sig} = \left( \frac{2 \lambda_{\Sig}}{p} \I + \frac{1}{4p^2} \hat{\mathbf{S}}_{\Sig}^2 \right)^{1/2} + \frac{1}{2p} \hat{\mathbf{S}}_{\Sig}.
\end{align*}

We can similarly rearrange \eqref{add_grad_eq2} and complete the square to obtain \eqref{addfrob_explicit_sol2}.
\end{proof}

We now have the machinery to prove Proposition~\ref{prop:addfrob_eig}, which gives us the form of each update in the additive Frobenius Flip-Flop algorithm.

\begin{proposition} \label{prop:addfrob_eig}
$\hat{\Sig}$ and $\hat{\Delt}_1, \dots, \hat{\Delt}_K$ are solutions to the gradient equations in \eqref{add_grad_eq1} and \eqref{add_grad_eq2} if and only if
\begin{align}
\hat{\Sig} &= \U \left[ \frac{1}{2p} \left( \boldsymbol{\Gamma} + \left( \boldsymbol{\Gamma}^2 + 8 \lambda_{\Sig}p \I \right)^{\frac{1}{2}} \right) \right] \U^T \label{prop:addfrob_eig_eq1} \\
\text{and } ~ \hat{\Delt}_k &= \V_k \left[ \frac{1}{2n} \left( \boldsymbol{\Phi}_k + \left( \boldsymbol{\Phi}_k^2 + 8 \lambda_{k} n \I \right)^{\frac{1}{2}} \right) \right] \V_k^T \qquad \forall \: k = 1, \dots, K, \label{prop:addfrob_eig_eq2}
\end{align}
where $\U, \V_k, \boldsymbol{\Gamma},$ and $\boldsymbol{\Phi}_k$ are defined by the eigendecompositions $\sum_{k=1}^{K} \X_k \smash[t] {\hat{\Delt}}^{-1}_k \X_k^T = \U \boldsymbol{\Gamma} \U^T$ and $\X_k^T \smash[t] {\hat{\Sig}}^{-1} \X_k = \V_k \boldsymbol{\Phi}_k \V_k^T$.
\end{proposition}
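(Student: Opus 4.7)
The plan is that this proposition is essentially a corollary of Proposition~\ref{prop:addfrob_explicit} together with the spectral calculus for symmetric matrices: once we have the closed-form expressions \eqref{addfrob_explicit_sol1}--\eqref{addfrob_explicit_sol2}, we can read off the eigen-structure by diagonalizing the matrices inside the square roots.

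For the forward direction ($\Rightarrow$), I would start from \eqref{addfrob_explicit_sol1}. The matrix $\hat{\mathbf{S}}_\Sig := \sum_{k=1}^K \X_k \hat{\Delt}_k^{-1} \X_k^T$ is symmetric positive semidefinite (being a sum of terms of the form $\A \A^T$ with $\hat{\Delt}_k^{-1}\succ 0$), so it admits an eigendecomposition $\hat{\mathbf{S}}_\Sig = \U\boldsymbol{\Gamma}\U^T$. Because any analytic function $f$ of a symmetric matrix acts through its eigenvalues, we have $\hat{\mathbf{S}}_\Sig^2 = \U\boldsymbol{\Gamma}^2\U^T$ and hence
\begin{align*}
\frac{2\lambda_\Sig}{p}\I + \frac{1}{4p^2}\hat{\mathbf{S}}_\Sig^2 \;=\; \U\Bigl[\tfrac{1}{4p^2}\bigl(\boldsymbol{\Gamma}^2 + 8\lambda_\Sig p\I\bigr)\Bigr]\U^T.
\end{align*}
The right-hand side is symmetric positive definite (since $\lambda_\Sig>0$), so its unique positive square root equals $\U\bigl[\tfrac{1}{2p}(\boldsymbol{\Gamma}^2+8\lambda_\Sig p\I)^{1/2}\bigr]\U^T$. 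Adding $\tfrac{1}{2p}\hat{\mathbf{S}}_\Sig = \U[\tfrac{1}{2p}\boldsymbol{\Gamma}]\U^T$ gives \eqref{prop:addfrob_eig_eq1} exactly. Applying the same diagonalization argument to \eqref{addfrob_explicit_sol2}, using $\X_k^T\hat{\Sig}^{-1}\X_k = \V_k\Phi_k\V_k^T$, yields \eqref{prop:addfrob_eig_eq2}.

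For the converse ($\Leftarrow$), I would substitute the spectral form \eqref{prop:addfrob_eig_eq1} into the gradient equation \eqref{add_grad_eq1}. Since $\hat{\Sig}$ and $\hat{\mathbf{S}}_\Sig$ are simultaneously diagonalized by $\U$, the inverse $\hat{\Sig}^{-1}$ has the form $\U\mathbf{D}^{-1}\U^T$ where $\mathbf{D} = \tfrac{1}{2p}(\boldsymbol{\Gamma}+(\boldsymbol{\Gamma}^2+8\lambda_\Sig p\I)^{1/2})$, and the verification reduces to a scalar identity on each eigenvalue $\gamma_i$: namely, that $d_i = \tfrac{1}{2p}(\gamma_i + \sqrt{\gamma_i^2 + 8\lambda_\Sig p})$ satisfies $p d_i - \gamma_i - 2\lambda_\Sig/d_i = 0$, which is immediate from completing the square. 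The analogous calculation works for each $\hat{\Delt}_k$.

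The main obstacle, if any, is conceptual rather than computational: the formulas in the proposition are not a true closed form because $\U,\boldsymbol{\Gamma}$ depend on $\hat{\Delt}_1,\dots,\hat{\Delt}_K$ and $\V_k,\Phi_k$ depend on $\hat{\Sig}$, so the statement is really a fixed-point characterization that motivates the Flip-Flop algorithm by making each block update explicit. One small technicality to check is that the positive square root is the correct branch to take, which follows from the constraints $\hat{\Sig}\succ 0$ and $\hat{\Delt}_k\succ 0$ (so that the diagonal entries of $\mathbf{D}$ must be positive), and this selects the ``$+$'' sign on the square root rather than the ``$-$'' sign that would arise from completing the square alone.
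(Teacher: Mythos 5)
Your proposal is correct and takes essentially the same route as the paper: both directions rest on Proposition~\ref{prop:addfrob_explicit} for the forward implication and on the scalar identity $\bigl[\tfrac{1}{2p}(\gamma+\sqrt{\gamma^2+8\lambda_{\Sig}p})\bigr]^{-1}=\tfrac{1}{4\lambda_{\Sig}}(\sqrt{\gamma^2+8\lambda_{\Sig}p}-\gamma)$ for the converse. The only cosmetic difference is that you package the forward direction as matrix functional calculus applied to \eqref{addfrob_explicit_sol1}, whereas the paper argues eigenvector-by-eigenvector that $\hat{\Sig}$ and $\sum_k\X_k\smash[t]{\hat{\Delt}}^{-1}_k\X_k^T$ share eigenvectors; your remark on selecting the positive square-root branch via $\hat{\Sig}\succ 0$ correctly handles the one technicality the paper leaves implicit.
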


\begin{proof}
$(\Rightarrow)$ Suppose that $\hat{\Sig}$ and $\hat{\Delt}_1, \dots, \hat{\Delt}_K$ are solutions to the gradient equations in \eqref{add_grad_eq1} and \eqref{add_grad_eq2}. We will first show that the eigenvectors of $\hat{\Sig}$ and $\sum_{k=1}^{K} \X_{k} \smash[t] {\hat{\Delt}}^{-1}_k \X_{k}^{T}$ are equivalent. 

Let $\uu$ be an eigenvector of $\hat{\Sig}$ with the corresponding eigenvalue $\phi$. Then, by \eqref{add_grad_eq1}, 
\begin{align*}
\smash{\sum_{k=1}^{K} \X_{k} \smash[t] {\hat{\Delt}}^{-1}_k \X_{k}^{T} \uu} = ( p \hat{\Sig} - 2 \lambda_{\Sig} \smash[t] {\hat{\Sig}}^{-1} ) \uu = (p \phi - 2 \lambda_{\Sig} \phi^{-1}) \uu.
\end{align*} 
Therefore, $\uu$ is an eigenvector of $\sum_{k=1}^{K} \X_{k} \smash[t] {\hat{\Delt}}^{-1}_k \X_{k}^{T}$ with the eigenvalue $p \phi - 2 \lambda_{\Sig} \phi^{-1}$.

Conversely, suppose $\uu$ is an eigenvector of $\sum_{k=1}^{K} \X_{k} \smash[t] {\hat{\Delt}}^{-1}_k \X_{k}^{T}$ with eigenvalue $\gamma$. Then
\begin{align*}
\left( \frac{2 \lambda_{\Sig}}{p} \I + \frac{1}{4p^2} \left(\sum_{k=1}^{K} \X_{k} \smash[t] {\hat{\Delt}}^{-1}_k \X_{k}^{T} \right)^2 \right) \uu &= \left( \frac{2 \lambda_{\Sig}}{p} + \frac{1}{4p^2} \gamma^2 \right) \uu.
\end{align*}
This implies that
\begin{align}
\left( \frac{2 \lambda_{\Sig}}{p} \I + \frac{1}{4p^2} \left( \sum_{k=1}^{K} \X_{k} \smash[t] {\hat{\Delt}}^{-1}_k \X_{k}^{T} \right)^2 \right)^{\frac{1}{2}} \uu &= \left( \frac{2 \lambda_{\Sig}}{p} + \frac{1}{4p^2} \gamma^2 \right)^{\frac{1}{2}} \uu. \label{addfrob_eig_eq1}
\end{align}
So by Lemma~\ref{lemma:addfrob_explicit} and \eqref{addfrob_eig_eq1}, we have that
\begin{align}
\hat{\Sig} \uu &= \left[ \left( \frac{2 \lambda_{\Sig}}{p} \I + \frac{1}{4 p^2} \bigg( \sum_{k=1}^{K} \X_{k} \smash[t] {\hat{\Delt}}^{-1}_k \X_{k}^{T} \bigg)^2 \right)^{\frac{1}{2}} + \frac{1}{2p} \sum_{k=1}^{K} \X_{k} \smash[t] {\hat{\Delt}}^{-1}_k \X_{k}^{T} \right] \uu \\
&= \left[ \left( \frac{2 \lambda_{\Sig}}{p} + \frac{1}{4p^2} \gamma^2 \right)^{\frac{1}{2}} + \frac{1}{2p} \gamma \right] \uu \\
&= \left[ \frac{1}{2p} \left( \gamma + \sqrt{\gamma^2 + 8 \lambda_{\Sig} p} \right) \right] \uu, \label{addfrob_eig_eq2}
\end{align}
so $\uu$ is indeed an eigenvector of $\hat{\Sig}$ with the eigenvalue $\frac{1}{2p} \left( \gamma + \sqrt{\gamma^2 + 8 \lambda_{\Sig} p} \right)$.

Since the eigenvectors of $\hat{\Sig}$ and $\sum_{k=1}^{K} \X_k \smash[t] {\hat{\Delt}}^{-1}_k \X_k^T$ are equivalent and \eqref{addfrob_eig_eq2} gives us the exact connection between their eigenvalues, it follows that
\begin{align*}
\hat{\Sig} &= \U \left[ \frac{1}{2p} \left( \boldsymbol{\Gamma} + \left( \boldsymbol{\Gamma}^2 + 8 \lambda_{\Sig}p \I \right)^{\frac{1}{2}} \right) \right] \U^T,
\end{align*}
where $\U, \boldsymbol{\Gamma}$ are defined by the eigendecomposition $\sum_{k=1}^{K} \X_k \smash[t] {\hat{\Delt}}^{-1}_k \X_k^T = \U \boldsymbol{\Gamma} \U^T$.

This same logic can be used to show that for each $k = 1, \dots, K$,
\begin{align*}
\hat{\Delt}_k &= \V_k \left[ \frac{1}{2n} \left( \boldsymbol{\Phi}_k + \left( \boldsymbol{\Phi}_k^2 + 8 \lambda_{k} n \I \right)^{\frac{1}{2}} \right) \right] \V_k^T,
\end{align*}
where $\V_k, \boldsymbol{\Phi}_k$ are defined by the eigendecomposition $\X_k^T \smash[t] {\hat{\Sig}}^{-1} \X_k = \V_k \boldsymbol{\Phi}_k \V_k^T$. We omit the proof since it uses the same argument as above.

\vspace{6pt}
\noindent
$(\Leftarrow)$ Now suppose that $\hat{\Sig}$ and $\hat{\Delt}_1, \dots, \hat{\Delt}_K$ satisfy equations \eqref{prop:addfrob_eig_eq1} and \eqref{prop:addfrob_eig_eq2}. 

Since we know $\left[ \frac{1}{2p} \left( \gamma + \sqrt{\gamma^2 + 8 \lambda_{\Sig} p} \right) \right]^{-1} = \frac{1}{4 \lambda_{\Sig}} \left( \sqrt{\gamma^2 + 8 \lambda_{\Sig} p} - \gamma \right)$, it follows that $\left[ \frac{1}{2p} \left( \boldsymbol{\Gamma} + \left( \boldsymbol{\Gamma}^2 + 8 \lambda_{\Sig}p \I \right)^{\frac{1}{2}} \right) \right]^{-1} = \frac{1}{4 \lambda_{\Sig}} \left( \left( \boldsymbol{\Gamma}^2 + 8 \lambda_{\Sig} p \I \right)^{\frac{1}{2}} - \boldsymbol{\Gamma} \right)$.
Therefore,
\begin{align*}
p \hat{\Sig} - \sum_{k=1}^{K} \X_k \smash[t] {\hat{\Delt}}^{-1}_k \X_k^T - 2 \lambda_{\Sig} \smash[t] {\hat{\Sig}}^{-1} &= p \U \left[ \frac{1}{2p} \left( \boldsymbol{\Gamma} + \left( \boldsymbol{\Gamma}^2 + 8 \lambda_{\Sig}p \I \right)^{\frac{1}{2}} \right) \right] \U^T - \U \boldsymbol{\Gamma} \U^T \nonumber \\
&\qquad \qquad - 2 \lambda_{\Sig} \U \left[ \frac{1}{2p} \left( \boldsymbol{\Gamma} + \left( \boldsymbol{\Gamma}^2 +  8 \lambda_{\Sig}p \I \right)^{\frac{1}{2}} \right) \right]^{-1} \U^T  \\
&= \U \left[ \frac{1}{2} \left( \boldsymbol{\Gamma} + \left( \boldsymbol{\Gamma}^2 + 8 \lambda_{\Sig}p \I \right)^{\frac{1}{2}} \right) \right] \U^T - \U \boldsymbol{\Gamma} \U^T \nonumber \\
&\qquad \qquad - \U \left[ \frac{1}{2} \left( \left( \boldsymbol{\Gamma}^2 + 8 \lambda_{\Sig} p \I \right)^{\frac{1}{2}} - \boldsymbol{\Gamma} \right) \right] \U^T  \\
&= 0.
\end{align*}

Similarly, one can substitute in \eqref{prop:addfrob_eig_eq2} and follow the same argument to show that \eqref{add_grad_eq2} is also satisfied.
\end{proof}

As a consequence of Proposition~\ref{prop:addfrob_eig}, each update step in the additive Frobenius Flip-Flop algorithm can be solved by taking a full eigendecomposition and then regularizing the eigenvalues. This algorithm is given in Algorithm~\ref{alg:addfrob}.

\begin{algorithm}
\caption{Flip-Flop Algorithm for Additive Frobenius Penalized iPCA Estimators}\label{alg:addfrob}
\begin{algorithmic} [1]
\State Center the columns of $\X_1, \dots, \X_K$, and initialize $\hat{\Sig}$, $\hat{\Delt}_1, \dots, \hat{\Delt}_K$ to be positive definite.
\While{not converged}
\State Take eigendecomposition: $\sum_{k=1}^{K} \X_{k} \smash[t] {\hat{\Delt}}^{-1}_k \X_{k}^{T} = \U \boldsymbol{\Gamma} \U^T$~ \tikzmark{right1} \tikzmark{top1}
\State Regularize eigenvalues: $\Phi_{ii} = \frac{1}{2p}  \left( \Gamma_{ii} + \sqrt{\Gamma_{ii}^{2} + 8 \lambda_{\Sig} p} \right)$
\State Update $\smash[t] {\hat{\Sig}}^{-1} = \U \boldsymbol{\Phi}^{-1} \U^{T}$ \tikzmark{bottom1}

\For{$k = 1, \dots, K$}
\State Take eigendecomposition: $\X_{k}^{T}  \smash[t] {\hat{\Sig}}^{-1} \X_{k} = \V \boldsymbol{\Phi} \V^{T}$\tikzmark{top2}
\State Regularize eigenvalues: $\Gamma_{ii} = \frac{1}{2n} \left(\Phi_{ii} + \sqrt{\Phi_{ii}^{2} + 8 \lambda_{k} n} \right)$. 
\State Update $\smash[t] {\hat{\Delt}}^{-1}_k = \V \boldsymbol{\Gamma}^{-1} \V^{T}$ \tikzmark{bottom2}
\EndFor
\EndWhile
\vspace{-16pt}
\end{algorithmic}
\AddNote{top1}{bottom1}{right1}{Update $\Sig$}
\AddNote{top2}{bottom2}{right1}{Update $\Delt_k$}
\end{algorithm}

\subsubsection{Multiplicative Frobenius Penalized Flip-Flop Estimator} \label{sec:s_multfrob}
The derivation of the multiplicative Frobenius Flip-Flop algorithm is very similar to the previous derivation with the additive Frobenius penalty. Thus, we omit most of the details and simply provide a sketch of the derivation.

Recall that the multiplicative Frobenius penalized estimator solves
\begin{align}\label{multfrob}
\smash[t] {\hat{\Sig}}^{-1}, \smash[t] {\hat{\Delt}}^{-1}_1, \dots, \smash[t] {\hat{\Delt}}^{-1}_K &=  \smash{\argmax_{\substack{\Sig^{-1} \succ 0 \\ \Delt_1^{-1}, \dots, \Delt_K^{-1} \succ 0}}} \Big \{ p \log | \Sig^{-1} | + n \sum_{k=1}^{K} \log | \Delt_k^{-1} | - \sum_{k = 1}^{K} \mathrm{tr}\left( \Sig^{-1} \X_{k} \Delt_k^{-1} \X_{k}^{T} \right) \nonumber \\
& \qquad \qquad \qquad \qquad \qquad - \norm{\Sig^{-1}}_F^2 \sum_{k=1}^{K} \lambda_k \norm{\Delt_k^{-1}}_F^2 \Big \},
\end{align}
for which the gradient equations are
\begin{align}
p \hat{\Sig} - \sum_{k=1}^{K} \X_{k} \smash[t] {\hat{\Delt}}^{-1}_k \X_{k}^{T} - 2 \smash[t] {\hat{\Sig}}^{-1}  \sum_{k=1}^{K} \lambda_{k} \norm{\smash[t] {\hat{\Delt}}^{-1}_k}_{F}^{2} &= 0, \label{mult_grad_eq1} \\
n \hat{\Delt}_k - \X_{k}^{T} \smash[t] {\hat{\Sig}}^{-1} \X_{k} - 2 \lambda_{k} \smash[t] {\hat{\Delt}}^{-1}_k \norm{\smash[t] {\hat{\Sig}}^{-1}}_{F}^{2} &= 0 \qquad \forall \: k = 1, \dots, K. \label{mult_grad_eq2}
\end{align}

Assuming these gradient equations \eqref{mult_grad_eq1} and \eqref{mult_grad_eq2} are satisfied, we can write
\begin{align*}
\hat{\Sig} &= \left( \frac{2 \sum_{k=1}^{K} \lambda_{k} || \smash[t] {\hat{\Delt}}^{-1}_k ||_{F}^{2}}{p} \I + \frac{1}{4 p^2} \left( \sum_{k=1}^{K} \X_{k} \smash[t] {\hat{\Delt}}^{-1}_k \X_{k}^{T} \right)^2 \right)^\frac{1}{2} + \frac{1}{2p} \sum_{k=1}^{K} \X_{k} \smash[t] {\hat{\Delt}}^{-1}_k \X_{k}^{T}, \\
\hat{\Delt}_k &= \left( \frac{2 \lambda_{k} || \smash[t] {\hat{\Sig}}^{-1} ||_{F}^{2}}{n} \I + \frac{1}{4 n^2} \left( \X_{k}^{T} \smash[t] {\hat{\Sig}}^{-1} \X_{k} \right)^2 \right)^\frac{1}{2} + \frac{1}{2n} \X_{k}^{T} \smash[t] {\hat{\Sig}}^{-1} \X_{k} \qquad \forall \: k = 1, \dots, K.
\end{align*}

We then can follow the same argument in Proposition~\ref{prop:addfrob_eig} to show that $\hat{\Sig}$ and $\hat{\Delt}_1, \dots, \hat{\Delt}_K$ are solutions to the multiplicative Frobenius gradient equations \eqref{mult_grad_eq1} and \eqref{mult_grad_eq2} if and only if
\begin{align}
\hat{\Sig} &= \U \left[ \frac{1}{2p} \left( \boldsymbol{\Gamma} + \left( \boldsymbol{\Gamma}^2 + 8 p \sum_{k=1}^{K} \lambda_{k} || \smash[t] {\hat{\Delt}}^{-1}_k ||_{F}^{2} \I \right)^{\frac{1}{2}} \right) \right] \U^T \label{prop:multfrob_eig_eq1} \\
\text{and } ~ \hat{\Delt}_k &= \V_k \left[ \frac{1}{2n} \left( \boldsymbol{\Phi}_k + \left( \boldsymbol{\Phi}_k^2 + 8 n \lambda_{k} || \smash[t] {\hat{\Sig}}^{-1} ||_{F}^{2} \I \right)^{\frac{1}{2}} \right) \right] \V_k^T \qquad \forall \: k = 1, \dots, K, \label{prop:multfrob_eig_eq2}
\end{align}
where $\U, \V_k, \boldsymbol{\Gamma},$ and $\boldsymbol{\Phi}_k$ are defined by the eigendecompositions $\sum_{k=1}^{K} \X_k \smash[t] {\hat{\Delt}}^{-1}_k \X_k^T = \U \boldsymbol{\Gamma} \U^T$ and $\X_k^T \smash[t] {\hat{\Sig}}^{-1} \X_k = \V_k \boldsymbol{\Phi}_k \V_k^T$. This gives rise to the Flip Flop algorithm for solving the multiplicative Frobenius penalized estimators, as given in Algorithm~\ref{alg:multfrob}.

\subsubsection{Additive $L_1$ Penalized Flip-Flop Estimator} \label{sec:s_l1_ff}

If the inverse covariance matrices are known to have a sparse underlying structure, then we can apply an $L_1$ penalty, rather than the Frobenius penalty, to induce this sparsity. Note that while it is possible to use a multiplicative $L_1$ penalty, the multiplicative $L_1$ penalty is not geodesically convex and is not separable (as defined in \citet{tseng2001convergence}). Thus, we primarily consider the additive $L_1$ penalized iPCA estimator:
\begin{align}\label{glasso_off}
\smash[t] {\hat{\Sig}}^{-1}, \smash[t] {\hat{\Delt}}^{-1}_1, \dots, \smash[t] {\hat{\Delt}}^{-1}_K &=  \smash{\argmax_{\substack{\Sig^{-1} \succ 0 \\ \Delt_1^{-1}, \dots, \Delt_K^{-1} \succ 0}}} \Big \{ p \log | \Sig^{-1} | + n \sum_{k=1}^{K} \log | \Delt_k^{-1} | - \sum_{k = 1}^{K} \mathrm{tr}\left( \Sig^{-1} \X_{k} \Delt_k^{-1} \X_{k}^{T} \right) \nonumber \\
& \qquad \qquad \qquad \qquad \qquad - \lambda_{\Sig} \norm{\Sig^{-1}}_{1,\text{off}} - \sum_{k=1}^{K} \lambda_k \norm{\Delt_k^{-1}}_{1,\text{off}}\Big \}.
\end{align}
Note that $\norm{\cdot}_{1,\text{off}}$ penalizes the off-diagonal entries (i.e. $\norm{\A}_{1,\text{off}} = \sum_{i \neq j} | a_{ij} |$), but it is also possible to use the ordinary $L_1$ norm $\norm{\cdot}_1$.

For fixed $\hat{\Delt}_1, \dots, \hat{\Delt}_K$, the Flip-Flop update for $\hat{\Sig}$ is seen to be
\begin{align*}
\smash[t] {\hat{\Sig}}^{-1} &= \argmin_{\Sig^{-1} \succ 0} -p \log | \Sig^{-1} | +  \mathrm{tr}\left( \Sig^{-1} \left(\sum_{k = 1}^{K} \X_{k} \smash[t] {\hat{\Delt}}^{-1}_k \X_{k}^{T} \right)\right) + \lambda_{\Sig} \norm{\Sig^{-1}}_{1,\text{off}},
\end{align*}
and similarly for fixed $\hat{\Sig}$ and $\hat{\Delt}_j$, $j \neq k$, the update for $\hat{\Delt}_k$ is
\begin{align*}
\smash[t] {\hat{\Delt}}^{-1}_k &= \argmin_{\Delt_k^{-1} \succ 0} - n \log | \Delt_k^{-1} | + \mathrm{tr}\left( \Delt_k^{-1} \X_{k}^{T} \smash[t] {\hat{\Sig}}^{-1} \X_{k} \right) + \lambda_k \norm{\Delt_k^{-1}}_{1,\text{off}}.
\end{align*}
Both of which can be solved via the graphical lasso algorithm \citep{hsieh2011quic}. Plugging in these updates to the framework laid out in Algorithm~\ref{alg:ff}, we give the additive $L_1$ Flip-Flop algorithm in Algorithm~\ref{alg:glasso_off}.

\begin{algorithm}
\caption{Flip-Flop Algorithm for Additive $L_1$ Penalized iPCA Covariance Estimators}\label{alg:glasso_off}
\begin{algorithmic} [1]
\State Center the columns of $\X_1, \dots, \X_K$, and initialize $\hat{\Sig}$, $\hat{\Delt}_1, \dots, \hat{\Delt}_K$ to be positive definite.
\While{not converged}
\State Put $\A = \frac{1}{p}\sum_{k=1}^{K} \X_k \Delt_k^{-1} \X_k^T$ \tikzmark{top1}
\State Apply graphical lasso: 

\setlength\parindent{2cm} $\displaystyle \smash[t] {\hat{\Sig}}^{-1} = \argmin_{\Sig^{-1}} ~ -\log  |\Sig^{-1} | + \mathrm{tr} ( \A \Sig^{-1} ) + \frac{\lambda_{\Sig}}{p} \norm{\Sig^{-1}}_{1, \text{off}}$ \tikzmark{bottom1} 

\For{$k = 1, \dots, K$}
\State Put $\A_k := \frac{1}{n} \X_k^T \Sig^{-1} \X_k$ \tikzmark{top2}
\State Apply graphical lasso:
 
\setlength\parindent{2cm} $\displaystyle \smash[t] {\hat{\Delt}}^{-1}_k = \argmin_{\Delt_k^{-1}} ~ -\log | \Delt_k^{-1} | + \mathrm{tr} (\A_k \Delt_k^{-1}) + \frac{\lambda_{k}}{n} \norm{\Delt_k^{-1}}_{1, \text{off}}$ \tikzmark{bottom2} \tikzmark{right1}
\EndFor
\EndWhile \vspace{-12pt}
\end{algorithmic}
\AddNote{top1}{bottom1}{right1}{Update $\Sig$}
\AddNote{top2}{bottom2}{right1}{Update $\Delt_k$}
\end{algorithm}

\section{Geodesic Convexity and the Multiplciative Frobenius iPCA Estimator} \label{sec:s_geodesic}

Because of the central role that geodesic convexity plays in the multiplicative Frobenius iPCA estimator, we give a self-contained introduction to geodesic convexity in Appendix~\ref{sec:s_geodesic_intro}. This review provides the necessary concepts and tools to prove the theorems in Appendix~\ref{sec:s_multfrob_conv}. For a more comprehensive review, we refer to \citet{rapcsak1991gconvex} and \citet{vishnoi2018gconvex}.

\subsection{Introduction to Geodesic Convexity} \label{sec:s_geodesic_intro}

Convex optimization problems arise frequently in a variety of machine learning applications such as regression, matrix completion, and clustering, to name a few examples. Beyond its widespread applications, convex problems are well-understood theoretically and can be reliably solved in polynomial-time. As a result, machine learning tasks are often formulated as convex problems in Euclidean space to guarantee fast convergence to a global solution. Convexity, however, is not limited to Euclidean spaces. Many tools which we know and love from convex optimization can be extended to geodesic convexity (g-convexity) on Riemannian manifolds. In this general Riemannian setting, there are several applications in which we have g-convexity but not convexity \citep{zhang2016first}.

Before formally defining geodesic convexity, we first recall some useful concepts from metric geometry. A \textit{metric space} is a pair $(X,d)$ of a set $X$ and a distance function $d$ that satisfies positivity, symmetry, and the triangle inequality. A \textit{path} $\gamma$ is a continuous mapping from $[0,1]$ to $X$, and the \textit{length} $\ell$ of a path $\gamma$ is defined as $\ell(\gamma) := \sup \{\sum_{i=1}^{n} d(\gamma(t_{i-1}), \gamma(t_i)) \: : \: 0 = t_0 < \ldots < t_n = 1, \: n \in \mathbb{N} \}$. A metric space is a \textit{length space} if $d(x,y) = \inf \ell(\gamma)$ where the infimum is taken over all paths $\gamma : [0,1] \rightarrow X$ joining $x$ and $y$. 

\begin{definition}
Let $(X,d)$ be a length space. A path $\gamma: [0,1] \rightarrow X$ is a \textbf{geodesic} if for every $t \in [0,1]$ there exists an interval $[a,b] \subset [0,1]$ which contains a neighborhood of $t$ and $\gamma |_{[a,b]}$ is a shortest path from $\gamma(a)$ to $\gamma(b)$. Put simply, a geodesic is a path which locally minimizes length.
\end{definition}

Note that geodesics minimize length locally, but not globally. A canonical example of geodesics are the great circles on a sphere.

This concept of a geodesic generalizes the notion of a line in Euclidean space to general (nonlinear) length spaces. By replacing lines by geodesics in the definition of convexity, we can extend convexity to g-convexity in a straightforward manner.

\begin{definition}
Let $\mathcal{M}$ be a Riemannian manifold. A function $f: \mathcal{M} \rightarrow \mathbb{R}$ is \textbf{geodesically convex} if for any $x,y \in \mathcal{M}$, geodesic $\gamma$ such that $\gamma (0) = x$ and $\gamma (1) = y$, and $t \in [0,1],$ it holds that
\begin{align*}
    f(\gamma (t)) \leq (1-t) f(x) + tf(y).
\end{align*}
\end{definition}

The only caveat is that the underlying space must be a Riemannian manifold. To prevent a long winded detour into the details of Riemannian geometry, we avoid the full technical definition and simply think of a Riemannian manifold as a real differentiable manifold equipped with the notion of an inner product. We need the structure of a Riemannian manifold in order to meaningfully perform algebraic operations in our space. 

In summary, by extending the notion of a line to a geodesic, we can easily translate the notion of convexity on a Euclidean space to geodesic convexity on a Riemannian manifold. We next give a series of known properties regarding geodesic convexity that will be of use in the following proofs \citep{wiesel2012geodesic, vishnoi2018gconvex}. Many of these properties are analogous to the familiar convex setting. 

\begin{theorem}
Any local minima of a geodesically convex function is a global minima.
\end{theorem}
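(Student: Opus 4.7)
The plan is to argue by contradiction in the most direct way possible. Suppose for contradiction that $x^{*} \in \mathcal{M}$ is a local minimum of $f$ but is not a global minimum; then there exists $y \in \mathcal{M}$ with $f(y) < f(x^{*})$. Since $f$ is geodesically convex on $\mathcal{M}$, the defining inequality holds along some geodesic $\gamma : [0,1] \to \mathcal{M}$ with $\gamma(0) = x^{*}$ and $\gamma(1) = y$, so the machinery of the definition is applicable to this specific pair $(x^{*}, y)$.

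Next, I would evaluate the g-convexity inequality along $\gamma$ at an arbitrary parameter $t \in (0,1]$:
\[
f(\gamma(t)) \;\leq\; (1-t) f(x^{*}) + t f(y) \;<\; (1-t) f(x^{*}) + t f(x^{*}) \;=\; f(x^{*}),
\]
where the strict inequality uses $f(y) < f(x^{*})$ together with $t > 0$. Thus every interior point of the geodesic lies strictly below $f(x^{*})$.

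Finally, I would combine this with continuity of $\gamma$ at $t = 0$: since $\gamma(0) = x^{*}$, for any open neighborhood $U$ of $x^{*}$ I can choose $t$ small enough that $\gamma(t) \in U$, and by the previous display $f(\gamma(t)) < f(x^{*})$. This produces a point of $U$ strictly below $f(x^{*})$, contradicting the hypothesis that $x^{*}$ is a local minimum. Hence no such $y$ exists and $x^{*}$ is globally optimal.

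The proof is essentially a one-line translation of the Euclidean argument, and the only subtle issue is ensuring that for any pair $x^{*}, y \in \mathcal{M}$ there is actually a geodesic joining them along which the g-convexity inequality applies. In the paper's setting this is built into the definition of g-convexity used throughout (it quantifies over any $x,y$ and any geodesic between them); on a connected complete Riemannian manifold one would otherwise invoke Hopf--Rinow to produce such a geodesic. No calculation specific to the matrix-variate normal setting is needed here; the statement is a purely structural consequence of the definition of geodesic convexity together with continuity of $\gamma$.
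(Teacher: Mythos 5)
Your proof is correct. The paper itself does not prove this statement: it lists it among "known properties" of geodesic convexity cited from \citet{wiesel2012geodesic} and \citet{vishnoi2018gconvex}, so there is no in-paper argument to compare against. Your argument is the standard one — evaluate the g-convexity inequality along a geodesic from the putative local minimum $x^*$ to a strictly better point $y$, observe that $f(\gamma(t)) < f(x^*)$ for all $t \in (0,1]$, and use continuity of $\gamma$ at $t=0$ to contradict local minimality — and you correctly isolate the one genuine hypothesis, namely that a geodesic joining $x^*$ and $y$ exists; in the paper's setting this is automatic, since on $\mathbb{S}_{++}^{n}$ the connecting geodesic is given explicitly by the parameterization in \eqref{eq:geodesic}.
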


\begin{proposition}\label{prop:gconvex_operations}
The following operations preserve geodesic convexity:
\begin{enumerate}[label=(\roman*)]
\item \textit{(Addition)} If $f$ and $g$ are g-convex functions, then $f+g$ is g-convex.
\item \textit{(Kronecker Products)} Suppose $f$ is a real-valued, g-convex function on $\mathbb{S}^d_{++}$, and $\Q_j \in \mathbb{S}^{d_j}_{++}$ for each $j = 1, \dots, J$ such that $\prod_{j=1}^{J}d_j = d$. Then
\begin{align*}
g(\Q_1, \dots, \Q_K) = f(\Q_1 \otimes \dots \otimes \Q_K)
\end{align*}
is jointly g-convex in $\{ \Q_j \}_{j=1}^{J} \in \mathbb{S}^{d_1}_{++} \times \dots \times \mathbb{S}^{d_K}_{++}$.
\end{enumerate}
\end{proposition}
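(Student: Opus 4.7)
For part (i), the proof is immediate from the definition: given any $x,y \in \mathcal{M}$, any geodesic $\gamma$ with $\gamma(0)=x$ and $\gamma(1)=y$, and any $t \in [0,1]$, apply g-convexity of $f$ and $g$ separately to get two inequalities and add them, which yields $(f+g)(\gamma(t)) \leq (1-t)(f+g)(x) + t(f+g)(y)$.

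For part (ii), my plan is to exploit the well-known closed form for geodesics on the positive-definite cone. Equipped with the affine-invariant metric, the unique geodesic joining $\A, \B \in \mathcal{S}^d_{++}$ is
\begin{align*}
\gamma_{\A,\B}(t) = \A^{1/2} \bigl(\A^{-1/2} \B \A^{-1/2}\bigr)^t \A^{1/2}.
\end{align*}
On the product manifold $\mathcal{S}^{d_1}_{++} \times \cdots \times \mathcal{S}^{d_K}_{++}$ (with the product metric), geodesics act componentwise: the geodesic from $(\Q_1,\ldots,\Q_K)$ to $(\Q_1',\ldots,\Q_K')$ is simply $t \mapsto (\gamma_{\Q_1,\Q_1'}(t), \ldots, \gamma_{\Q_K,\Q_K'}(t))$. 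So joint g-convexity of $g(\Q_1,\ldots,\Q_K) = f(\Q_1 \otimes \cdots \otimes \Q_K)$ reduces to verifying
\begin{align*}
f\bigl(\gamma_{\Q_1,\Q_1'}(t) \otimes \cdots \otimes \gamma_{\Q_K,\Q_K'}(t)\bigr) \leq (1-t) f(\Q_1 \otimes \cdots \otimes \Q_K) + t\, f(\Q_1' \otimes \cdots \otimes \Q_K').
\end{align*}

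The crux of the argument, and what I expect to be the main (though not deep) obstacle, is showing that the Kronecker product of these per-factor geodesics equals the geodesic on $\mathcal{S}^d_{++}$ between the Kronecker products of the endpoints, namely
\begin{align*}
\gamma_{\Q_1,\Q_1'}(t) \otimes \cdots \otimes \gamma_{\Q_K,\Q_K'}(t) = \gamma_{\Q_1 \otimes \cdots \otimes \Q_K,\, \Q_1' \otimes \cdots \otimes \Q_K'}(t).
\end{align*}
This identity will be obtained by expanding both sides via the closed form above and applying the standard Kronecker identities $(\A \otimes \B)(\C \otimes \D) = (\A\C) \otimes (\B\D)$, $(\A \otimes \B)^{-1} = \A^{-1} \otimes \B^{-1}$, $(\A \otimes \B)^{1/2} = \A^{1/2} \otimes \B^{1/2}$ (valid because both factors are PD), and more generally $(\A \otimes \B)^t = \A^t \otimes \B^t$, which follows for PD matrices from spectral calculus together with multiplicativity of $\otimes$. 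An easy induction on $K$ reduces the check to the two-factor case.

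With this geodesic-compatibility identity in hand, g-convexity of $f$ on $\mathcal{S}^d_{++}$ immediately gives the required inequality: the left side equals $f(\gamma_{\A,\B}(t))$ where $\A = \Q_1 \otimes \cdots \otimes \Q_K$ and $\B = \Q_1' \otimes \cdots \otimes \Q_K'$, which is bounded above by $(1-t)f(\A) + t f(\B) = (1-t) g(\Q_1,\ldots,\Q_K) + t\, g(\Q_1',\ldots,\Q_K')$, establishing joint g-convexity.
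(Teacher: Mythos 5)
Your proof is correct: part (i) is indeed immediate from the definition, and for part (ii) the crux identity you isolate---that $\gamma_{\Q_1,\Q_1'}(t) \otimes \cdots \otimes \gamma_{\Q_K,\Q_K'}(t) = \gamma_{\Q_1\otimes\cdots\otimes\Q_K,\,\Q_1'\otimes\cdots\otimes\Q_K'}(t)$, verified via $(\A\otimes\B)(\C\otimes\D)=(\A\C)\otimes(\B\D)$ and $(\A\otimes\B)^t=\A^t\otimes\B^t$ for positive definite factors---is exactly the right mechanism, since geodesics of the product manifold are componentwise geodesics and $\mathbb{S}^d_{++}$ with the affine-invariant metric has unique geodesics. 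The paper itself does not reproduce a proof of (ii) but defers to \citet{wiesel2012geodesic}, where essentially this same geodesic-compatibility argument is used, so your write-up follows the intended approach while supplying the details the paper omits.
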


The proof of Proposition~\ref{prop:gconvex_operations}(i) is straightforward from the definition of g-convex functions, and Proposition~\ref{prop:gconvex_operations}(ii) is proved in \citet{wiesel2012geodesic}.

\begin{remark}
(Example 4.9 in \citet{vishnoi2018gconvex}) Consider the manifold of positive definite matrices $\mathbb{S}_{++}^n$. For $\Q_0, \Q_1 \in \mathbb{S}_{++}^n$, the geodesic joining $\Q_0$ to $\Q_1$ can be parameterized as
\begin{align}\label{eq:geodesic}
\Q_t = \Q_0^{\frac{1}{2}} \left( \Q_0^{-\frac{1}{2}}  \Q_1 \Q_0^{-\frac{1}{2}}  \right)^t \Q_0^{\frac{1}{2}} \qquad \forall \: t \in [0,1].
\end{align}
\end{remark}

\subsection{Global Convergence of the Multiplicative Frobenius iPCA Estimator} \label{sec:s_multfrob_conv}

We now have the tools to start proving global convergence of the multiplicative iPCA estimator. The roadmap of this proofs is as follows. First, we prove that the negative log-likelihood is g-convex. Then, we prove that the multiplicative Frobenius penalty is g-convex. Since the sum of g-convex functions is g-convex, this implies that the multiplicative iPCA estimator is a g-convex optimization problem. Furthermore, since the Flip-Flop algorithm was proven to converge in Theorem~\ref{converge_to_stationary}, it follows that the multiplicative Frobenius iPCA estimator converges to the global solution as a consequence of geodesic convexity!

Without loss of generality, we assume that each data set $\X_k$ has been column-centered for the remainder of this section.

\begin{lemma} \label{lemma_ll_gconvex}
The negative log-likelihood of our model, \eqref{ff_loglike}, is jointly
geodesically convex in $\Sig^{-1}$, $\Delt_1^{-1}, \ldots \Delt_K^{-1}$ with respect to $\mathbb{S}_{++}^{n} \times \mathbb{S}_{++}^{p_1} \times \dots \times \mathbb{S}_{++}^{p_K}$. 
\end{lemma}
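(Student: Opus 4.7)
The plan is to reduce the joint negative log-likelihood to a sum of single matrix-variate normal negative log-likelihoods and then invoke the known g-convexity of each summand. Without loss of generality assume each $\X_k$ is column-centered. Using the identities $\log\lvert\Sig\otimes\Delt_k\rvert = p_k\log\lvert\Sig\rvert + n\log\lvert\Delt_k\rvert$ and $\mathrm{tr}(\Sig^{-1}\X_k\Delt^{-1}_k\X_k^T) = \mathrm{vec}(\X_k^T)^T(\Sig\otimes\Delt_k)^{-1}\mathrm{vec}(\X_k^T)$, I would rewrite \eqref{ff_loglike} (up to sign and additive constants) as $\sum_{k=1}^K F_k(\Sig,\Delt_k)$ where
\begin{align*}
F_k(\Sig,\Delt_k) \;=\; \log\lvert \Sig\otimes\Delt_k\rvert \;+\; \mathrm{tr}\bigl((\Sig\otimes\Delt_k)^{-1}\,\vv_k \vv_k^T\bigr), \qquad \vv_k := \mathrm{vec}(\X_k^T).
\end{align*}
Each $F_k$ is thus exactly the negative log-likelihood of a single Gaussian observation $\vv_k$ with covariance $\Sig\otimes\Delt_k$.

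Next, I would recall the basic g-convexity facts on $\mathcal{S}^d_{++}$ with its affine-invariant Riemannian metric. Along the geodesic $\Theta_t=\Theta_0^{1/2}(\Theta_0^{-1/2}\Theta_1\Theta_0^{-1/2})^t\Theta_0^{1/2}$, the log-determinant satisfies $\log\lvert\Theta_t\rvert=(1-t)\log\lvert\Theta_0\rvert+t\log\lvert\Theta_1\rvert$, so $\Theta\mapsto\log\lvert\Theta\rvert$ is g-affine. A direct eigenvalue calculation shows that for any $S\succeq 0$ the map $\Theta\mapsto\mathrm{tr}(S\Theta^{-1})$ restricted to such a geodesic has the form $\sum_i a_{ii}d_i^{-t}$ with $a_{ii}\geq 0$ and $d_i>0$, a sum of log-convex (hence convex) functions of $t$, and is therefore g-convex. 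Consequently each $f_k(\Theta):=\log\lvert\Theta\rvert+\mathrm{tr}(S_k\Theta^{-1})$ with $S_k=\vv_k\vv_k^T$ is g-convex on $\mathcal{S}^{np_k}_{++}$; this is essentially the key lemma of \citet{wiesel2012geodesic}.

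Then I would apply the Kronecker-product preservation result stated as Proposition~\ref{prop:gconvex_operations}(ii), which gives joint g-convexity of $(\Sig,\Delt_k)\mapsto f_k(\Sig\otimes\Delt_k)=F_k(\Sig,\Delt_k)$ on $\mathcal{S}^n_{++}\times\mathcal{S}^{p_k}_{++}$. To switch to the inverse parametrization used in the statement, I would note that the inversion map $\Theta\mapsto\Theta^{-1}$ is an isometry of $(\mathcal{S}^d_{++},\langle\cdot,\cdot\rangle_{\mathrm{AI}})$: the differential check $\langle -\Theta^{-1}U\Theta^{-1},-\Theta^{-1}V\Theta^{-1}\rangle_{\Theta^{-1}}=\mathrm{tr}(\Theta^{-1}U\Theta^{-1}V)=\langle U,V\rangle_\Theta$ is immediate, and an isometry maps geodesics to geodesics, so g-convexity transfers between the $(\Sig,\Delt_k)$ and $(\Sig^{-1},\Delt^{-1}_k)$ parametrizations. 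Finally, geodesics in the product manifold $\mathcal{S}^n_{++}\times\mathcal{S}^{p_1}_{++}\times\cdots\times\mathcal{S}^{p_K}_{++}$ are simply coordinate-wise geodesics, so each $F_k$ (which depends only on two of the blocks) remains g-convex on the full product; summing over $k$ using Proposition~\ref{prop:gconvex_operations}(i) then yields the joint g-convexity of $-\ell$ in $(\Sig^{-1},\Delt^{-1}_1,\dots,\Delt^{-1}_K)$.

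The main obstacle is conceptual rather than technical: almost every ingredient has already been assembled by \citet{wiesel2012geodesic}, so the work consists primarily of (a) correctly expressing the $K$-summand iPCA likelihood in single-observation Gaussian form via vectorization, (b) verifying that the inversion automorphism preserves g-convexity so that the statement can be given in the natural inverse parametrization, and (c) extending from pairwise g-convexity in $(\Sig,\Delt_k)$ to joint g-convexity on the full $(K{+}1)$-fold product manifold. None of these steps requires new machinery beyond the affine-invariant geodesic formula \eqref{eq:geodesic} and the two preservation properties in Proposition~\ref{prop:gconvex_operations}.
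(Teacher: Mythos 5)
Your proof is correct and rests on the same two pillars as the paper's---the g-convexity of the single matrix-variate Gaussian negative log-likelihood from \citet{wiesel2012geodesic} and the Kronecker-product preservation in Proposition~\ref{prop:gconvex_operations}(ii)---but it is organized differently. The paper concatenates the $K$ data sets into $\tilde{\X}=[\X_1,\dots,\X_K]$ and the feature covariances into a single block-diagonal $\tilde{\Delt}$, rewrites $-\ell$ as one function $f(\smash[t]{\tilde{\Delt}}^{-1}\otimes\Sig^{-1})$ on $\mathbb{S}_{++}^{np}$ in the precision parametrization, and applies the Kronecker preservation once. You instead keep the sum over $k$, identify each summand $F_k$ as a single-observation Gaussian negative log-likelihood with covariance $\Sig\otimes\Delt_k$, obtain joint g-convexity of each $F_k$ in $(\Sig,\Delt_k)$, transfer to the inverse parametrization via the inversion isometry of the affine-invariant metric, lift each term to the product manifold, and sum. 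Your termwise route has a small advantage: Proposition~\ref{prop:gconvex_operations}(ii) applies verbatim to the genuine two-factor Kronecker product $\Sig\otimes\Delt_k$, whereas the paper's block-diagonal $\smash[t]{\tilde{\Delt}}^{-1}$ is not itself a Kronecker product of the $\Delt_k^{-1}$, so the paper's single application of that proposition implicitly also relies on the (true but unstated) fact that geodesics between block-diagonal matrices with common block structure are the blockwise geodesics. Conversely, your inversion-isometry step, while correct, is avoidable: since $(\Sig\otimes\Delt_k)^{-1}=\Sig^{-1}\otimes\Delt_k^{-1}$, you could have written $F_k$ directly as $-\log|\Q|+\mathrm{tr}(\vv_k\vv_k^T\Q)$ evaluated at $\Q=\Sig^{-1}\otimes\Delt_k^{-1}$ and applied the preservation result in the precision parametrization from the start, as the paper does.
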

\begin{proof}
Let $\tilde{\X} = [\X_1, \ldots, \X_K] \in \mathbb{R}^{n \times p}$ and define
\[ \tilde{\Delt} = \begin{bmatrix}
\Delt_1 &    0    & \cdots & 0 \\
   0    & \Delt_2 & \cdots & 0 \\
\vdots  & \vdots  & \ddots & \vdots\\
   0    &    0    & \cdots & \Delt_K
\end{bmatrix} \]
Since $\tilde{\Delt}$ is a block diagonal matrix, then $| \tilde{\Delt} | = \prod_{k = 1}^{K} | \Delt_k |$ and $| \smash[t] {\tilde{\Delt}}^{-1} | = \prod_{k = 1}^{K} | \Delt_k^{-1} |$. This implies that
\begin{align}
    \sum_{k = 1}^{K} \log | \Delt_k^{-1} | = \mathrm{log} (\prod_{k = 1}^{K} | \Delt_k^{-1} |) = \log | \smash[t] {\tilde{\Delt}}^{-1} |. \label{lemma_ll_gconvex_eq1}
\end{align}

Also, using this new parameterization,
\begin{align}
\sum_{k = 1}^{K} \mathrm{tr}(\X_k \Delt_k^{-1} \X_k^T \Sig^{-1}) = \mathrm{tr}(\tilde{\X} \smash[t] {\tilde{\Delt}}^{-1} \tilde{\X}^T \Sig^{-1}) = \mathrm{vec}(\tilde{\X})^T (\smash[t] {\tilde{\Delt}}^{-1} \otimes \Sig^{-1})\: \mathrm{vec}(\tilde{\X}). \label{lemma_ll_gconvex_eq2}
\end{align}

Using \eqref{lemma_ll_gconvex_eq1} and \eqref{lemma_ll_gconvex_eq2}, we rewrite the negative log-likelihood as
\begin{align*}
-\ell(\Sig^{-1}, \Delt_1^{-1}, \ldots, \Delt_K^{-1})  &\propto
-p \log | \Sig^{-1} | - n
\sum_{k=1}^{K} \log | \Delt_k^{-1} | + \sum_{k=1}^{K} \mathrm{tr}(\X_{k} \Delt_k^{-1} \X_{k}^{T} \Sig^{-1}) \\
&= -p \log | \Sig^{-1} | - n \log | \smash[t] {\tilde{\Delt}}^{-1} | + \mathrm{vec}(\tilde{\X})^T (\smash[t] {\tilde{\Delt}}^{-1} \otimes \Sig^{-1}) \: \mathrm{vec}(\tilde{\X}) \\
&= -\mathrm{log} ( | \Sig^{-1} |^p | \smash[t] {\tilde{\Delt}}^{-1} |^n ) + \mathrm{vec}(\tilde{\X})^T (\smash[t] {\tilde{\Delt}}^{-1} \otimes \Sig^{-1}) \: \mathrm{vec}(\tilde{\X}) \\
&= \mathrm{log} (| \smash[t] {\tilde{\Delt}}^{-1} \otimes \Sig^{-1} |^{-1} ) + \mathrm{vec}(\tilde{\X})^T (\smash[t] {\tilde{\Delt}}^{-1} \otimes \Sig^{-1}) \: \mathrm{vec}(\tilde{\X}).
\end{align*}

Next, consider the manifold $\mathbb{S}_{++}^{np}$, and let the function $f: \mathbb{S}_{++}^{np} \rightarrow \mathbb{R}$ be given by
\begin{align*}
    f(\Q) = \log(| \Q |^{-1}) + \mathrm{vec}(\tilde{\X})^T \Q \mathrm{vec}(\tilde{\X}).
\end{align*}

Since $f$ is the sum of geodesically convex functions on $\mathbb{S}_{++}^{np}$ \citep{wiesel2012geodesic}, and $-\ell(\Sig^{-1}, \smash[t] {\tilde{\Delt}}^{-1}) = f(\smash[t] {\tilde{\Delt}}^{-1} \otimes \Sig^{-1})$, then by Proposition~\ref{prop:gconvex_operations}, the negative log-likelihood is jointly geodesically convex in $\Sig^{-1}, \Delt_1^{-1}, \dots, \Delt_K^{-1}$ with respect to $\mathbb{S}_{++}^{n} \times \mathbb{S}_{++}^{p_1} \times \dots \times \mathbb{S}_{++}^{p_K}$.
\end{proof}

\begin{corollary} \label{unpenalized_conv}
Suppose the unpenalized log-likelihood in \eqref{s_ff_loglike} is bounded. If the Flip-Flop estimators for the unpenalized log-likelihood exist, then they converge to the global solution of \eqref{s_ff_loglike}.
\end{corollary}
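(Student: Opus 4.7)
The plan is to assemble two results already proved in the paper. First, Theorem~\ref{converge_to_stationary} applied with the trivial penalty $P \equiv 0$---which is trivially differentiable and convex in each coordinate---guarantees that the Flip-Flop iterates converge to a stationary point of the unpenalized negative log-likelihood, provided the objective is bounded below. Existence of the unpenalized MLE, which is the standing hypothesis of this corollary, delivers exactly that: since the infimum of $-\ell$ is attained on the open manifold $\mathbb{S}_{++}^n \times \mathbb{S}_{++}^{p_1} \times \cdots \times \mathbb{S}_{++}^{p_K}$, the negative log-likelihood is bounded below there, and Theorem~\ref{converge_to_stationary} produces a limit point $(\hat{\Sig}, \hat{\Delt}_1, \ldots, \hat{\Delt}_K)$ at which $-\ell$ is stationary in the block-coordinate-descent sense.

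Next, I would invoke Lemma~\ref{lemma_ll_gconvex}, which establishes that $-\ell$ is jointly geodesically convex on the same manifold, and appeal to the standard fact that every smooth g-convex function has the property that any stationary point is automatically a global minimizer. Concretely, let $\hat{\Q} := (\hat{\Sig}, \hat{\Delt}_1, \ldots, \hat{\Delt}_K)$ and let $\gamma$ be any geodesic with $\gamma(0) = \hat{\Q}$, written factorwise using the explicit parameterization \eqref{eq:geodesic}. Then $h(t) := -\ell(\gamma(t))$ is an ordinary convex function of $t \in [0,1]$ by g-convexity, and $h'(0)$ coincides with the Euclidean directional derivative of $-\ell$ at $\hat{\Q}$ along the initial velocity of $\gamma$. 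Since this directional derivative vanishes by stationarity, convexity forces $h(0) \leq h(1)$, i.e.\ $-\ell(\hat{\Q}) \leq -\ell(\gamma(1))$. Varying $\gamma$ over geodesics joining $\hat{\Q}$ to arbitrary points in the domain yields global optimality in \eqref{s_ff_loglike}.

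The main subtlety I expect to wrestle with is reconciling the two notions of stationarity at play: the block-coordinate Euclidean stationarity delivered by Theorem~\ref{converge_to_stationary} versus the Riemannian stationarity required to conclude $h'(0) = 0$ along every geodesic. Fortunately this is not actually painful here because the domain is an \emph{open} subset of the Euclidean vector space of symmetric matrices, so the tangent space at every interior point equals the ambient space itself; hence vanishing of the joint Euclidean gradient of $-\ell$ (which the proof of Theorem~\ref{converge_to_stationary} gives, via Tseng's result, upon noting that each block subproblem is strictly convex and smooth with no active inequality constraints in the unpenalized case) is equivalent to vanishing of all directional derivatives, and in particular those along initial velocities of the geodesics in \eqref{eq:geodesic}. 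This closes the loop and shows that the Flip-Flop limit is the global solution of the unpenalized problem.
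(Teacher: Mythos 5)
Your proposal is correct and follows essentially the same route as the paper: combine convergence of the unpenalized Flip-Flop iteration (the paper argues this directly via monotone ascent of the exact block maximizers, you via Theorem~\ref{converge_to_stationary} with $P\equiv 0$) with the joint geodesic convexity of $-\ell$ from Lemma~\ref{lemma_ll_gconvex} to upgrade the local/stationary limit to a global optimum. Your explicit verification that a Euclidean stationary point on the open cone $\mathbb{S}_{++}^{n}\times\cdots\times\mathbb{S}_{++}^{p_K}$ is a global minimizer of a g-convex function—via convexity of $t\mapsto -\ell(\gamma(t))$ and the vanishing initial derivative along the geodesics \eqref{eq:geodesic}—is a more careful treatment of a step the paper dispatches with the one-line appeal to ``local optima of g-convex functions are global.''
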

\begin{proof}

Recall that the Flip-Flop algorithm yields the iterates:
\begin{enumerate}
    \item $\hat{\Sig} = \frac{1}{p} \sum_{k=1}^{K} \X_{k} \smash[t] {\hat{\Delt}}^{-1}_k \X_{k}^{T} = \argmax_{\Sig} \ell(\Sig, \hat{\Delt}_1, \ldots, \hat{\Delt}_K)$
    \item For each $k = 1, \dots, K$, \\
    $\hat{\Delt}_k = \frac{1}{n} \X_{k}^{T} (\hat{\Sig})^{-1} \X_{k} = \argmax_{\Delt_k} \ell(\hat{\Sig}, \hat{\Delt}_1, \ldots, \hat{\Delt}_{k-1}, \Delt_k, \hat{\Delt}_{k+1}, \ldots, \hat{\Delt}_K)$
\end{enumerate}
Thus, each update of the Flip-Flop algorithm monotonically increases the log-likelihood $\ell$. Assuming that the MLEs exist and are bounded, this Flip-Flop algorithm converges to a local maximum. Furthermore, since $\ell$ is jointly geodesically concave in $\Sig^{-1}$, $\Delt_1^{-1}, \ldots \Delt_K^{-1}$, then all local maxima are global maxima so that the Flip-Flop estimators for the unpenalized log-likelihood converge to the global solution.
\end{proof}

\begin{lemma}\label{lemma_multfrob_geodesic}
The multiplicative Frobenius norm penalty, $\norm{\Sig^{-1}}_{F}^{2}
\sum_{k=1}^{K}  \lambda_{k} \norm{\Delt_k^{-1}}_{F}^{2}$, is jointly
geodesically convex in $\Sig^{-1}$, $\Delt_1^{-1}, \ldots, \Delt_K^{-1}$ with respect to $\mathbb{S}_{++}^{n} \times \mathbb{S}_{++}^{p_1} \times \dots \times \mathbb{S}_{++}^{p_K}$. 
\end{lemma}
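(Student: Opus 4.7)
The plan is to reduce the statement to showing that the single-matrix map $f(\Q)=\lVert \Q\rVert_F^2$ is g-convex on $\mathbb{S}^d_{++}$, and then combine with the Kronecker product and addition rules from Proposition~\ref{prop:gconvex_operations}. The first step is algebraic: using the identity $\lVert \A\otimes \B\rVert_F^2=\lVert\A\rVert_F^2\lVert\B\rVert_F^2$ (highlighted in the remark right after $P^*$), rewrite
\begin{align*}
\lVert\Sig^{-1}\rVert_F^2\sum_{k=1}^K\lambda_k\lVert\Delt_k^{-1}\rVert_F^2
= \sum_{k=1}^K \lambda_k \,\lVert\Sig^{-1}\otimes \Delt_k^{-1}\rVert_F^2.
\end{align*}
If $f(\Q)=\lVert\Q\rVert_F^2$ is g-convex on $\mathbb{S}_{++}^{np_k}$, then by Proposition~\ref{prop:gconvex_operations}(ii) applied with $\Q_1=\Sig^{-1}$ and $\Q_2=\Delt_k^{-1}$ the map $(\Sig^{-1},\Delt_k^{-1})\mapsto \lVert\Sig^{-1}\otimes\Delt_k^{-1}\rVert_F^2$ is jointly g-convex; trivially extending to the remaining coordinates (on which the function is constant) gives joint g-convexity in $(\Sig^{-1},\Delt_1^{-1},\ldots,\Delt_K^{-1})$. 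Multiplying by $\lambda_k>0$ preserves g-convexity, and Proposition~\ref{prop:gconvex_operations}(i) gives joint g-convexity of the sum.

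The heart of the proof is therefore the claim that $f(\Q)=\mathrm{tr}(\Q^2)$ is g-convex on $\mathbb{S}^d_{++}$. Fix $\Q_0,\Q_1\succ 0$ and consider the geodesic $\Q_t=\Q_0^{1/2}\A^t\Q_0^{1/2}$ with $\A=\Q_0^{-1/2}\Q_1\Q_0^{-1/2}\succ 0$ as in \eqref{eq:geodesic}. Using the cyclic property of trace,
\begin{align*}
F(t):=\mathrm{tr}(\Q_t^2)=\mathrm{tr}\bigl(\Q_0\A^t\Q_0\A^t\bigr).
\end{align*}
Diagonalize $\A=\U\D\U^T$ with $\D=\mathrm{diag}(a_1,\ldots,a_d)$, $a_i>0$, and set $\B=\U^T\Q_0\U$, which is symmetric with entries $b_{ij}=b_{ji}$. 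Expanding,
\begin{align*}
F(t)=\mathrm{tr}\bigl(\B\D^t\B\D^t\bigr)=\sum_{i,j}b_{ij}^2\,(a_ia_j)^t.
\end{align*}
Each map $t\mapsto (a_ia_j)^t=e^{t\log(a_ia_j)}$ is convex on $[0,1]$, and the coefficients $b_{ij}^2$ are non-negative, so $F$ is a non-negative combination of convex functions and hence itself convex on $[0,1]$. Thus $F(t)\leq(1-t)F(0)+tF(1)$, which is exactly the g-convexity inequality for $f(\Q)=\lVert\Q\rVert_F^2$.

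The main obstacle is the middle step — establishing g-convexity of $\lVert\Q\rVert_F^2$ along PD geodesics — because, unlike the Euclidean setting, one cannot just invoke convexity of a quadratic: the geodesic $\Q_t$ is nonlinear in $t$ and $\Q_0,\Q_1$ need not commute. The spectral reduction above is what makes the calculation tractable: it replaces the non-commuting matrix expression by a scalar sum of exponentials in $t$. Once this lemma is in hand, the remaining combinatorics via Proposition~\ref{prop:gconvex_operations} are routine, and in combination with Lemma~\ref{lemma_ll_gconvex} this completes the proof of the global-convergence conclusion in Theorem~\ref{multfrob_gconvex}.
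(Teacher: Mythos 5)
Your proof is correct and follows essentially the same route as the paper's: rewrite the penalty via $\lVert\A\otimes\B\rVert_F^2=\lVert\A\rVert_F^2\lVert\B\rVert_F^2$, prove that $\Q\mapsto\mathrm{tr}(\Q^2)$ is g-convex along PD geodesics by diagonalizing $\Q_0^{-1/2}\Q_1\Q_0^{-1/2}$ and reducing to the scalar convex functions $t\mapsto(a_ia_j)^t$, then invoke Proposition~\ref{prop:gconvex_operations}. The only (cosmetic) difference is that you apply the Kronecker rule term-by-term and sum via part (i), whereas the paper packs all $K$ column covariances into one block-diagonal matrix $\bar{\Delt}$ and applies the Kronecker rule once; your bookkeeping is arguably cleaner since it avoids any appeal to geodesics within the block-diagonal submanifold.
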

\begin{proof}
Since $\mathrm{tr}(\A \otimes \mathbf{B}) = \mathrm{tr}(\A) \mathrm{tr}(\mathbf{B})$ and $\norm{\A}_F^2 = \mathrm{tr}(\A^T \A)$, then 
\begin{align}
    P(\Sig^{-1}, \Delt_1^{-1}, \dots, \Delt_K^{-1}) &:= \norm{\Sig^{-1}}_{F}^{2}\sum_{k=1}^{K}  \lambda_{k} \norm{\Delt_K^{-1}}_{F}^{2}\\ 
    &= \sum_{k=1}^{K}\lambda_k \mathrm{tr}(\Sig^{-2}) \mathrm{tr}(\Delt_k^{-2}) \\
    &= \sum_{k = 1}^{K} \lambda_k \mathrm{tr}(\Delt_k^{-2} \otimes \Sig^{-2}) \\
    &= \sum_{k=1}^{K} \lambda_k \norm{\Delt_k^{-1} \otimes \Sig^{-1}}_F^2 \\
    &= \sum_{k=1}^{K} \norm{(\sqrt{\lambda_k}\Delt_k^{-1}) \otimes \Sig^{-1}}_F^2 \\
    &= \norm{\bar{\Delt}^{-1} \otimes \Sig^{-1}}_F^2, \label{multfrob_geodesic_eq1}
\end{align}
where
\[ \bar{\Delt} = \begin{bmatrix}
\frac{1}{\sqrt{\lambda_1}} \Delt_1 &    0    & \cdots & 0 \\
   0    & \frac{1}{\sqrt{\lambda_2}} \Delt_2 & \cdots & 0 \\
\vdots  & \vdots  & \ddots & \vdots\\
   0    &    0    & \cdots & \frac{1}{\sqrt{\lambda_K}} \Delt_K
\end{bmatrix}. \]

We will next show that the function $f: \mathbb{S}_{++}^{np} \rightarrow \mathbb{R}$ defined by 
\begin{align} \label{multfrob_geodesic_eq2}
    f(\Q^\alpha) := \norm{\Q^\alpha}_F^2 = \mathrm{tr}((\Q^\alpha)^T \Q^\alpha) = \mathrm{tr}((\Q^{\alpha})^2), \qquad \alpha \in \{\pm 1\}
\end{align}
 is geodesically convex in $\Q \in \mathbb{S}_{++}^{np}$. That is, let $\Q_0, \Q_1 \in \mathbb{S}_{++}^{np}$ be given, and let $\Q_t$ be the geodesic between $\Q_0$ and $\Q_1$ as given in \eqref{eq:geodesic}. We want to show that $f(\Q^{\alpha}_t)$ is a convex function with respect to $t$.

So consider the eigendecomposition $\Q_0^{-\frac{1}{2}} \Q_1 \Q_0^{-\frac{1}{2}} = \U \D \U^T$, where $\U$ is an orthogonal matrix and $\D$ is a diagonal matrix with diagonal entries $d_i$. Then for all $t \in [0,1]$, it follows from \eqref{multfrob_geodesic_eq2} that
 \begin{align}
 f(\Q_t^{\alpha}) &= \mathrm{tr}(\Q^{\alpha}_t \Q^{\alpha}_t) \\
&= \mathrm{tr}(\Q_0^{\frac{\alpha}{2}} ( \Q_0^{-\frac{1}{2}} \Q_1 \Q_0^{-\frac{1}{2}})^{\alpha t} \Q_0^\alpha ( \Q_0^{-\frac{1}{2}} \Q_1 \Q_0^{-\frac{1}{2}})^{\alpha t} \Q_0^{\frac{\alpha}{2}}) \\
&= \mathrm{tr}(\Q_0^{\frac{\alpha}{2}} ( \U \D \U^T )^{\alpha t} \Q_0^\alpha ( \U \D \U^T )^{\alpha t} \Q_0^{\frac{\alpha}{2}}) \\
&= \mathrm{tr}(\Q_0^{\frac{\alpha}{2}} \U \D^{\alpha t} \U^T  \Q_0^\alpha \U \D^{\alpha t} \U^T \Q_0^{\frac{\alpha}{2}}) \\
&= \mathrm{tr}(\D^{\alpha t} \U^T  \Q_0^\alpha \U \D^{\alpha t} \U^T \Q_0^\alpha \U) \\
&= \mathrm{tr}(\D^{\alpha t} \A \D^{\alpha t} \A), \label{multfrob_geodesic_eq3}
\end{align}
where $\A := \U^T \Q_0^\alpha \U$.

Because $\A$ is symmetric and $(\D^{\alpha t} \A)_{ij} = d_i^{\alpha t} a_{ij}$, then
\begin{align*}
    (\D^{\alpha t} \A \D^{\alpha t} \A)_{ij} = \sum_{l = 1}^{np}( (\D^{\alpha t} \A)_{il} (\D^{\alpha t} &\A)_{lj} ) = \sum_{l = 1}^{np} ( d_i^{\alpha t} a_{il} d_l^{\alpha t} a_{lj} ) = \sum_{l = 1}^{np} ( d_i^{\alpha t} a_{il} d_l^{\alpha t} a_{jl} ).
\end{align*}
Plugging this into \eqref{multfrob_geodesic_eq3} gives
\begin{align*}
f(\Q_t^{\alpha}) &= \sum_{i = 1}^{np} ( \D^{\alpha t} \A \D^{\alpha t} \A )_{ii} = \sum_{i = 1}^{np} \sum_{l = 1}^{np} ( d_i^{\alpha t} a_{il} d_l^{\alpha t} a_{il} ) = \sum_{i = 1}^{np} \sum_{l = 1}^{np} ( a_{il}^2 (d_i d_l)^{\alpha t} ).
\end{align*}

Note that since $\Q_0$ and $\Q_1$ are positive definite, then $\Q_0^{-\frac{1}{2}} \Q_1 \Q_0^{-\frac{1}{2}}$ is positive definite. Thus, $d_i > 0$ for each $i = 1, \dots, np$, and because $d_i > 0$ for each $i = 1, \dots, np$, $f(\Q_t^{\alpha}) = \sum_{i = 1}^{np} \sum_{l = 1}^{np} ( a_{il}^2 (d_i d_l)^{\alpha t} )$ is a convex function in $t$. This proves $f$ is g-convex in $\Q \in \mathbb{S}_{++}^{np}$.

Since $f$ is g-convex in $\Q \in \mathbb{S}_{++}^{np}$ and $P(\Sig^{-1}, \bar{\Delt}^{-1}) = f(\bar{\Delt}^{-1} \otimes \Sig^{-1})$ from \eqref{multfrob_geodesic_eq1}, then the multiplicative Frobenius norm penalty $P$ is g-convex in $\Sig^{-1}, \Delt_1^{-1}, \dots, \Delt_K^{-1}$ by Proposition~\ref{prop:gconvex_operations}(ii).
\end{proof}

\globalconvergence*
\begin{proof}
From Lemma~\ref{lemma_ll_gconvex} and Lemma~\ref{lemma_multfrob_geodesic}, we see that the objective function corresponding to the multiplicative Frobenius penalized estimator in \eqref{multfrob} is the sum of jointly g-convex functions. Thus, the multiplicative Frobenius penalized estimator is also jointly geodesically convex in $\Sig^{-1}$ and $\Delt_1^{-1}, \ldots, \Delt_K^{-1}$.

Recall we have already proved that Algorithm~\ref{alg:multfrob} converges to a stationary point in Theorem~\ref{converge_to_stationary}. Since the multiplicative Frobenius iPCA estimator is geodesically convex, then all local optima are global optima, and the Flip-Flop algorithm for the multiplicative Frobenius iPCA estimator converges to the global solution of \eqref{multfrob}.
\end{proof}

\begin{remark}
The multiplicative Frobenius iPCA estimator is unique in the sense that the Kronecker production solution $\hat{\Sig} \otimes \hat{\Delt}_k$ is unique.
\end{remark}

\section{Equivalence between PCA and iPCA with Frobenius Penalties when $K = 1$}\label{sec:s_equiv}

One major reason for using the Frobenius penalties is that in the case where we observe only one data set, iPCA with the Frobenius penalties and the classical PCA are equivalent in the sense that the PC scores and loadings are the same. Throughout this section, we will assume $K = 1$, and we let the SVD of $\X$ (which has been column-centered) be given by $\X = \U \D \V^T$, where $\U \in \mathbb{R}^{n \times n}$, $\D \in \mathbb{R}^{n \times p}$ with the diagonal elements $d_i$, $\V \in \mathbb{R}^{p \times p}$, and $r = \mathrm{rank}(\X) < \min \{ n, p\}$. Without loss of generality, suppose also that $p \geq n$. In PCA, it is well known that the PC scores are given by the columns of $\U$ and the PC loadings are given by the columns of $\V$. We will show that iPCA with the Frobenius penalties also yield the same PC scores $\U$ and loadings $\V$.

Let us first consider iPCA with the additive Frobenius penalty $\lambda_{\Sig} \norm{\Sig^{-1}}_F^2 + \lambda_{\Delt} \norm{\Delt^{-1}}_F^2$. By Theorem 1 in \citet{allen2010transposable}, there is a unique global solution maximizing the matrix-variate normal log-likelihood with additive Frobenius penalties \eqref{addfrob} when $K = 1$. This global solution is given by
\begin{align*}
\hat{\Sig} = \U \boldsymbol{\beta} \U^T \quad \mathrm{and}  \quad \hat{\Delt} = \V \boldsymbol{\theta} \V^T,
\end{align*}
where $\boldsymbol{\beta} = \mathrm{diag}(\beta_1, \dots, \beta_n)$ and $\boldsymbol{\theta} = \mathrm{diag}(\theta_1, \dots, \theta_p)$ are defined as follows:
\begin{align*}
\beta_i &= \begin{dcases}
\frac{d_i^2 \theta_i}{n \theta_i^2 - 2 \lambda_{\Delt}}, & i = 1, \dots, r \\
\sqrt{\frac{2 \lambda_{\Sig}}{p}}, & i = r+1, \dots, n
\end{dcases} \\
\theta_i &= \begin{dcases}
\sqrt{\frac{-c_{2, i} - \sqrt{c_{2, i}^2 - 4 c_1 c_{3, i}}}{2 c_1}}, & i = 1, \dots, r \\
\sqrt{\frac{2 \lambda_{\Delt}}{n}}, & i = r+1, \dots, p,
\end{dcases}
\end{align*}
with coefficients
\begin{align*}
c_1 &= - 2 \lambda_{\Sig} n^2, \\
c_{2, i} &= d_i^4 (p - n) + 8 n \lambda_{\Sig} \lambda_{\Delt}, \\
c_{3, i} &= 2 \lambda_{\Delt} (d_i^4 - 4 \lambda_{\Sig} \lambda_{\Delt}).
\end{align*}

Since the PC scores and loadings from iPCA are the eigenvectors of $\hat{\Sig}$ and $\hat{\Delt}$, respectively, the above result shows that the PC scores and loadings from iPCA with the additive Frobenius penalty are precisely $\U$ and $\V$, as in PCA.

We next investigate the equivalence of iPCA with the multiplicative Frobenius penalty $\lambda \norm{\Sig^{-1}}_F^2 \norm{\Delt^{-1}}_F^2$ and PCA when $K = 1$. While we can follow a similar argument as \citet{allen2010transposable}, the proof is complicated by the non-separable penalty terms. That is, each term in the multiplicative penalty depends on both $\hat{\Sig}$ and $\hat{\Delt}$. We will return to this complication in the proof of the following theorem.

\begin{theorem}\label{thm:multfrob1}
In the case when $K = 1$, the solution to iPCA with the multiplicative Frobenius penalty \eqref{multfrob} is given by $\hat{\Sig} = \U \boldsymbol{\beta} \U^T$ and $\hat{\Delt} = \V \boldsymbol{\theta} \V^T$, where $\boldsymbol{\beta} = \mathrm{diag}(\beta_1, \dots, \beta_n)$ and $\boldsymbol{\theta} = \mathrm{diag}(\theta_1, \dots, \theta_p)$ satisfy the system
\begin{align*}
\begin{dcases}
\theta_i = \sqrt{\frac{-c_{2, i}(T) - \sqrt{c_{2, i}^2(T) - 4 c_1(T) c_{3, i}(T)}}{2 c_1(T)}}, & i = 1, \dots, r \\
\theta_i = \sqrt{\frac{2 \lambda}{n}}, & i = r + 1, \dots, p \\
\beta_i = \frac{d_i^2 \theta_i}{n \theta_i^2 - 2 \lambda}, & i = 1, \dots, r \\
\beta_i = \sqrt{\frac{2 \lambda T}{p}}, & i = r+1, \dots, n \\
T = \sum_{k = 1}^{r} \theta_k^{-2} + (p - r) \frac{n}{2 \lambda},
\end{dcases}
\end{align*}
with $c_1(T) = -2 \lambda n^2 T$, $c_{2, i}(T) = d_i^4 (p - n) + 8 n \lambda^2 T$, and $c_{3, i}(T) = 2 \lambda (d_i^4 - 4 \lambda^2 T)$. This solution exists and is unique (up to a scaling factor).
\end{theorem}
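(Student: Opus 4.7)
The plan is to analyze the stationarity conditions directly, leverage the SVD $\X = \U \D \V^T$, and then invoke Theorem~\ref{multfrob_gconvex} for existence and uniqueness. Specializing the gradient equations \eqref{mult_grad_eq1}--\eqref{mult_grad_eq2} to $K = 1$, and writing $S := \norm{\Sig^{-1}}_F^2$, $T := \norm{\Delt^{-1}}_F^2$, any critical point of \eqref{multfrob} satisfies
\begin{align*}
p \hat{\Sig} - \X \hat{\Delt}^{-1} \X^T - 2\lambda T \hat{\Sig}^{-1} &= \mathbf{0}, \\
n \hat{\Delt} - \X^T \hat{\Sig}^{-1} \X - 2\lambda S \hat{\Delt}^{-1} &= \mathbf{0}.
\end{align*}
By the same eigenvector argument as in Proposition~\ref{prop:addfrob_eig}, $\hat{\Sig}$ must share eigenvectors with $\X \hat{\Delt}^{-1} \X^T$ and $\hat{\Delt}$ with $\X^T \hat{\Sig}^{-1} \X$. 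Substituting the ansatz $\hat{\Delt} = \V \boldsymbol{\theta} \V^T$ gives $\X \hat{\Delt}^{-1} \X^T = \U (\D \boldsymbol{\theta}^{-1} \D^T) \U^T$, which is diagonalized by $\U$ and hence forces $\hat{\Sig} = \U \boldsymbol{\beta} \U^T$ for some diagonal $\boldsymbol{\beta}$; the symmetric computation confirms the ansatz for $\hat{\Delt}$ is self-consistent.

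In this eigenbasis the matrix equations decouple into scalar relations. For $i \leq r$, the $\Delt$-equation reads $n\theta_i - d_i^2/\beta_i - 2\lambda S/\theta_i = 0$, which rearranges to $\beta_i = d_i^2 \theta_i / (n\theta_i^2 - 2\lambda S)$, while for $i > r$ (where $d_i = 0$) the two equations respectively collapse to $\theta_i = \sqrt{2\lambda S/n}$ and $\beta_i = \sqrt{2\lambda T/p}$. Plugging the formula for $\beta_i$ into the $\Sig$-equation $p\beta_i^2 - d_i^2 \beta_i/\theta_i - 2\lambda T = 0$, clearing denominators, and substituting $u = \theta_i^2$ produces a quadratic in $u$; normalizing $S = 1$ (a legitimate choice since the factors in $\Sig \otimes \Delt$ are only identifiable up to a positive scalar) makes this quadratic precisely $c_1(T) u^2 + c_{2,i}(T) u + c_{3,i}(T) = 0$ with the coefficients in the theorem. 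Since $c_1(T) < 0$ and $c_{2,i}(T) > 0$ (using $p \geq n$), the positive root consistent with $\beta_i > 0$ (i.e., $n\theta_i^2 > 2\lambda$) is the $-\sqrt{\cdot}$ branch. Finally, summing $\theta_i^{-2}$ over all $i$ and using the closed form for $i > r$ gives the self-consistency identity $T = \sum_{i=1}^{r} \theta_i^{-2} + (p - r) n/(2\lambda)$.

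For existence and uniqueness, \eqref{multfrob} is jointly g-convex in $(\Sig^{-1}, \Delt^{-1})$ by Theorem~\ref{multfrob_gconvex} and the objective is coercive on the positive-definite cone (the log-barriers and Frobenius penalties together prevent escape to the boundary or to infinity), so a global maximizer exists. G-convexity further implies that the Kronecker product $\hat{\Sig}^{-1} \otimes \hat{\Delt}^{-1}$ is uniquely determined at the optimum, and since the rescaling $(\hat{\Sig}, \hat{\Delt}) \mapsto (c\hat{\Sig}, c^{-1}\hat{\Delt})$ preserves this Kronecker product, we obtain uniqueness up to a positive scalar, with $S = 1$ selecting a canonical representative. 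I expect the main obstacle to be the coupled self-consistency in $T$: the quadratic for $u = \theta_i^2$ depends on $T$, which in turn depends on the $\theta_i$. Proving directly that this implicit equation has a unique positive solution in the admissible regime is awkward, but the g-convex existence/uniqueness result bypasses it entirely, guaranteeing that a unique admissible triple $(\boldsymbol{\beta}, \boldsymbol{\theta}, T)$ exists and satisfies the stated system.
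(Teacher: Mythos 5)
Your derivation of the stationarity system itself tracks the paper's proof closely: same gradient equations, same argument that $\hat{\Sig}$ and $\hat{\Delt}$ share eigenvectors with $\U$ and $\V$ from $\X = \U\D\V^T$, same normalization of the scale invariance, and the same quadratic in $\theta_i^2$ with coefficients $c_1(T), c_{2,i}(T), c_{3,i}(T)$. One step you assert rather than prove is the branch selection: when $c_{3,i}(T) < 0$, Descartes' rule permits \emph{two} positive roots of the quadratic in $\theta_i^2$, so "the positive root consistent with $\beta_i>0$ is the $-\sqrt{\cdot}$ branch" is exactly the claim that needs verification. The paper closes this by explicitly computing $n\theta_i^2 - 2\lambda$ for each branch and showing it is $\geq 0$ for the $-\sqrt{\cdot}$ root and $\leq 0$ for the $+\sqrt{\cdot}$ root, which kills the second candidate via the positivity of $\beta_i$. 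Without that computation the system as stated is not pinned down.

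The more serious gap is in your existence/uniqueness shortcut. First, the objective is \emph{not} coercive on the positive-definite cone: along the ray $(\Sig^{-1},\Delt^{-1}) \mapsto (c\,\Sig^{-1}, c^{-1}\Delt^{-1})$ every term (log-determinants, trace, and the multiplicative penalty) is invariant, so the objective is constant on an unbounded set; coercivity can only hold after restricting to a normalized slice such as $\norm{\Sig^{-1}}_F^2 = 1$, and that restricted coercivity still needs to be checked. Second, Theorem~\ref{multfrob_gconvex} gives geodesic convexity, which guarantees that local minima are global, but it does \emph{not} give uniqueness of the minimizer; for that you would need strict g-convexity of the objective in the Kronecker product, and this is delicate because $-\log|\Q|$ is geodesically \emph{linear} (affine along geodesics of $\mathbb{S}_{++}$), so strictness must come from the trace and penalty terms and is nowhere established. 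Consequently the claim that "a unique admissible triple exists" cannot be imported from g-convexity as stated. The paper instead proves existence and uniqueness directly at the level of the scalar system: substituting the feasible root into the definition of $T$ yields a fixed-point equation $T = f(T)$ with $f$ increasing, $f(0) = (p-r)n/(2\lambda) > 0$, and $f(T) \rightarrow np/(2\lambda) < \infty$, from which a unique positive fixed point (and hence a unique $(\boldsymbol{\beta},\boldsymbol{\theta})$ up to scaling) follows. Some version of that fixed-point analysis, or a genuine strict-g-convexity argument, is needed to complete your proof.
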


\begin{proof}
As seen previously, the gradient equations from the matrix-variate normal log-likelihood with the multiplicative Frobenius penalty are
\begin{align*}
p \hat{\Sig} - 2 \lambda \hat{\Sig}^{-1} \norm{\hat{\Delt}^{-1}}_F^2 &= \X \hat{\Delt}^{-1} \X^T  \\
n \hat{\Delt} - 2 \lambda \hat{\Delt}^{-1} \norm{\hat{\Sig}^{-1}}_F^2 &= \X^T \hat{\Sig}^{-1} \X.
\end{align*}

Thus, the eigenvectors of $\hat{\Sig}$ and $\hat{\Delt}$ are equal to their respective quadratic forms. It follows that there is only one solution for the eigenvectors - namely, the left and right singular vectors of $\X = \U \D \V^T$. [Note: the last $n - r$ eigenvectors of $\U$ and the last $p - r$ eigenvectors of $\V$ are not unique.]

Put $\hat{\Sig} = \U \boldsymbol{\beta} \U^T$ and $\hat{\Delt} = \V \boldsymbol{\theta} \V^T$, where $\boldsymbol{\beta} = \mathrm{diag}(\beta_1, \dots, \beta_n)$ and $\boldsymbol{\theta} = \mathrm{diag}(\theta_1, \dots, \theta_p)$. Note that we have the implicit constraints $\beta_i > 0$ and $\theta_i > 0$ for each $i$ due to the positive definiteness of the covariances. Plugging in these decompositions and the SVD of $\X$ into the gradient equations gives
\begin{align*}
p \U \boldsymbol{\beta} \U^T - 2 \lambda \U \boldsymbol{\beta}^{-1} \U^T \norm{\V \boldsymbol{\theta}^{-1} \V^T}_F^2 &= \U \D \V^T \V \boldsymbol{\theta}^{-1} \V^T \V \D^T \U^T  \\
n \V \boldsymbol{\theta} \V^T - 2 \lambda \V \boldsymbol{\theta}^{-1} \V^T \norm{\U \boldsymbol{\beta}^{-1} \U^T}_F^2 &= \V \D^T \U^T \U \boldsymbol{\beta}^{-1} \U^T \U \D \V^T,
\end{align*}
or equivalently, using the orthogonality of $\U$ and $\V$,
\begin{align*}
p \boldsymbol{\beta} - 2 \lambda \boldsymbol{\beta}^{-1} \norm{ \boldsymbol{\theta}^{-1}}_F^2 &= \D \boldsymbol{\theta}^{-1} \D^T \\
n \boldsymbol{\theta} - 2 \lambda \boldsymbol{\theta}^{-1} \norm{ \boldsymbol{\beta}^{-1}}_F^2 &= \D^T \boldsymbol{\beta}^{-1} \D.
\end{align*}

We can write this element-wise as the following system of equations:
\begin{align}
p \beta_i - 2 \lambda \beta_i^{-1} \sum_{k = 1}^{p} \theta_k^{-2} &= d_i^2 \theta_i^{-1}, & i = 1, \dots, r \label{eq:multfrob_equiv_eq1} \\
p \beta_i - 2 \lambda \beta_i^{-1} \sum_{k = 1}^{p} \theta_k^{-2} &= 0, & i = r+1, \dots, n \label{eq:multfrob_equiv_eq2}\\
n \theta_i - 2 \lambda \theta_i^{-1} \sum_{k = 1}^{n} \beta_k^{-2} &= d_i^2 \beta_i^{-1}, & i = 1, \dots, r \label{eq:multfrob_equiv_eq3}\\
n \theta_i - 2 \lambda \theta_i^{-1} \sum_{k = 1}^{n} \beta_k^{-2} &= 0, & i = r+1, \dots, p. \label{eq:multfrob_equiv_eq4}
\end{align}

Now, to simplify this system of equations, we notice that if $(\boldsymbol{\beta}, \boldsymbol{\theta})$ solve this system, then for any positive scalar $c$, $(c\boldsymbol{\beta}, c^{-1}\boldsymbol{\theta})$ is also a solution. Without loss of generality, we may thus assume that $\boldsymbol{\beta}$ is normalized so that $\sum_{k = 1}^{n} \beta_k^{-2} = 1$.

It immediately follows from \eqref{eq:multfrob_equiv_eq4} that $\theta_i = \sqrt{\frac{2 \lambda}{n}}$ for $i = r + 1, \dots, p$. 

On the other hand, for $i = 1, \dots, r$, \eqref{eq:multfrob_equiv_eq3} gives us that
\begin{align*}
& n \theta_i^2 \beta_i - 2 \lambda \beta_i = d_i^2 \theta_i \\
\implies & \beta_i = \frac{d_i^2 \theta_i}{n \theta_i^2 - 2 \lambda}.
\end{align*}
Substituting this equation for $\beta_i$ into \eqref{eq:multfrob_equiv_eq1} then yields
\begin{align*}
& \frac{p d_i^2 \theta_i}{n \theta_i^2 - 2 \lambda} - \frac{2 \lambda (n \theta_i^2 - 2 \lambda)}{d_i^2 \theta_i} \sum_{k = 1}^{p} \theta_k^{-2} = \frac{d_i^2}{\theta_i} %\\
%\implies & p d_i^4 \theta_i^2 - 2 \lambda(n \theta_i^2 - 2 \lambda)^2 \sum_{k = 1}^{p} \theta_k^{-2} = d_i^4 (n \theta_i^2 - 2 \lambda).
\end{align*}
Finding a common denominator and expanding all terms yields 
\begin{align*}
&[-2 \lambda n^2 T] \theta_i^4 + [d_i^4 (p - n) + 8 n \lambda^2 T] \theta_i^2 + [2 \lambda (d_i^4 - 4 \lambda^2 T)] = 0,
\end{align*}
where $T = \sum_{k = 1}^{p} \theta_k^{-2} = \sum_{k = 1}^{r} \theta_k^{-2} + (p - r) \frac{n}{2 \lambda}$. For the sake of notation, let us define $c_1(T) = -2 \lambda n^2 T$, $c_{2, i}(T) = d_i^4 (p - n) + 8 n \lambda^2 T$, and $c_{3, i}(T) = 2 \lambda (d_i^4 - 4 \lambda^2 T)$.

Summarizing what we have done so far, $\boldsymbol{\beta} \succ 0$ and $\boldsymbol{\theta} \succ 0$ must satisfy:
\begin{align}
\beta_i &= \frac{d_i^2 \theta_i}{n \theta_i^2 - 2 \lambda}, & i = 1, \dots, r \label{eq:multfrob_equiv_eq5} \\
\beta_i &= \sqrt{\frac{2 \lambda T}{p}}, & i = r+1, \dots, n \label{eq:multfrob_equiv_eq6} \\
0 &= c_1(T) \theta_i^4 + c_{2, i}(T) \theta_i^2 + c_{3, i}(T), & i = 1, \dots, r \label{eq:multfrob_equiv_eq7} \\
\theta_i &= \sqrt{\frac{2 \lambda}{n}}, & i = r + 1, \dots, p \label{eq:multfrob_equiv_eq8} \\
T &= \sum_{k = 1}^{r} \theta_k^{-2} + (p - r) \frac{n}{2 \lambda}. & \label{eq:multfrob_equiv_eq9}
\end{align}

Now, from the quartic polynomial in \eqref{eq:multfrob_equiv_eq7}, the four possible roots are
\begin{align*}
\theta_i = \pm \sqrt{\frac{-c_{2, i}(T) \pm \sqrt{c_{2, i}^2(T) - 4 c_1(T) c_{3, i}(T)}}{2 c_1(T)}}.
\end{align*}
In any case, $c_1(T) < 0$, $c_{2, i}(T) > 0$, and
\begin{align*}
c_{2, i}^2(T) - 4 c_1(T) c_{3, i}(T) = d_i^8 (p - n)^2 + 16 \lambda^2 n p d_i^4 T > 0,
\end{align*}
so that
\begin{align}
\theta_i = \sqrt{\frac{-c_{2, i}(T) - \sqrt{c_{2, i}^2(T) - 4 c_1(T) c_{3, i}(T)}}{2 c_1(T)}}. \label{eq:multfrob_equiv_troot}
\end{align}
is always a real positive root. Furthermore, if $\theta_i$ is given by \eqref{eq:multfrob_equiv_troot} for each $i = 1, \dots, r$, then
\begin{align*}
n \theta_i^2 - 2 \lambda &= \frac{1}{-4 \lambda n T} \left[ - d_i^4 (p - n) - 8 n \lambda^2 T - \sqrt{d_i^8 (p - n)^2 + 16 \lambda^2 n p d_i^4 T} \right] - 2 \lambda \\
&= 2 \lambda + \frac{1}{4 \lambda n T}\left[ d_i^4 (p - n) + \sqrt{d_i^8 (p - n)^2 + 16 \lambda^2 n p d_i^4 T} \right] - 2 \lambda \\
&\geq 0.
\end{align*}
Thus, the corresponding $\beta_i$, which we have already shown to be given by $\beta_i = \frac{d_i^2 \theta_i}{n \theta_i^2 - 2 \lambda}$, is also positive. This shows that \eqref{eq:multfrob_equiv_troot} yields a feasible solution for our system of equations \eqref{eq:multfrob_equiv_eq5}-\eqref{eq:multfrob_equiv_eq9}. We claim that this is the only choice of $\theta_i$, which yields a feasible solution. To see this, we first immediately eliminate the two negative square root solutions due to the positivity constraint on $\theta_i$. We next divide the argument into three cases.

\textbf{Case 1:} If $d_i^4 - 4 \lambda^2 T = 0$, then $c_{3, 1}(T) = 0$ and so
\begin{align*}
\theta_i = \sqrt{\frac{-c_{2, i}(T) \pm c_{2, i}(T)}{2 c_1(T)}},
\end{align*}
which equals $0$ if we take the positive sign. Thus, in this case, there is only one feasible root by choosing the negative sign.

\textbf{Case 2:} If $d_i^4 - 4 \lambda^2 T > 0$, then $c_{3, i}(T) > 0$. Additionally, $c_{1}(T) < 0$ and $c_{2, i}(T) > 0$, so by Descartes' rule of signs, there is at most one real positive solution to \eqref{eq:multfrob_equiv_eq7}. As shown earlier, the root given in \eqref{eq:multfrob_equiv_troot} is a real positive solution. This must be the unique real positive root by Descartes'.

\textbf{Case 3:} If $d_i^4 - 4 \lambda^2 T < 0$, then $c_{3, i}(T) < 0$, so by Descartes' rule of signs, there are at most two real positive solutions to \eqref{eq:multfrob_equiv_eq7}. We have already found one real positive root, given by \eqref{eq:multfrob_equiv_troot}. Suppose also that the other possible root
\begin{align}
\theta_i = \sqrt{\frac{-c_{2, i}(T) + \sqrt{c_{2, i}^2(T) - 4 c_1(T) c_{3, i}(T)}}{2 c_1(T)}}. \label{eq:multfrob_equiv_root2}
\end{align}
is a real positive root. Since the corresponding $\beta_i$ must also be positive (in order to be feasible) and we have already shown that $\beta_i = \frac{d_i^2 \theta_i}{n \theta_i^2 - 2 \lambda}$, it follows that the denominator $n \theta_i^2 - 2 \lambda$ must be positive. However, substituting \eqref{eq:multfrob_equiv_root2} into this denominator gives
\begin{align*}
n \theta_i^2 - 2 \lambda &= \frac{1}{-4 \lambda n T} \left[ - d_i^4 (p - n) - 8 n \lambda^2 T + \sqrt{d_i^8 (p - n)^2 + 16 \lambda^2 n p d_i^4 T} \right] - 2 \lambda \\
&= 2 \lambda + \frac{1}{4 \lambda n T}\left[ d_i^4 (p - n) - \sqrt{d_i^8 (p - n)^2 + 16 \lambda^2 n p d_i^4 T} \right] - 2 \lambda \\
&\leq 0.
\end{align*}
This contradicts the positivity of $\beta_i$ and hence is not a feasible solution.

Thus, in any case, we have shown that there is only one feasible root of \eqref{eq:multfrob_equiv_eq7}, which is given by \eqref{eq:multfrob_equiv_troot}.

The last step of this proof is to show that there exists a $T$ which solves our system of equations \eqref{eq:multfrob_equiv_eq5}-\eqref{eq:multfrob_equiv_eq9}. In particular, we can substitute \eqref{eq:multfrob_equiv_troot} into \eqref{eq:multfrob_equiv_eq9} to see that $T$ must satisfy
\begin{align}
T &= \sum_{k = 1}^{r} \frac{2 c_1(T)}{-c_{2, i}(T) - \sqrt{c_{2, i}^2(T) - 4 c_1(T) c_{3, i}(T)}} + (p - r) \frac{n}{2 \lambda} \\
&= \sum_{k = 1}^{r} \frac{4 \lambda n^2 T}{d_i^4 (p - n) + 8 n \lambda^2 T + \sqrt{d_i^8 (p - n)^2 + 16 \lambda^2 n p d_i^4 T}} + (p - r) \frac{n}{2 \lambda}. \label{eq:multfrob_equiv_T}
\end{align}
Let $f(T)$ denote the right hand side of the equation in \eqref{eq:multfrob_equiv_T}. It can be shown that $f'(T) > 0$ for all $T \geq 0$. Also, when $T = 0$, we have that $f(T) = (p - r) \frac{n}{2 \lambda} > 0 = T$, and as $T \rightarrow \infty$, we have $f(T) \rightarrow \frac{n p}{2 \lambda} < \infty$. Together, these observations imply that there exists a unique solution to the equation $T = f(T)$ for some $T > 0$. Thus, there exists a unique feasible $T$, and hence also $\boldsymbol{\beta}$ and $\boldsymbol{\theta}$, to our system of equations \eqref{eq:multfrob_equiv_eq5}-\eqref{eq:multfrob_equiv_eq9}.
\end{proof}

As a direct consequence, since the PC scores and loadings from iPCA are the eigenvectors of $\hat{\Sig}$ and $\hat{\Delt}$, respectively, Theorem~\ref{thm:multfrob1} shows that when $K = 1$, the PC scores and loadings from iPCA with the multiplicative Frobenius penalty are precisely $\U$ and $\V$, as in PCA. In other words, iPCA with the additive or multiplicative Frobenius penalty is a proper generalization of PCA to multiple data sets.

\section{Subspace Consistency of the Additive $L_1$ Correlation iPCA Estimator} \label{sec:s_consistency}

While the multiplicative Frobenius estimator appears superior from an optimization point-of-view, we will show in this section that the additive $L_1$ correlation estimator satisfies one of the first statistical guarantees in the data integration context. Specifically, our primary objective will be to prove that the additive $L_1$ correlation estimator after one Flip-Flop iteration, outlined in Algorithm~\ref{alg:glasso_off_cor}, is a consistent estimator of the true joint covariance matrix $\Sig$ and hence also consistently estimates the underlying joint subspace.

As mentioned in the main text, the additive $L_1$ correlation estimator applies the $L_1$ penalty to the correlation matrix, rather than the usual covariance matrix, and has been adopted previously in \citet{zhou2014gemini} and \citet{rothman2008sparse} for non-integrated data. In this non-integrated setting, \citet{zhou2014gemini} derived convergence rates for the one-step version of Algorithm~\ref{alg:glasso_off_cor}, assuming only one matrix instance was observed from the matrix-variate normal distribution. Motivated by this approach, we extend the proof idea and results in \citet{zhou2014gemini}, where $K=1$, to iPCA, where we observe one matrix instance for each of the $K \geq 1$ distinct matrix-variate normal models. Our proof closely resembles \citet{zhou2014gemini} with technical challenges due to the estimation of $\Sig$ from multiple $\Delt_k$'s with different $p_k$'s. We will see later that the consistency rate obtained in our main theorem, Theorem~\ref{step3_thm}, which holds for all $K \geq 1$, is equivalent to the rate given in \citet{zhou2014gemini} when $K = 1$.

Note however that this subspace consistency property is an incredibly unique property of the additive $L_1$ correlation estimator. When trying to prove a similar result for the other proposed estimators, we run into several difficulties. For instance, if we use the additive $L_1$ covariance estimator, the proof of our main result in Theorem~\ref{step3_thm} no longer goes through due to an additional $\sqrt{p}$ term in the graphical lasso convergence rate \citep{rothman2008sparse}. More specifically, the graphical lasso convergence rate when applied to the inverse covariance matrix, known to be $O\left( \sqrt{\frac{(p + s) \log(p)}{n}} \right)$, is not fast enough to guarantee statistical convergence of the Flip-Flop algorithm. However, by applying the graphical lasso to the inverse correlation matrix, we obtain a faster rate of $O\left( \sqrt{\frac{s \log(p)}{n}} \right)$, which then enables us to prove Theorem~\ref{step3_thm}. Other works, which have studied convergence rates of sparse penalties in the non-integrated setting, are also not applicable for integrated data problems. In particular, \citet{tsiligkaridis2013convergence} proved convergence rates for the additive $L_1$ covariance penalty but assumed multiple matrix observations per matrix-variate normal model. Since iPCA assumes only one matrix observation per model, the guarantees from \citet{tsiligkaridis2013convergence} do not hold. 

The problem of proving rates of convergence for the Frobenius estimators is even more difficult than the $L_1$ estimators. Without imposing some additional structure on the covariance matrices, we cannot even hope to prove statistical consistency in the $p_k > n$ setting. For the $L_1$ penalties, it is natural to impose a sparsity constraint, but with the dense Frobenius penalties, the appropriate underlying structure of the covariance matrices is unclear. One preliminary idea is to exploit the g-convexity of the log-likelihood function and impose some additional structure based upon the associated manifold. However, we leave this for future research as it requires developing a whole new set of tools, combined with Riemannian geometry, to study statistical properties in the manifold space, rather than the usual Euclidean space.

\begin{algorithm}
\caption{Flip-Flop Algorithm for Additive $L_1$ Correlation-Penalized iPCA Estimators}\label{alg:glasso_off_cor}
\begin{algorithmic} [1]
\State Center the columns of $\X_1, \dots, \X_K$, and initialize $\hat{\Sig}$, $\hat{\Delt}_1, \dots, \hat{\Delt}_K$ to be the identity matrix of the appropriate size.
\While{not converged}

\For{$k = 1, \dots, K$}
\State Compute sample covariance matrix: $\hat{\mathbf{S}}_k = \frac{1}{n} \X_k^T \smash[t] {\hat{\Sig}}^{-1} \X_k$ \tikzmark{top1} 
\State Get standard deviation estimate: $\hat{\W}_k = \text{diag}(\hat{\mathbf{S}}_k)^{1/2}$
\State Convert to sample correlation matrix: $\hat{\mathbf{S}}_{\rho,k} = \hat{\W}_k^{-1} \hat{\mathbf{S}}_k \hat{\W}_k^{-1}$
\State Apply graphical lasso to estimate correlation matrix:

\vspace{1mm}
\setlength\parindent{2cm} $\displaystyle \hat{\Delt}_{\rho, k}^{-1} =  \argmin_{\Delt^{-1}_{\rho, k}} -\log | \Delt^{-1}_{\rho, k} | + \mathrm{tr}(\hat{\mathbf{S}}_{\rho,k} \Delt^{-1}_{\rho, k}) + \lambda_k \norm{\Delt^{-1}_{\rho, k}}_{1, \text{off}}$ \label{deltkhat_rho_step} \tikzmark{right1}
\vspace{-1mm}
\State Convert back to covariance estimate: $\hat{\Delt}_k = \hat{\W}_k \hat{\Delt}_{\rho, k} \hat{\W}_k$ \label{deltkhat_step} \tikzmark{bottom1} 
\EndFor

\State Compute sample covariance matrix: $\hat{\mathbf{S}}_{\Sig} = \frac{1}{p} \sum_{k=1}^{K} \X_k \smash[t] {\hat{\Delt}}^{-1}_k \X_k^T$ \label{shat_sig_step} \tikzmark{top2}
\State Get standard deviation estimate: $\hat{\W}_{\Sig} = \text{diag}(\hat{\mathbf{S}}_{\Sig})^{1/2}$
\State Convert to sample correlation matrix: $\hat{\mathbf{S}}_{\rho, \Sig} = \hat{\W}_{\Sig}^{-1} \hat{\mathbf{S}}_{\Sig} \hat{\W}_{\Sig}^{-1}$ \label{shat_rho_sig_step}
\State Apply graphical lasso to estimate correlation matrix:

\vspace{1mm}
\setlength\parindent{2cm} $\displaystyle \hat{\Sig}^{-1}_{\rho} =  \argmin_{\Sig^{-1}_{\rho}} -\log | \Sig^{-1}_{\rho} | + \mathrm{tr}(\hat{\mathbf{S}}_{\rho,\Sig} \Sig^{-1}_{\rho}) + \lambda_{\Sig} \norm{\Sig^{-1}_{\rho}}_{1, \text{off}}$
\vspace{-1mm}
\State Convert back to covariance estimate: $\hat{\Sig} = \hat{\W}_{\Sig} \hat{\Sig}_{\rho} \hat{\W}_{\Sig}$ \label{alg:sig_noniterative} \tikzmark{bottom2}
\EndWhile \vspace{-12pt}
\end{algorithmic}
\AddNote{top1}{bottom1}{right1}{Update $\Delt_k$}
\AddNote{top2}{bottom2}{right1}{Update $\Sig$}
\end{algorithm}

We next collect notation. Let $\hat{\Sig}$ denote the additive $L_1$ correlation estimator obtained after one iteration of the while loop in Algorithm~\ref{alg:glasso_off_cor}. Assume that for each $k = 1, \dots, K$, the true population model is given by $\X_k \sim N_{n, p_k}(\mathbf{0}, \: \Sig \otimes \Delt_k)$, where $\Sig = (\sigma_{ij})$ and $\Delt_k = (\delta_{k,ij})$. For the purpose of identifiability, define $\Sig^* = (\sigma_{*, ij}) = n \Sig / \mathrm{tr}(\Sig)$ and $\Delt_k^* = (\delta_{k, ij}^*) = \mathrm{tr}(\Sig) \Delt_k \ n$ so that $\mathrm{tr}(\Sig^*) = n$ and $\Sig^* \otimes \Delt_k^* = \Sig \otimes \Delt_k$ for each $k$. Let $\rho(\Sig)$ and $\rho(\Delt_k)$ denote the true correlation matrices corresponding to $\Sig$ and $\Delt_k$, respectively. Define $\mathrm{vec}(\A)$ to be the vectorization operator which creates a column vector from the matrix $\A$ by stacking the columns of $\A$ below one another. Then for each $k = 1, \dots, K$, put $\tilde{\mathbf{S}}_k = \mathrm{vec}(\X_k) \mathrm{vec}(\X_k)^T$ and $\bar{\mathbf{S}}_k = \mathrm{vec}(\X_k)^T \mathrm{vec}(\X_k)$. Let $\tilde{\mathbf{S}}_{k}^{rq}$ denote the $r, q^{th}$ block of size $n \times n$ of $\tilde{\mathbf{S}}_k$, and let $\bar{\mathbf{S}}_k^{rq}$ denote the $r, q^{th}$ block of size $p_k \times p_k$ of $\bar{\mathbf{S}}_k$. If $\A$ is a matrix, let $\norm{\A}_2$ denote the operator norm or the maximum singular value of $\A$, let $\norm{\A}_F$ denote the Frobenius norm (i.e. $\norm{\A}_F^2 = \sum_{i,j} a_{ij}^2$), let $\norm{\A}_{0,\text{off}}$ denote the number of non-zero non-diagonal entries in $\A$, let $\norm{\A}_1 = \sum_{i,j} |a_{ij}|$, and let $\norm{\A}_{1,\text{off}} = \sum_{i\neq j} |a_{ij}|$. Denote the stable rank of $\A$ by $r(\A) = \norm{\A}_F^2/\norm{\A}_2^2$. Let us also write for a real symmetric matrix $\A$, $\phi_{\min}(\A)$ to be the minimum eigenvalue of $\A$. Define $\sigma_{\min} = \min_i \sigma_{ii}$, $\sigma_{\max} = \max_i \sigma_{ii}$, $\delta_{k,\min} = \min_i \delta_{k,ii}$, $\delta_{k,\max} = \max_i \delta_{k,ii}$, and similarly for $\sigma_{*, \min}$, $\sigma_{*, \max}$, $\delta_{k, \min}^*$, and $\delta_{k, \max}^*$. Also write $a \vee b = \max(a,b)$ and $a \wedge b = \min(a,b)$. If $a = o(b)$, then $\left| a/b \right| \rightarrow 0$ as $n, p_1, \dots, p_K \rightarrow \infty$. If $a \asymp b$, then there exists positive constants $c, C$ such that $cb \leq a \leq Cb$ as $n, p_1, \dots, p_K \rightarrow \infty$. Lastly, we also adopt the notation defined in Algorithm \ref{alg:glasso_off_cor} for the remainder of this section.

%Note that by \ref{a2} and positive definiteness of $\Sig$ and $\Delt_k$, we have that $0 < \sigma_{\min} \leq \sigma_{\max} \leq \norm{\Sig}_2 < \infty$ and $0 < \delta_{k,\min} \leq \delta_{k,\max} \leq \norm{\Delt_k}_2 < \infty$. 

To establish the consistency of $\hat{\Sig}$, we require the following assumptions, which are generalizations of those in \citep{zhou2014gemini}:
\begin{enumerate} [label=(A\arabic*)]
\item Assume that $\Sig^{-1}$ and $\Delt^{-1}_k$ are sparse with respect to each other's dimensions: $s_{\Sig} := \norm{\Sig^{-1}}_{0, \text{off}} = o \left( \frac{p^2}{p_k \log (n \vee p_k)} \right)$ and $s_k := \norm{\Delt^{-1}_k}_{0, \text{off}} = o \left( \frac{n}{\log (n \vee p_k)} \right)$ for each $k = 1, \dots, K$.\label{a1}
\item Assume that we have uniformly bounded spectra: $0 < \phi_{\min}(\Sig) \leq \phi_{\max}(\Sig) < \infty$ and $0 < \phi_{\min}(\Delt_k) \leq \phi_{\max}(\Delt_k) < \infty$ for each $k = 1, \dots, K$. \label{a2}
\item Assume that the inverse correlation matrices satisfy $\norm{\rho(\Sig)^{-1}}_1 \asymp n$ and $\norm{\rho(\Delt_k)^{-1}}_1 \asymp p_k$ for each $k = 1, \dots, K$. \label{a3}
\item Assume that $K$ is finite and the growth rate of $n$ and $p_1, \dots, p_K$ satisfy $n \vee p_k = o(\mathrm{exp}(n \wedge p_k))$ for each $k = 1, \dots, K$. \label{a4}
\end{enumerate}

\begin{remark}
\begin{enumerate} [label=(\arabic*)]
\item Rather than verifying the sparsity assumption of $\Sig^{-1}$ in \ref{a1} and the growth rate \ref{a4} for each $k$, it is sufficient to check that $s_{\Sig} = o \left( \frac{p^2}{\max_k p_k \log (n \vee \max_k p_k)} \right)$ and $n \vee \max_k p_k = o(\mathrm{exp}(n \wedge \min_k p_k))$, respectively.
\item \ref{a1} implies that $\sqrt{\frac{s_k \log(n \vee p_k)}{n}} \rightarrow 0$ and $\sqrt{\frac{s_{\Sig} \log(n \vee p_k)}{p_k}} \rightarrow 0$ as $n, p_1, \dots, p_K \rightarrow \infty$.
\end{enumerate}
\end{remark}

%For now, we also assume that $\sqrt{n} \geq p/\sqrt{p_k}$ for each $k = 1, \dots, K$, or what is classically known as the ``large n'' setting. We will later discuss how to adapt our results to the $\sqrt{n} < p/\sqrt{p_k}$ for each $k = 1, \dots, K$ case, or the``large p'' setting.

Under these assumptions, we prove our main statistical consistency result in Theorem~\ref{step3_thm}. From this, we can easily establish subspace consistency for $\hat{\Sig}$ in Corollary~\ref{subspace_consistency_thm}. The overall idea of this convergence proof is to follow Algorithm~\ref{alg:glasso_off_cor} and sequentially bound each step in the algorithm. First, the error from line~\ref{deltkhat_step} of Algorithm~\ref{alg:glasso_off_cor} can be bounded by adapting results from \citet{rothman2008sparse}. Then, in Lemma~\ref{step1_lemma}, we bound the error between $\Sig^*$ and the sample covariance estimate $\hat{\mathbf{S}}_{\Sig}$ defined in the step~\ref{shat_sig_step}. Following the Flip-Flop algorithm, we next bound the error between $\rho(\Sig)$ and the sample correlation estimate $\hat{\mathbf{S}}_{\rho,\Sig}$ from step~\ref{shat_rho_sig_step} in Theorem~\ref{step2_thm}. Finally, in Theorem~\ref{step3_thm}, we prove the rate of convergence in the operator and Frobenius norms for $\smash[t] {\hat{\Sig}}^{-1}$ and $\hat{\Sig}$ as defined in step~\ref{alg:sig_noniterative} of the algorithm. Two direct consequences of convergence in the operator norm are consistent eigenvalues and eigenvectors of $\hat{\Sig}$, which in turn yield subspace consistency.

\subsection{Preliminaries} \label{sec:s_consistency_prelims}

The main driver behind Lemma~\ref{step1_lemma} is a large deviation inequality, namely Theorem 13.1 from \citet{zhou2014supplement}. As this is an important result used multiple times in the proof of Lemma~\ref{step1_lemma}, we state Theorem 13.1 from \citet{zhou2014supplement} below in a slightly different form for convenience.

\begin{theorem} \label{main_ci_thm}
Assume that $n \vee p_k \geq 2$. Let $\M$ be an $n \times n$ matrix and $\N$ be a $p_k \times p_k$ matrix such that $\frac{1}{n} \norm{\M}_F^2 < \infty$ and $\frac{1}{p_k} \norm{\N}_F^2 < \infty$. Define $\tau_k = 2C \tilde{K}^2 \log^{1/2}(n \vee p_k)$, where $C := \frac{1}{\sqrt{c}} \vee \frac{1}{c} \vee 1$ and $\tilde{K}$ and $c$ are the constants from Theorem 12.1 in \citet{zhou2014supplement}. 
\begin{enumerate}
\item [(i)] If the stable ranks satisfy $r(\Sig^{1/2} \M \Sig^{1/2}) \geq 4 \log(n \vee p_k)$ and $r(\Delt_k^{1/2} \N \Delt_k^{1/2}) \geq 4 \log(n \vee p_k)$, then with probability $1 - \frac{3}{(n \vee p_k)^2}$, we have that
\begin{align*}
\norm{\mathrm{diag}(\Delt_k)^{-1/2} \left( \frac{1}{n} \sum_{q=1}^{n} \sum_{r=1}^{n} \M_{qr} \bar{\mathbf{S}}_k^{rq} \right) \mathrm{diag}(\Delt_k)^{-1/2} - \frac{\mathrm{tr}(\Sig \M)}{n} \rho(\Delt_k)}_{\infty} &\leq D_k \tau_k \\
\text{and } \norm{\mathrm{diag}(\Sig)^{-1/2} \left( \frac{1}{p} \sum_{q=1}^{p_k} \sum_{r=1}^{p_k} \N_{qr} \tilde{\mathbf{S}}_k^{rq} \right) \mathrm{diag}(\Sig)^{-1/2} - \frac{\mathrm{tr}(\Delt_k \N)}{p} \rho(\Sig)}_{\infty} &\leq D'_k \tau_k,
\end{align*}
where $D_k = \frac{1}{n} \norm{\Sig^{1/2} \M \Sig^{1/2}}_F$ and $D_k' = \frac{1}{p} \norm{\Delt_k^{1/2} \N \Delt_k^{1/2}}_F$.
\item [(ii)] If $n \vee p_k = o(\mathrm{exp}{n \wedge p_k})$, then with probability $1 - \frac{3}{(n \vee p_k)^2}$, the above inequalities hold with $D_k = \frac{2}{\sqrt{n}} \norm{\Sig}_2 \norm{\M}_2$ and $D_k' = \frac{2 \sqrt{p_k}}{p} \norm{\Delt_k}_2 \norm{\N}_2$.
\end{enumerate}
\end{theorem}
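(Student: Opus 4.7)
The plan is to reduce both displayed bounds to concentration of a centered Gaussian quadratic form, and then control the entry-wise maximum by a union bound. First I would use the Kronecker structure of the matrix-variate normal: write $\X_k = \Sig^{1/2} \mathbf{Z}_k \Delt_k^{1/2}$, where $\mathbf{Z}_k$ has i.i.d.\ standard normal entries. A direct computation from the block definitions shows
\begin{align*}
\frac{1}{n}\sum_{q,r}\M_{qr}\bar{\mathbf{S}}_k^{rq} \;=\; \frac{1}{n}\X_k^T \M \X_k \;=\; \frac{1}{n}\Delt_k^{1/2}\mathbf{Z}_k^T(\Sig^{1/2}\M\Sig^{1/2})\mathbf{Z}_k\Delt_k^{1/2},
\end{align*}
whose expectation equals $\frac{\mathrm{tr}(\Sig\M)}{n}\Delt_k$. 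Conjugating by $\mathrm{diag}(\Delt_k)^{-1/2}$ converts the mean exactly into $\frac{\mathrm{tr}(\Sig\M)}{n}\rho(\Delt_k)$, and the $(i,j)$ entry of the residual becomes a centered Gaussian chaos $\mathrm{vec}(\mathbf{Z}_k)^T B_{ij}\mathrm{vec}(\mathbf{Z}_k)-\mathbf{E}[\cdot]$, with a deterministic kernel $B_{ij}$ built from $\Sig^{1/2}\M\Sig^{1/2}$ and the row/column selectors of $\Delt_k^{1/2}\mathrm{diag}(\Delt_k)^{-1/2}$.

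Next I would apply a Hanson--Wright-type bound (the role played in the citation by Theorem 12.1 of Zhou's supplement): for standard Gaussian $\mathbf{Z}$ and symmetric $B$,
\begin{align*}
\Pr\!\left(\left|\mathbf{Z}^T B\mathbf{Z}-\mathbf{E}[\mathbf{Z}^T B\mathbf{Z}]\right|>t\right) \;\leq\; 2\exp\!\left(-c\min\!\left(\tfrac{t^2}{\tilde{K}^4\norm{B}_F^2},\,\tfrac{t}{\tilde{K}^2\norm{B}_2}\right)\right).
\end{align*}
For part (i), the stable-rank hypothesis $r(\Sig^{1/2}\M\Sig^{1/2})\geq 4\log(n\vee p_k)$ places the threshold $t = 2C\tilde{K}^2\log^{1/2}(n\vee p_k)\cdot D_k$ in the Frobenius branch of the Hanson--Wright bound. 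Bounding $\norm{B_{ij}}_F$ uniformly by $D_k = \tfrac{1}{n}\norm{\Sig^{1/2}\M\Sig^{1/2}}_F$---which uses that the diagonal rescaling preserves Frobenius size by exactly turning $\delta_{k,jj}^{1/2}$ into unit-length columns of $\rho(\Delt_k)$---and then union-bounding over the $p_k^2$ entries (consuming two of the powers of $n\vee p_k$ in the tail) yields the claim with residual probability $3/(n\vee p_k)^2$.

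For part (ii), when the stable rank is small the Frobenius branch is not sharp, and I would instead rely on the operator-norm branch together with the growth condition $n\vee p_k = o(\exp(n\wedge p_k))$, which keeps $\log(n\vee p_k)$ well inside the sub-exponential regime. Using $\norm{B_{ij}}_2\leq\tfrac{1}{n}\norm{\Sig}_2\norm{\M}_2$ and keeping track of the remaining spectral factors produces $D_k=\tfrac{2}{\sqrt{n}}\norm{\Sig}_2\norm{\M}_2$. The second pair of bounds, for $\tilde{\mathbf{S}}_k^{rq}$ against $\rho(\Sig)$, follows from the identical argument under the swap $(n\leftrightarrow p_k,\ \Sig\leftrightarrow\Delt_k,\ \M\leftrightarrow\N)$, giving $D_k'=\tfrac{1}{p}\norm{\Delt_k^{1/2}\N\Delt_k^{1/2}}_F$ and respectively $D_k'=\tfrac{2\sqrt{p_k}}{p}\norm{\Delt_k}_2\norm{\N}_2$. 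The main obstacle is bookkeeping: verifying that the diagonal rescaling cleanly produces $\rho(\Delt_k)$ (resp.\ $\rho(\Sig)$) with no leftover cross-term, and tracking the absolute constants $c,\tilde{K},C$ through the union bound so that the probability lands exactly at $3/(n\vee p_k)^2$ rather than a weaker polynomial decay.
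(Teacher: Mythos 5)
The first thing to note is that the paper does not prove this statement at all: it is Theorem 13.1 of the \citet{zhou2014supplement} supplement, restated ``in a slightly different form for convenience,'' and the text explicitly defers to that reference for the proof. Your sketch is therefore a reconstruction of the cited argument rather than an alternative to anything in the paper, but it is the right reconstruction. The reduction $\frac{1}{n}\sum_{q,r}\M_{qr}\bar{\mathbf{S}}_k^{rq}=\frac{1}{n}\X_k^T\M\X_k=\frac{1}{n}\Delt_k^{1/2}\mathbf{Z}^T(\Sig^{1/2}\M\Sig^{1/2})\mathbf{Z}\Delt_k^{1/2}$ (with $\M$ symmetric in every application), the identification of the mean as $\frac{\mathrm{tr}(\Sig\M)}{n}\Delt_k$, the observation that conjugating by $\mathrm{diag}(\Delt_k)^{-1/2}$ turns the columns $\Delt_k^{1/2}e_i$ into unit vectors so that $\norm{B_{ij}}_F=\frac{1}{n}\norm{\Sig^{1/2}\M\Sig^{1/2}}_F$ with no leftover cross-term, the entrywise Hanson--Wright bound (which is exactly what Zhou's Theorem 12.1 supplies), and the union bound over at most $(n\vee p_k)^2$ entries calibrated so that $\tau_k$ puts the individual tail at $(n\vee p_k)^{-4}$ --- this is precisely the skeleton of the cited proof. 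The swap $(n,\Sig,\M)\leftrightarrow(p_k,\Delt_k,\N)$ with the extra $p_k/p$ renormalization correctly produces $D_k'$ in both regimes.

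One step in your part (ii) should be tightened. Because Hanson--Wright bounds the tail by $\exp(-c\min(t^2/\tilde K^4\norm{B}_F^2,\,t/\tilde K^2\norm{B}_2))$, you cannot simply ``switch to the operator-norm branch'': you need \emph{both} exponents to exceed $4\log(n\vee p_k)$. The actual mechanism is that the sub-Gaussian term is satisfied for free, since $\frac{1}{n}\norm{\Sig^{1/2}\M\Sig^{1/2}}_F\le\frac{1}{\sqrt{n}}\norm{\Sig}_2\norm{\M}_2=\tfrac{1}{2}D_k$, so the enlarged threshold $t=\tau_k D_k$ already makes $t^2/\norm{B_{ij}}_F^2\gtrsim\log(n\vee p_k)$ with no stable-rank hypothesis; the growth condition $n\vee p_k=o(\exp(n\wedge p_k))$ is then needed only to guarantee that the sub-exponential term $t/\norm{B_{ij}}_2\gtrsim\sqrt{n\log(n\vee p_k)}$ also dominates $\log(n\vee p_k)$. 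Your phrasing gestures at this but attributes the bound to the wrong branch; as literally written (``rely on the operator-norm branch''), the threshold $t\asymp\log^{1/2}(n\vee p_k)$ would not by itself produce polynomial decay. This is a fixable imprecision, not a fatal gap; the remaining issue you flag yourself (tracking constants so the failure probability is exactly $3/(n\vee p_k)^2$ rather than some other polynomial) is genuine bookkeeping that the paper sidesteps entirely by citation.
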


We refer to \citet{zhou2014supplement} for the proof of this theorem, but if one looks at the interior of this proof, we obtain the following useful result, presented in Corollary~\ref{main_ci_cor}.

To simplify notation, let $\mathcal{E}_{\Delt}(k,\M)$ denote the event
\begin{align*}
\left \{ \norm{\mathrm{diag}(\Delt_k)^{-1/2} \left( \frac{1}{n} \sum_{q=1}^{n} \sum_{r=1}^{n} \M_{qr} \bar{\mathbf{S}}_k^{rq} \right) \mathrm{diag}(\Delt_k)^{-1/2} - \frac{\mathrm{tr}(\Sig \M)}{n} \rho(\Delt_k)}_{\infty} \leq \frac{2}{\sqrt{n}} \norm{\Sig}_2 \norm{\M}_2 \tau_k \right \},
\end{align*}
let $\mathcal{E}_{\Sig}(k,\N)$ denote the event
\begin{align*}
\left \{ \norm{\mathrm{diag}(\Sig)^{-1/2} \left( \frac{1}{p} \sum_{q=1}^{p_k} \sum_{r=1}^{p_k} \N_{qr} \tilde{\mathbf{S}}_k^{rq} \right) \mathrm{diag}(\Sig)^{-1/2} - \frac{\mathrm{tr}(\Delt_k \N)}{p} \rho(\Sig)}_{\infty} \leq \frac{2 \sqrt{p_k}}{p} \norm{\Delt_k}_2 \norm{\N}_2 \tau_k \right \},
\end{align*}
and define for each $k = 1, \dots, K$,
\begin{align*}
\nu_{n,k} &:= \frac{2}{\sqrt{n}} \: \tau_k = 4 C \tilde{K}^2 \sqrt{\frac{\log(n \vee p_k)}{n}} \\
\nu_{p_k} &:= \frac{2 \sqrt{p_k}}{p} \: \tau_k = 4 C \tilde{K}^2 \frac{\sqrt{p_k \log(n \vee p_k)}}{p} = 4 C \tilde{K}^2 \frac{p_k}{p} \sqrt{\frac{\log(n \vee p_k)}{p_k}}.
\end{align*}

\begin{corollary}\label{main_ci_cor}
Assume the same conditions and notation as Theorem~\ref{main_ci_thm}. Also suppose that $n \vee p_k = o(\mathrm{exp}{n \wedge p_k})$ Then under the event $\mathcal{E}_{\Delt}(k, \M)$, we have that 
\begin{align*}
\left | \left( \frac{1}{n} \sum_{q=1}^{n} \sum_{r=1}^{n} \M_{qr} \bar{\mathbf{S}}_k^{rq} \right)_{ij} - \frac{\mathrm{tr}(\Sig \M)}{n} \delta_{k,ij} \right | \leq \nu_{n,k} \norm{\Sig}_2 \norm{\M}_2 \sqrt{\delta_{k,ii} \delta_{k,jj}} \qquad \forall \: i,j,
\end{align*}
and under the event $\mathcal{E}_{\Sig}(k, \N)$, we have that
\begin{align*}
\left | \left( \frac{1}{p} \sum_{q=1}^{p_k} \sum_{r=1}^{p_k} \N_{qr} \tilde{\mathbf{S}}_k^{rq} \right)_{ij} - \frac{\mathrm{tr}(\Delt_k \N)}{p} \sigma_{ij} \right | \leq \nu_{p_k} \norm{\Delt_k}_2 \norm{\N}_2 \sqrt{\sigma_{ii} \sigma_{jj}} \qquad \forall \: i,j.
\end{align*}

\end{corollary}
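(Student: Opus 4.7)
The plan is to observe that the corollary is just an entrywise de-standardization of the $\ell_\infty$ bound that already defines the events $\mathcal{E}_{\Delt}(k,\M)$ and $\mathcal{E}_{\Sig}(k,\N)$. So nothing new about Theorem~\ref{main_ci_thm} needs to be proved; the only work is to carefully track how a diagonal pre- and post-multiplication by $\mathrm{diag}(\Delt_k)^{-1/2}$ (respectively $\mathrm{diag}(\Sig)^{-1/2}$) acts on individual entries.

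Concretely, I would first abbreviate $\A_k := \frac{1}{n} \sum_{q,r} \M_{qr} \bar{\mathbf{S}}_k^{rq}$ and rewrite the matrix appearing in the event as
\begin{align*}
\mathrm{diag}(\Delt_k)^{-1/2} \A_k \, \mathrm{diag}(\Delt_k)^{-1/2} - \tfrac{\mathrm{tr}(\Sig \M)}{n}\rho(\Delt_k)
= \mathrm{diag}(\Delt_k)^{-1/2} \Big[\A_k - \tfrac{\mathrm{tr}(\Sig \M)}{n}\Delt_k\Big] \mathrm{diag}(\Delt_k)^{-1/2},
\end{align*}
using $\rho(\Delt_k) = \mathrm{diag}(\Delt_k)^{-1/2}\Delt_k\,\mathrm{diag}(\Delt_k)^{-1/2}$. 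Taking the $(i,j)$ entry, since $\mathrm{diag}(\Delt_k)^{-1/2}$ is diagonal, the $(i,j)$ entry of the left-hand side equals
\begin{align*}
\frac{1}{\sqrt{\delta_{k,ii}\,\delta_{k,jj}}}\,\Big[(\A_k)_{ij} - \tfrac{\mathrm{tr}(\Sig \M)}{n}\delta_{k,ij}\Big].
\end{align*}

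Next, on the event $\mathcal{E}_{\Delt}(k,\M)$ the $\ell_\infty$ (entrywise max) norm of this matrix is at most $\frac{2}{\sqrt{n}}\norm{\Sig}_2\norm{\M}_2 \tau_k$. Applying this to the $(i,j)$ entry and multiplying both sides by $\sqrt{\delta_{k,ii}\,\delta_{k,jj}}>0$ yields
\begin{align*}
\left|(\A_k)_{ij} - \tfrac{\mathrm{tr}(\Sig \M)}{n}\delta_{k,ij}\right|
\leq \tfrac{2}{\sqrt{n}}\tau_k \norm{\Sig}_2\norm{\M}_2 \sqrt{\delta_{k,ii}\,\delta_{k,jj}}
= \nu_{n,k}\norm{\Sig}_2\norm{\M}_2 \sqrt{\delta_{k,ii}\,\delta_{k,jj}},
\end{align*}
by definition of $\nu_{n,k}$. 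The second inequality of the corollary follows by the exact same argument with $(\Sig,\Delt_k,n,\M,\A_k,\bar{\mathbf{S}}_k^{rq})$ replaced by $(\Delt_k,\Sig,p,\N,\frac{1}{p}\sum_{q,r}\N_{qr}\tilde{\mathbf{S}}_k^{rq},\tilde{\mathbf{S}}_k^{rq})$ and $\tau_k/\sqrt{n}$ replaced by $\sqrt{p_k}\tau_k/p$.

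There is no real obstacle here: the positivity of the diagonal entries (guaranteed by positive definiteness of $\Delt_k$ and $\Sig$, hence of $\Delt_k^*,\Sig^*$, under assumption \ref{a2}) ensures the square roots are well-defined, and the growth condition $n\vee p_k = o(\exp(n\wedge p_k))$ is exactly what puts us in case (ii) of Theorem~\ref{main_ci_thm} so that the stated values of $D_k,D_k'$ hold. The only subtlety worth emphasizing in the write-up is that the $\ell_\infty$ bound in the event is a uniform bound over all entries, so we may invoke it at an arbitrary pair $(i,j)$.
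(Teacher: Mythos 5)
Your proof is correct and matches what the paper intends: the corollary is indeed just the entrywise, de-standardized form of the bound that defines $\mathcal{E}_{\Delt}(k,\M)$ and $\mathcal{E}_{\Sig}(k,\N)$, obtained by writing $\rho(\Delt_k) = \mathrm{diag}(\Delt_k)^{-1/2}\Delt_k\,\mathrm{diag}(\Delt_k)^{-1/2}$, noting that conjugation by a diagonal matrix scales the $(i,j)$ entry by $(\delta_{k,ii}\delta_{k,jj})^{-1/2}$, and clearing the (strictly positive) denominators; the constants then match the definitions of $\nu_{n,k}$ and $\nu_{p_k}$ exactly. The paper merely points to the interior of the proof in \citet{zhou2014supplement} rather than writing this out, so your self-contained derivation is a faithful and complete version of the same argument.
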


We now have the necessary large deviation inequalities to prove Lemma~\ref{step1_lemma}, a generalization of Lemma 6.1 from \citet{zhou2014gemini}. Though the proof of Lemma~\ref{step1_lemma} closely resembles that of Lemma 6.1 from \citet{zhou2014gemini}, modifications must be made as iPCA considers multiple distinct matrix-variate normal models while \citet{zhou2014gemini} considers only one matrix-variate normal model. For clarity, we give our proof in its entirety and refer to results in \citet{zhou2014gemini} when necessary.

\begin{lemma}\label{step1_lemma}
Suppose that \ref{a1}-\ref{a4} hold. Let $\hat{\Delt}_{\rho,k}$ and $\hat{\Delt}_k$ be obtained as in steps~\ref{deltkhat_rho_step} and \ref{deltkhat_step} in Algorithm~\ref{alg:glasso_off_cor}, where we choose 
\begin{align}
\lambda_k = \frac{2 \alpha_k}{\epsilon (1-\alpha_k)} \geq \frac{3\alpha_k}{1-\alpha_k} \quad \text{for } \: \alpha_k = A \: \nu_{n,k} \: \text{where } A = \frac{\sqrt{n} \norm{\Sig}_F}{\mathrm{tr}(\Sig)} \label{lambda_k}
\end{align}
and $\epsilon \in (0, 2/3)$. Then on event $\mathcal{E}^*$, we have for $\hat{\mathbf{S}}_{\Sig}$ defined in step~\ref{shat_sig_step} of Algorithm~\ref{alg:glasso_off_cor},
\begin{align*}
\left | (\hat{\mathbf{S}}_{\Sig} - \Sig^*)_{ij} \right | &\leq \sum_{k=1}^{K} \frac{p_k}{p} \left[ 4C \tilde{K}^2 \sqrt{\sigma_{*,ii} \sigma_{*,jj}} \sqrt{\frac{\log(n \vee p_k)}{p_k}} (1 + o(1)) \right] + \sum_{k=1}^{K} | \sigma_{*,ij} | \tilde{\mu}_k \\
&= \sum_{k=1}^{K} \left[ \sqrt{\sigma_{*,ii} \sigma_{*,jj}} \: \nu_{p_k} (1 + o(1))  +  | \sigma_{*,ij} | \tilde{\mu}_k \right],
\end{align*}
where
\begin{align*}
\tilde{\mu}_k &:= \lambda_k \frac{\norm{\hat{\Delt}_{\rho,k}^{-1}}_{1, \text{off}}}{p} + \frac{\alpha_k}{1-\alpha_k} \frac{\norm{\hat{\Delt}_{\rho,k}^{-1}}_{1}}{p} \leq \mu_k, \\
\mu_k &:= \lambda_k \frac{\norm{\rho(\Delt_k)^{-1}}_{1, \text{off}}}{p} + \frac{\alpha_k}{1-\alpha_k} \frac{\norm{\rho(\Delt_k)^{-1}}_{1}}{p} + o(\lambda_k).
\end{align*}
Moreover, $\mathbb{P}(\mathcal{E}^*) \geq 1 - \sum_{k=1}^{K} \frac{8}{(n \vee p_k)^2}$.

\noindent
To put simply, with high probability,
\begin{align*}
\norm{\hat{\mathbf{S}}_{\Sig} - \Sig^*}_{\infty} &\leq \sum_{k=1}^{K} \frac{p_k}{p} \left[ 4 C \tilde{K}^2 \sigma_{*, \max} \sqrt{\frac{\log(n \vee p_k)}{p_k}} (1 + o(1)) \right] + \sum_{k=1}^{K} \sigma_{*, \max} \mu_k \\
&= \sigma_{*, \max} \sum_{k=1}^{K} \left[ \nu_{p_k} (1 + o(1)) + \mu_k \right].
\end{align*}
\end{lemma}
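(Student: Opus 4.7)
The plan is to adapt the proof of Lemma 6.1 in \citet{zhou2014gemini}, which handled a single matrix-variate normal sample, to the integrated setting where $\hat{\mathbf{S}}_{\Sig}$ averages contributions from $K$ distinct models. The starting decomposition is
\[
\hat{\mathbf{S}}_{\Sig} - \Sig^* = \sum_{k=1}^K \Bigl( \tfrac{1}{p} \X_k \smash[t]{\hat{\Delt}}^{-1}_k \X_k^T - \tfrac{p_k}{p}\Sig^* \Bigr),
\]
which holds because $\sum_k p_k = p$. This immediately produces the outer sum over $k$ appearing in the conclusion, reducing the problem to a per-$k$ entrywise estimate of each summand.

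For each $k$, I would insert the deterministic oracle term $\tfrac{1}{p}\X_k \Delt_k^{*-1}\X_k^T$ and split the summand into
(i) $\tfrac{1}{p}\X_k \Delt_k^{*-1}\X_k^T - \tfrac{p_k}{p}\Sig^*$, a quadratic-form concentration error with a fixed weight, and
(ii) $\tfrac{1}{p}\X_k(\smash[t]{\hat{\Delt}}^{-1}_k - \Delt_k^{*-1})\X_k^T$, the cost of using the graphical-lasso estimate in place of $\Delt_k^{*-1}$. The identities $\Delt_k^* = \mathrm{tr}(\Sig)\Delt_k/n$ and $\Sig = \mathrm{tr}(\Sig)\Sig^*/n$ give $\mathrm{tr}(\Delt_k\Delt_k^{*-1})/p \cdot \sigma_{ij} = (p_k/p)\sigma_{*,ij}$, so Corollary~\ref{main_ci_cor} applied to (i) with the deterministic weight $\N = \Delt_k^{*-1}$ bounds it entrywise by $\nu_{p_k}\norm{\Delt_k}_2\norm{\Delt_k^{*-1}}_2\sqrt{\sigma_{ii}\sigma_{jj}}$. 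Rescaling into starred quantities and absorbing the spectrum-bounded ratio $\norm{\Delt_k}_2\norm{\Delt_k^{-1}}_2$ into the $(1+o(1))$ factor via \ref{a2} and \ref{a4} recovers the leading $\sqrt{\sigma_{*,ii}\sigma_{*,jj}}\,\nu_{p_k}(1+o(1))$ term.

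For (ii), I would use the factorization $\smash[t]{\hat{\Delt}}^{-1}_k = \hat{\W}_k^{-1}\hat{\Delt}_{\rho,k}^{-1}\hat{\W}_k^{-1}$ together with the graphical-lasso KKT identity $\hat{\Delt}_{\rho,k} - \hat{\mathbf{S}}_{\rho,k} = \lambda_k \hat{\mathbf{Z}}_k$ (off-diagonal, with $\norm{\hat{\mathbf{Z}}_k}_\infty \leq 1$) produced at line~\ref{deltkhat_rho_step}. Substituting and expanding entrywise bounds the $(i,j)$-entry of the plug-in error by $|\sigma_{*,ij}|$ times the data-driven quantity $\lambda_k\norm{\hat{\Delt}_{\rho,k}^{-1}}_{1,\text{off}}/p + (\alpha_k/(1-\alpha_k))\norm{\hat{\Delt}_{\rho,k}^{-1}}_{1}/p$, which is precisely $\tilde{\mu}_k$. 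The discrepancy between $\hat{\W}_k^2$ and $(\W_k^*)^2$ needed along the way comes from a second application of Theorem~\ref{main_ci_thm}(ii) (with $\M$ a standard unit matrix), producing the factor $\alpha_k = A\,\nu_{n,k}$ in the correct scaling, and the comparison $\tilde{\mu}_k \leq \mu_k$ then follows from a Rothman-type $\ell_\infty$ bound $\norm{\hat{\Delt}_{\rho,k}^{-1} - \rho(\Delt_k)^{-1}}_\infty = o(\lambda_k)$ under \ref{a1} with the chosen $\lambda_k$.

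The main obstacle is handling (ii): $\smash[t]{\hat{\Delt}}^{-1}_k$ depends on $\X_k$, so Corollary~\ref{main_ci_cor} cannot be invoked with $\N = \smash[t]{\hat{\Delt}}^{-1}_k$ directly. The KKT rewriting is precisely the device that converts the offending data dependence into either an $\ell_\infty$-bounded subgradient (handled by H\"older's inequality) or a deterministic weight (handled by the concentration inequality). A union bound over the $O(K)$ per-$k$ events used in (i) and (ii), each of probability at least $1 - 3/(n\vee p_k)^2$, yields the claimed probability $1 - \sum_k 8/(n\vee p_k)^2$, and the principal bookkeeping beyond \citet{zhou2014gemini} is tracking how the $p_k/p$ weights propagate through the sum over $k$.
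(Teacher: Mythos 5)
Your overall architecture matches the paper's: decompose $\hat{\mathbf{S}}_{\Sig} - \Sig^*$ as a sum over $k$, insert the oracle weight $(\Delt_k^*)^{-1}$, control the oracle quadratic-form term via Corollary~\ref{main_ci_cor} with the centering identity $\mathrm{tr}(\Delt_k (\Delt_k^*)^{-1})/p \cdot \sigma_{ij} = (p_k/p)\sigma_{*,ij}$, and extract the $|\sigma_{*,ij}|\tilde{\mu}_k$ contribution from the graphical-lasso optimality conditions. Your term (i) is exactly the paper's $\M_{1,k}$ and is handled the same way.

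The gap is in term (ii). You assert that after the KKT rewriting, the $(i,j)$-entry of $\tfrac{1}{p}\X_k(\smash[t]{\hat{\Delt}}^{-1}_k - (\Delt_k^*)^{-1})\X_k^T$ is bounded by $|\sigma_{*,ij}|\,\tilde{\mu}_k$. That bound only controls the ``mean'' component of this term, namely $\tfrac{1}{p}\,\mathrm{tr}\bigl(-\smash[t]{\hat{\Delt}}^{-1}_k(\hat{\Delt}_k-\Delt_k^*)\bigr)\,\Sig^*$ (the paper's $\M_{2,k}$, whose trace is bounded by the H\"older/KKT argument you describe). What remains is the fluctuation of the random quadratic form around that trace-centered value, i.e.\ $\tfrac{1}{p}(\hat{\R}_{\Sig,k}-\R_{\Sig,k})\mathrm{vec}(\tilde{\boldsymbol{\Theta}}_k)$ with $\tilde{\boldsymbol{\Theta}}_k = \smash[t]{\hat{\Delt}}^{-1}_k - (\Delt_k^*)^{-1}$ (the paper's $\M_{3,k}$). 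The KKT identity does not make this weight deterministic, so your proposed dichotomy (``bounded subgradient or deterministic weight'') does not dispose of it. The paper handles it by applying the large-deviation inequality \emph{conditionally} with the random weight $\tilde{\boldsymbol{\Theta}}_k$ (the event $\mathcal{E}_{\Sig}(k,\tilde{\boldsymbol{\Theta}}_k)$, costing the extra $2/(n\vee p_k)^2$ in the probability budget) and then invoking the graphical-lasso operator-norm rate $\norm{\tilde{\boldsymbol{\Theta}}_k}_2 \lesssim \lambda_k\sqrt{s_k\vee 1}$, so that the whole term is $o(\nu_{p_k})\sqrt{\sigma_{*,ii}\sigma_{*,jj}}$ and is absorbed into the $(1+o(1))$ factor. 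This is precisely where the sparsity of $\Delt_k^{-1}$ from \ref{a1} enters the proof; your proposal never uses $s_k$, which is a sign the step is missing. Separately, the comparison $\tilde{\mu}_k\le\mu_k$ needs control of $\norm{\hat{\Delt}_{\rho,k}^{-1}-\rho(\Delt_k)^{-1}}_{1}$ relative to $p$ (the paper cites an $\ell_1$-type corollary for this), not merely an entrywise $\ell_\infty$ bound of order $o(\lambda_k)$, which is too weak to sum over $p_k^2$ entries.
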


\begin{proof}

For each $k = 1, \dots, K$, define
\begin{align*}
\R_{\Sig, k} &:= [\delta_{k, 11} \mathrm{vec}(\Sig) \: \dots \: \delta_{k, 1p_k} \mathrm{vec}(\Sig) \: \dots \: \delta_{k, p_k p_k} \mathrm{vec}(\Sig)] \equiv \mathrm{vec}(\Sig) \otimes \mathrm{vec}(\Delt_k)^T \\
\hat{\R}_{\Sig, k} &:= \left[ \mathrm{vec}(\tilde{\mathbf{S}}_{k}^{11}) \: \dots \: \mathrm{vec}(\tilde{\mathbf{S}}_{k}^{1p_k}) \: \dots \: \mathrm{vec}(\tilde{\mathbf{S}}_{k}^{p_k p_k}) \right].
\end{align*}

Then one can verify as in \citet{zhou2014gemini} that
\begin{align}
\mathrm{vec}(\Sig^*) = \frac{1}{p} \sum_{k=1}^{K} \R_{\Sig, k} \mathrm{vec}((\Delt_k^*)^{-1}) \quad \text{and} \quad \mathrm{vec}(\hat{\mathbf{S}}_{\Sig}) = \frac{1}{p} \sum_{k=1}^{K} \hat{\R}_{\Sig, k} \mathrm{vec}(\smash[t] {\hat{\Delt}}^{-1}_k). \label{step1_eq1}
\end{align}

The equalities from \eqref{step1_eq1} thus yield
\begin{align}
\mathrm{vec}(\hat{\mathbf{S}}_{\Sig} - \Sig^*) &= \frac{1}{p} \sum_{k=1}^{K} \hat{\R}_{\Sig, k} \mathrm{vec}(\smash[t] {\hat{\Delt}}^{-1}_k) - \frac{1}{p} \sum_{k=1}^{K} \R_{\Sig, k} \mathrm{vec}((\Delt_k^*)^{-1}) \\
&= \frac{1}{p} \sum_{k=1}^{K} \hat{\R}_{\Sig, k} \mathrm{vec}(\smash[t] {\hat{\Delt}}^{-1}_k) - \frac{1}{p} \sum_{k=1}^{K} \R_{\Sig, k} \mathrm{vec}((\Delt_k^*)^{-1}) \nonumber \\
&\qquad + \frac{1}{p} \sum_{k=1}^{K} \hat{\R}_{\Sig, k} \mathrm{vec}((\Delt_k^*)^{-1}) - \frac{1}{p} \sum_{k=1}^{K} \hat{\R}_{\Sig, k} \mathrm{vec}((\Delt_k^*)^{-1}) \\
&= \frac{1}{p} \sum_{k=1}^{K} \left[ (\hat{\R}_{\Sig, k} - \R_{\Sig, k}) \mathrm{vec}((\Delt_k^*)^{-1}) +  \hat{\R}_{\Sig, k} \mathrm{vec}(\smash[t] {\hat{\Delt}}^{-1}_k - (\Delt_k^*)^{-1}) \right] \label{decomp1}.
\end{align}

For each $k = 1, \dots, K$, define $\boldsymbol{\Theta}_k := \hat{\Delt}_k - \Delt_k^*$ and $\tilde{\boldsymbol{\Theta}}_k := \smash[t] {\hat{\Delt}}^{-1}_k - (\Delt_k^*)^{-1}$, and notice that $\tilde{\boldsymbol{\Theta}}_k = - \smash[t] {\hat{\Delt}}^{-1}_k \boldsymbol{\Theta}_k (\Delt_k^*)^{-1}$. Then\begin{align}
\hat{\R}_{\Sig, k} \mathrm{vec}(\tilde{\boldsymbol{\Theta}}_k) &= \hat{\R}_{\Sig, k} \mathrm{vec}(\tilde{\boldsymbol{\Theta}}_k) + \R_{\Sig, k} \mathrm{vec}(\tilde{\boldsymbol{\Theta}}_k) - \R_{\Sig, k} \mathrm{vec}(\tilde{\boldsymbol{\Theta}}_k) \\
&= \R_{\Sig, k} \mathrm{vec}(\tilde{\boldsymbol{\Theta}}_k) + (\hat{\R}_{\Sig, k} - \R_{\Sig, k}) \mathrm{vec}(\tilde{\boldsymbol{\Theta}}_k) \\
&= \R_{\Sig, k} \mathrm{vec}(- \smash[t] {\hat{\Delt}}^{-1}_k \boldsymbol{\Theta}_k (\Delt_k^*)^{-1}) + (\hat{\R}_{\Sig, k} - \R_{\Sig, k}) \mathrm{vec}(- \smash[t] {\hat{\Delt}}^{-1}_k \boldsymbol{\Theta}_k (\Delt_k^*)^{-1}) \label{decomp2}.
\end{align}

Putting \eqref{decomp1} and \eqref{decomp2} together gives
\begin{align*}
\mathrm{vec}(\hat{\mathbf{S}}_{\Sig} - \Sig^*) &= \sum_{k=1}^{K} \Big[ \frac{1}{p} (\hat{\R}_{\Sig, k} - \R_{\Sig, k}) \mathrm{vec}((\Delt_k^*)^{-1}) +  \frac{1}{p} \R_{\Sig, k} \mathrm{vec}(- \smash[t] {\hat{\Delt}}^{-1}_k \boldsymbol{\Theta}_k (\Delt_k^*)^{-1}) \nonumber \\
& \qquad \qquad + \frac{1}{p} (\hat{\R}_{\Sig, k} - \R_{\Sig, k}) \mathrm{vec}(- \smash[t] {\hat{\Delt}}^{-1}_k \boldsymbol{\Theta}_k (\Delt_k^*)^{-1}) \Big] \\
& =: \sum_{k=1}^{K} (\U_{1,k} + \U_{2,k} + \U_{3,k}),
\end{align*}
where the matrix correspondent for each of the above terms will be denoted by $\M_{1,k}$, $\M_{2,k}$, and $\M_{3,k}$, respectively. We will proceed to bound each of the terms separately.

In order to bound $\U_{1,k}$, notice that by definition of $\R_{\Sig, k}$ and $\hat{\R}_{\Sig, k}$, 
\begin{align}
\U_{1,k} &= \frac{1}{p} (\hat{\R}_{\Sig, k} - \R_{\Sig, k}) \mathrm{vec}((\Delt_k^*)^{-1}) \label{u1_first}\\
&= \frac{1}{p} \sum_{q=1}^{p_k} \sum_{r=1}^{p_k} \mathrm{vec}(\tilde{\mathbf{S}}_k^{qr} - \delta_{k, qr} \Sig) [(\Delt_k^*)^{-1}]_{qr} \\
&= \frac{1}{p} \sum_{q=1}^{p_k} \sum_{r=1}^{p_k} \mathrm{vec}(\tilde{\mathbf{S}}_k^{qr}) [(\Delt_k^*)^{-1}]_{qr} - \frac{\mathrm{tr}(\Delt_k (\Delt_k^*)^{-1})}{p} \mathrm{vec}(\Sig) \\
&= \frac{1}{p} \sum_{q=1}^{p_k} \sum_{r=1}^{p_k} \mathrm{vec}(\tilde{\mathbf{S}}_k^{qr}) [(\Delt_k^*)^{-1}]_{qr} - \frac{p_k}{p} \mathrm{vec}(\Sig^*) \label{u1_last} \\
\implies \M_{1,k} &= \frac{1}{p} \sum_{q=1}^{p_k} \sum_{r=1}^{p_k} \tilde{\mathbf{S}}_k^{qr} [(\Delt_k^*)^{-1}]_{qr} - \frac{p_k}{p} \Sig^*.
\end{align}

Define $\mathcal{E}_{0,k}$ to be the event $\mathcal{E}_{\Delt}(k,(\Sig^*)^{-1}) \cap \mathcal{E}_{\Sig}(k, (\Delt_k^*)^{-1})$. Then by Corollary~\ref{main_ci_cor}, under the event $\mathcal{E}_{0,k}$, we have $| (\M_{1,k})_{ij} | \leq \nu_{p_k} \sqrt{\sigma_{*,ii} \sigma_{*,jj}}. $ Moreover, by Theorem~\ref{main_ci_thm}, $\mathbb{P}(\mathcal{E}_{0,k}) \geq 1 - \frac{3}{(n \vee p_k)^2}$.

Next, we will bound the second term $\U_{2,k}$. As in \citet{zhou2014supplement}, we can write
\begin{align*}
\U_{2,k} &= \frac{1}{p} \R_{\Sig, k} \mathrm{vec}(- \smash[t] {\hat{\Delt}}^{-1}_k \boldsymbol{\Theta}_k (\Delt_k^*)^{-1}) = \frac{1}{p} \mathrm{tr}(- \smash[t] {\hat{\Delt}}^{-1}_k \boldsymbol{\Theta}_k) \mathrm{vec}(\Sig^*) , \\
\text{and } \M_{2,k} &= \frac{1}{p} \mathrm{tr}(- \smash[t] {\hat{\Delt}}^{-1}_k \boldsymbol{\Theta}_k) \Sig^*.
\end{align*}

By Claim 17.3 in \citet{zhou2014supplement}, it follows that that under event $\mathcal{X}_{0,k} := \mathcal{E}_{\Sig}(k, \I) \cap \mathcal{E}_{\Delt}(k, \I)$,
\begin{align*}
\lambda_{k} \norm{\hat{\Delt}_{\rho,k}^{-1}}_{1, \text{off}} - \frac{\alpha_k}{1-\alpha_k} \norm{\hat{\Delt}_{\rho,k}^{-1}}_1 \leq \mathrm{tr}(\boldsymbol{\Theta}_k \smash[t] {\hat{\Delt}}^{-1}_k) \leq \lambda_{k} \norm{\hat{\Delt}_{\rho,k}^{-1}}_{1, \text{off}} + \frac{\alpha_k}{1-\alpha_k} \norm{\hat{\Delt}_{\rho,k}^{-1}}_1
\end{align*}

Thus,
\begin{align*}
| \mathrm{tr}(-\boldsymbol{\Theta}_k \smash[t] {\hat{\Delt}}^{-1}_k) | \leq \lambda_{k} \norm{\hat{\Delt}_{\rho,k}^{-1}}_{1, \text{off}} + \frac{\alpha_k}{1-\alpha_k} \norm{\hat{\Delt}_{\rho,k}^{-1}}_1,
\end{align*}
which implies that on $\mathcal{X}_{0,k}$,
\begin{align*}
| (\M_{2,k})_{ij} | \leq \frac{| \sigma_{*,ij} |}{p} \left( \lambda_{k} \norm{\hat{\Delt}_{\rho,k}^{-1}}_{1, \text{off}} + \frac{\alpha_k}{1-\alpha_k} \norm{\hat{\Delt}_{\rho,k}^{-1}}_1 \right) = | \sigma_{*,ij} | \tilde{\mu}_k.
\end{align*}
Additionally, by Theorem~\ref{main_ci_thm}, $\mathbb{P}(\mathcal{X}_{0,k}) \geq 1 - \frac{3}{(n \vee p_k)^2}$.

To bound the final term $\U_{3,k}$, we follow the same logic as \eqref{u1_first}-\eqref{u1_last} to obtain
\begin{align*}
\U_{3,k} &= \frac{1}{p} (\hat{\R}_{\Sig, k} - \R_{\Sig, k}) \mathrm{vec}(\tilde{\boldsymbol{\Theta}}_k) \\
&= \frac{1}{p} \sum_{q=1}^{p_k} \sum_{r=1}^{p_k} \mathrm{vec}(\tilde{\mathbf{S}}_k^{qr})[\tilde{\boldsymbol{\Theta}}_k]_{qr} - \frac{\mathrm{tr}(\Delt_k \tilde{\boldsymbol{\Theta}}_k)}{p} \mathrm{vec}(\Sig), \\
\text{and } \M_{3,k} &= \frac{1}{p} \sum_{q=1}^{p_k} \sum_{r=1}^{p_k} \tilde{\mathbf{S}}_k^{qr} [\tilde{\boldsymbol{\Theta}}_k]_{qr} - \frac{\mathrm{tr}(\Delt_k \tilde{\boldsymbol{\Theta}}_k)}{p} \Sig.
\end{align*}

Define $\mathcal{E}_{1,k} := \mathcal{E}_{\Sig}(k, \tilde{\boldsymbol{\Theta}}_k)$. By the proof of Theorem~\ref{main_ci_thm}, $\mathbb{P}(\mathcal{E}_{1,k} | \mathcal{X}_{0,k}) \geq 1 - \frac{2}{(n \vee p_k)^2}$. On the other hand, under the event $\mathcal{E}_{1,k}$, Corollary~\ref{main_ci_cor} gives
$| (\M_{3,k})_{ij} | \leq \nu_{p_k} \norm{\Delt_k}_2 \norm{\tilde{\boldsymbol{\Theta}}_k}_2 \sqrt{\sigma_{ii} \sigma_{jj}}.$

We next bound $\norm{\tilde{\boldsymbol{\Theta}}_k}_2$ using Corollary 10.1 from \citet{zhou2014supplement}, so assuming that $\mathcal{X}_{0,k}$ holds, then
\begin{align*}
\norm{\tilde{\boldsymbol{\Theta}}_k}_2 \leq C' \lambda_k \frac{\sqrt{s_k \vee 1}}{\delta^*_{k,\min} \phi^2_{\min}(\rho(\Delt_k))}.
\end{align*}

In summary, under the event $\mathcal{E}^* := \cap_{k=1}^{K} (\mathcal{E}_{0,k} \cap \mathcal{E}_{1,k} \cap \mathcal{X}_{0,k})$, we have that
\begin{align*}
| (\hat{\mathbf{S}}_{\Sig} - \Sig^*)_{ij} | &\leq \sum_{k=1}^{K} \left( | (\M_{1,k})_{ij} | + | (\M_{2,k})_{ij} | + | (\M_{3,k})_{ij} | \right) \\
&\leq \sum_{k=1}^{K} \bigg[\nu_{p_k} \sqrt{\sigma_{*,ii} \sigma_{*,jj}} + | \sigma_{*,ij} | \tilde{\mu}_k \nonumber\\
&\qquad + \nu_{p_k} \norm{\Delt_k}_2 \left( C' \lambda_k  \frac{\sqrt{s_k \vee 1}}{\delta^*_{k,\min} \phi^2_{\min}(\rho(\Delt_k))} \right) \sqrt{\sigma_{ii} \sigma_{jj}} \bigg] \\
&= \sum_{k=1}^{K} \bigg[\nu_{p_k} \sqrt{\sigma_{*,ii} \sigma_{*,jj}} + | \sigma_{*,ij} | \tilde{\mu}_k \nonumber\\
&\qquad + \nu_{p_k} \norm{\Delt_k}_2 \left( C' \lambda_k  \frac{\sqrt{s_k \vee 1}}{\delta_{k,\min} \phi^2_{\min}(\rho(\Delt_k))} \right) \sqrt{\sigma_{*,ii} \sigma_{*,jj}} \bigg] \\
&\leq \sum_{k=1}^{K} \bigg[\nu_{p_k} \sqrt{\sigma_{*,ii} \sigma_{*,jj}} (1 + o(1)) + | \sigma_{*,ij} | \tilde{\mu}_k \bigg]
\end{align*}
under the assumptions. 

Furthermore, applying the union bound implies that $\mathbb{P}(\mathcal{E}^*) \geq 1 - \sum_{k=1}^{K} \frac{8}{(n \vee p_k)^2}$. Assuming that event $\mathcal{X}_{0,k}$ holds, $\tilde{\mu}_k \leq \mu_k$ is a consequence of Corollary 17.4 from  \citet{zhou2014supplement}, and this concludes the proof.
\end{proof}

It is important to point out that in our choice of $\lambda_k$ in \eqref{lambda_k}, $A$ can be considered a constant under the bounded spectrum assumption \ref{a2}. In addition, \ref{a1} implies that $\sqrt{\frac{\log(n \vee p_k)}{n}} \rightarrow 0$ as $n, p_k \rightarrow \infty$. Therefore, since $\lambda_k$ is on the order of $A \sqrt{\frac{\log(n \vee p_k)}{n}}$, $\lambda_k \rightarrow 0$ as $n, p_k \rightarrow \infty$ (for $k = 1, \dots, K$).

Next, stepping through Algorithm~\ref{alg:glasso_off_cor}, we bound the error between the correlation estimate $\hat{\mathbf{S}}_{\rho, \Sig}$ and the true correlation matrix $\rho(\Sig)$.

\begin{theorem}\label{step2_thm}
Suppose the conditions in Lemma~\ref{step1_lemma} hold. Define $\tilde{\eta}_k := \nu_{p_k} (1 + o(1)) + \tilde{\mu}_k$. Then under event $\mathcal{E}^*$, we have for $\hat{\mathbf{S}}_{\rho, \Sig}$ defined in step~\ref{shat_rho_sig_step} in Algorithm~\ref{alg:glasso_off_cor} and $i \neq j$,
\begin{align}
\left | \left( \hat{\mathbf{S}}_{\rho, \Sig} - \rho(\Sig) \right)_{ij} \right | &\leq \sum_{k=1}^{K} \frac{p_k}{p} \left[ \frac{4 C \tilde{K}^2 \sqrt{\frac{\log(n \vee p_k)}{p_k}} (1 + o(1)) (1 + | \rho(\Sig)_{ij} |)}{1 - \sum_{k=1}^{K} \tilde{\eta}_k} \right] \nonumber \\
&\qquad \qquad + \frac{2 | \rho(\Sig)_{ij} | \sum_{k=1}^{K} \tilde{\mu}_k}{1 - \sum_{k=1}^{K} \tilde{\eta}_k} \label{step2_thm_eq1} \\
&= \sum_{k=1}^{K} \left[ \frac{\nu_{p_k}(1 + o(1))(1 + | \rho(\Sig)_{ij} |)}{1 - \sum_{k=1}^{K} \tilde{\eta}_k} + \frac{2 | \rho(\Sig)_{ij} | \tilde{\mu}_k}{1 - \sum_{k=1}^{K} \tilde{\eta}_k} \right] \\
&\leq \frac{2 \sum_{k=1}^{K} \eta_k}{1 - \sum_{k=1}^{K} \eta_k}, \qquad \text{where } \: \eta_k := \nu_{p_k} (1 + o(1)) + \mu_k.
\end{align}

\end{theorem}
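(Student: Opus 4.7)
The plan is to reduce Theorem~\ref{step2_thm} to a perturbation-of-ratios problem and then invoke Lemma~\ref{step1_lemma} twice: once for the off-diagonal numerator entry and once for the diagonal entries that form the denominator. Writing $a = (\hat{\mathbf{S}}_{\Sig})_{ij}$, $a^* = \sigma_{*,ij}$, $b = \sqrt{(\hat{\mathbf{S}}_{\Sig})_{ii}\,(\hat{\mathbf{S}}_{\Sig})_{jj}}$, and $b^* = \sqrt{\sigma_{*,ii}\sigma_{*,jj}}$, I would start from the algebraic identity
\[
\frac{a}{b} - \frac{a^*}{b^*} \;=\; \frac{a-a^*}{b} \,+\, \frac{a^*}{b^*}\cdot\frac{b^*-b}{b},
\]
which gives
\[
\bigl|(\hat{\mathbf{S}}_{\rho,\Sig} - \rho(\Sig))_{ij}\bigr| \;\le\; \frac{|a-a^*|}{b} \,+\, |\rho(\Sig)_{ij}|\cdot\frac{|b-b^*|}{b},
\]
using $|a^*|/b^* = |\rho(\Sig)_{ij}|$ together with the scale-invariance $\rho(\Sig)=\rho(\Sig^*)$.

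Next, I would control the diagonal perturbations to bound $b$ from below. Specializing Lemma~\ref{step1_lemma} at $i=j$ (where $|\rho(\Sig)_{ii}|=1$) gives
\[
\bigl|(\hat{\mathbf{S}}_{\Sig})_{ii} - \sigma_{*,ii}\bigr| \;\le\; \sigma_{*,ii}\sum_{k=1}^{K}\tilde{\eta}_k,
\]
so $(\hat{\mathbf{S}}_{\Sig})_{ii}/\sigma_{*,ii}$ lies in $[1-\sum_{k}\tilde{\eta}_k,\,1+\sum_{k}\tilde{\eta}_k]$. Taking the geometric mean of the $(i,i)$ and $(j,j)$ ratios keeps $b/b^*$ in the same interval, which yields $b \ge b^*(1-\sum_{k}\tilde{\eta}_k)$ and $|b-b^*|\le b^*\sum_{k}\tilde{\eta}_k$, and therefore $|b-b^*|/b \le \sum_{k}\tilde{\eta}_k/(1-\sum_{k}\tilde{\eta}_k)$. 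At this point the assumption $\sum_k \tilde{\eta}_k < 1$ (which follows from \ref{a1}--\ref{a4} for $n, p_k$ large) must be tacitly invoked so that the denominator is positive.

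Finally, I would substitute the off-diagonal Lemma~\ref{step1_lemma} bound $|a-a^*| \le b^*\sum_k\nu_{p_k}(1+o(1)) + |a^*|\sum_k\tilde{\mu}_k$ and divide by the lower bound on $b$. After canceling $b^*$, the $|\rho(\Sig)_{ij}|$ contributions group together as $|\rho(\Sig)_{ij}|[\sum_k\tilde{\mu}_k + \sum_k\tilde{\eta}_k]$ in the numerator; expanding $\tilde{\eta}_k = \nu_{p_k}(1+o(1))+\tilde{\mu}_k$ then produces precisely the coefficient $1+|\rho(\Sig)_{ij}|$ on the $\nu_{p_k}$ terms and the factor $2$ on the $|\rho(\Sig)_{ij}|\tilde{\mu}_k$ terms shown in \eqref{step2_thm_eq1}. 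The coarser final inequality $2\sum_k\eta_k/(1-\sum_k\eta_k)$ then follows by using $|\rho(\Sig)_{ij}|\le 1$ together with $\tilde{\mu}_k \le \mu_k$ (hence $\tilde{\eta}_k \le \eta_k$), a bound already recorded inside Lemma~\ref{step1_lemma}.

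The main obstacle is purely bookkeeping: normalizing cleanly by $b^*$ so that the $\sigma_{*,ij}$ factors coming out of Lemma~\ref{step1_lemma} turn into clean $|\rho(\Sig)_{ij}|$ factors, and verifying that taking a square root of the product of two diagonal ratios in $[1-\delta,1+\delta]$ (with $\delta=\sum_k\tilde{\eta}_k$) keeps us in $[1-\delta,1+\delta]$ rather than degrading to $[1-2\delta,1+2\delta]$. No additional probabilistic work is needed beyond Lemma~\ref{step1_lemma}, so the whole argument runs on the event $\mathcal{E}^*$ and inherits the probability bound $1-\sum_{k=1}^{K}8/(n\vee p_k)^2$ without modification.
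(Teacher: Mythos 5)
Your proposal is correct and follows essentially the same route as the paper: the same ratio-perturbation identity $\frac{a}{b}-\frac{a^*}{b^*}=\frac{a-a^*}{b}+\frac{a^*}{b^*}\cdot\frac{b^*-b}{b}$, the same specialization of Lemma~\ref{step1_lemma} to the diagonal to lower-bound $\sqrt{[\hat{\mathbf{S}}_{\Sig}]_{ii}[\hat{\mathbf{S}}_{\Sig}]_{jj}}$ by $\sqrt{\sigma_{*,ii}\sigma_{*,jj}}\,(1-\sum_k\tilde{\eta}_k)$, the same off-diagonal bound for the numerator, and the same regrouping that yields the $(1+|\rho(\Sig)_{ij}|)$ and $2|\rho(\Sig)_{ij}|\tilde{\mu}_k$ coefficients, with the final coarse bound from $|\rho(\Sig)_{ij}|\le 1$ and $\tilde{\mu}_k\le\mu_k$. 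Your explicit remark that $\sum_k\tilde{\eta}_k<1$ must hold for the denominators to be positive is a point the paper also leaves implicit (it is imposed as $\eta\le 1/4$ only later, in Theorem~\ref{step3_thm}).
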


\begin{proof}
Because $\tilde{\mu}_k \leq \mu_k$ by Lemma~\ref{step1_lemma} and $| \rho(\Sig)_{ij} | \leq 1$ for all $i,j$, it is clear that
\begin{align*}
\sum_{k=1}^{K} \left[ \frac{\nu_{p_k}(1 + o(1))(1 + | \rho(\Sig)_{ij} |)}{1 - \sum_{k=1}^{K} \tilde{\eta}_k} + \frac{2 | \rho(\Sig)_{ij} | \tilde{\mu}_k}{1 - \sum_{k=1}^{K} \tilde{\eta}_k} \right] \leq \frac{2 \sum_{k=1}^{K} \tilde{\eta}_k}{1 - \sum_{k=1}^{K} \tilde{\eta}_k} \leq \frac{2 \sum_{k=1}^{K} \eta_k}{1 - \sum_{k=1}^{K} \eta_k}.
\end{align*}
Therefore, it suffices to show \eqref{step2_thm_eq1}.

Assume throughout this proof that event $\mathcal{E}^*$ holds. Then by Lemma~\ref{step1_lemma}, 
\begin{align*}
\left | \frac{[\hat{\mathbf{S}}_{\Sig}]_{ii}}{\sigma_{*,ii}} - 1 \right |  \leq \sum_{k=1}^{K} \left[ \nu_{p_k} (1 + o(1)) + \tilde{\mu}_k \right] = \sum_{k=1}^{K} \tilde{\eta}_k,
\end{align*}
which implies
\begin{align}
\frac{[\hat{\mathbf{S}}_{\Sig}]_{ii}}{\sigma_{*,ii}} \geq 1 - \sum_{k=1}^{K} \tilde{\eta}_k \quad \implies \quad \sqrt{\frac{\sigma_{*,ii}}{[\hat{\mathbf{S}}_{\Sig}]_{ii}}} \leq \sqrt{\frac{1}{1 - \sum_{k=1}^{K} \tilde{\eta}_k}} \label{step2_eq1}
\end{align}
for all $i$. On the other hand, for $i \neq j$, Lemma~\ref{step1_lemma} gives
\begin{align}
\left | \left(\frac{\hat{\mathbf{S}}_{\Sig}}{\sqrt{\sigma_{*,ii} \sigma_{*,jj}}} - \rho(\Sig) \right)_{ij} \right | &= \left | \left(\frac{\hat{\mathbf{S}}_{\Sig}}{\sqrt{\sigma_{*,ii} \sigma_{*,jj}}} - \frac{\Sig^*}{\sqrt{\sigma_{*,ii} \sigma_{*,jj}}} \right)_{ij} \right | \\
&\leq \sum_{k=1}^{K} \left( \nu_{p_k} (1 + o(1)) + \frac{| \sigma_{*,ij} |}{\sqrt{\sigma_{*,ii} \sigma_{*,jj}}} \tilde{\mu}_k \right) \\
&= \sum_{k=1}^{K} \left( \nu_{p_k} (1 + o(1)) + | \rho(\Sig)_{ij} | \tilde{\mu}_k \right), \label{step2_eq2}
\end{align}
Thus, for $i \neq j$,
\begin{align}
\left| \left(\hat{\mathbf{S}}_{\rho, \Sig} - \rho(\Sig)\right)_{ij} \right | &= \left | \frac{[\hat{\mathbf{S}}_{\Sig}]_{ij}}{[\hat{\mathbf{S}}_{\Sig}]_{ii}^{1/2} [\hat{\mathbf{S}}_{\Sig}]_{jj}^{1/2}} - \rho(\Sig)_{ij} \right | \\
&= \left | \frac{\frac{[\hat{\mathbf{S}}_{\Sig}]_{ij}}{\sqrt{\sigma_{*,ii} \sigma_{*,jj}}}}{\frac{[\hat{\mathbf{S}}_{\Sig}]_{ii}^{1/2}}{\sqrt{\sigma_{*,ii}}} \frac{[\hat{\mathbf{S}}_{\Sig}]_{jj}^{1/2}}{\sqrt{\sigma_{*,jj}}}} - \rho(\Sig)_{ij} \right | \\
&\leq \left | \frac{\frac{[\hat{\mathbf{S}}_{\Sig}]_{ij}}{\sqrt{\sigma_{*,ii} \sigma_{*,jj}}} - \rho(\Sig)_{ij}}{\frac{[\hat{\mathbf{S}}_{\Sig}]_{ii}^{1/2}}{\sqrt{\sigma_{*,ii}}} \frac{[\hat{\mathbf{S}}_{\Sig}]_{jj}^{1/2}}{\sqrt{\sigma_{*,jj}}}} \right | + \left | \frac{ \rho(\Sig)_{ij}}{\frac{[\hat{\mathbf{S}}_{\Sig}]_{ii}^{1/2}}{\sqrt{\sigma_{*,ii}}} \frac{[\hat{\mathbf{S}}_{\Sig}]_{jj}^{1/2}}{\sqrt{\sigma_{*,jj}}}} - \rho(\Sig)_{ij} \right | \\
&= \left | \frac{\sqrt{\sigma_{*,ii}}}{[\hat{\mathbf{S}}_{\Sig}]_{ii}^{1/2}} \frac{\sqrt{\sigma_{*,jj}}}{[\hat{\mathbf{S}}_{\Sig}]_{jj}^{1/2}} \right | \left |\frac{[\hat{\mathbf{S}}_{\Sig}]_{ij}}{\sqrt{\sigma_{*,ii} \sigma_{*,jj}}} - \rho(\Sig)_{ij} \right | \nonumber \\
& \qquad \qquad + | \rho(\Sig)_{ij} | \left | \frac{\sqrt{\sigma_{*,ii}}}{[\hat{\mathbf{S}}_{\Sig}]_{ii}^{1/2}} \frac{\sqrt{\sigma_{*,jj}}}{[\hat{\mathbf{S}}_{\Sig}]_{jj}^{1/2}} - 1 \right | \\
&\leq \frac{1}{1 - \sum_{k=1}^{K} \tilde{\eta}_k} \left( \sum_{k=1}^{K} \left( \nu_{p_k} (1 + o(1)) + | \rho(\Sig)_{ij}| \tilde{\mu}_k \right) \right) \nonumber \\
&\qquad \qquad + | \rho(\Sig)_{ij} | \left | \frac{1}{1 - \sum_{k=1}^{K} \tilde{\eta}_k} - 1 \right | \label{step2_eq3} \\
&= \sum_{k=1}^{K} \frac{\nu_{p_k} (1 + o(1)) (1 + | \rho(\Sig)_{ij} |) + 2 | \rho(\Sig)_{ij} | \tilde{\mu}_k}{1 - \sum_{k=1}^{K} \tilde{\eta}_k}.
\end{align}
as desired. Note that \eqref{step2_eq3} follows from \eqref{step2_eq1} and \eqref{step2_eq2}.
\end{proof}

\subsection{Main Result} \label{sec:s_consistency_main_results}

We can now build on top of Theorem~\ref{step2_thm} and existing results to prove our main convergence result in Theorem~\ref{step3_thm}, which is a generalization of Corollary 17.2 from \citet{zhou2014supplement}. Indeed, when $K = 1$, Theorem~\ref{step3_thm} gives the same rates as \citet{zhou2014gemini}. Moreover, as a direct consequence of Theorem~\ref{step3_thm}, we obtain Corollary~\ref{subspace_consistency_thm}, which establishes bounds on the eigenvectors and eigenvalues of the additive $L_1$ correlation iPCA estimator, thereby giving the desired subspace consistency.

\begin{theorem}\label{step3_thm}
Suppose that \ref{a1}-\ref{a4} hold and that $\sqrt{n} \geq \frac{p}{\sqrt{p_k}}$ for each $k = 1, \dots, K$. Assume also that $\eta \leq 1/4$, and $\lambda_{\Sig}$ is chosen to be
\begin{align}
\lambda_{\Sig} = \frac{2 \sum_{k=1}^{K} \tilde{\eta}_k}{\epsilon_1 (1 - \sum_{k=1}^{K} \tilde{\eta}_k)}, \qquad \text{for some } \epsilon_1 \in (0,1).  \label{lambda_Sig}
\end{align}
Then on the event $\mathcal{E}^*$,
\begin{align}
\norm{\hat{\Sig} - \Sig^*}_2 &\leq 2 \tilde{C} \lambda_{\Sig} \sigma_{*, \max} \kappa(\rho(\Sig))^2 \sqrt{s_{\Sig} \vee 1}, \\
\norm{\hat{\Sig} - \Sig^*}_F &\leq 2 \tilde{C} \lambda_{\Sig} \sigma_{*, \max} \kappa(\rho(\Sig))^2 \sqrt{s_{\Sig} \vee n}, \\
\norm{\smash[t] {\hat{\Sig}}^{-1} - (\Sig^*)^{-1}}_2 &\leq \frac{\tilde{C} \lambda_{\Sig} \sqrt{s_{\Sig} \vee 1}}{\sigma_{*, \min} \phi^2_{\min}(\rho(\Sig))}, \label{step3_thm_eq1} \\
\norm{\smash[t] {\hat{\Sig}}^{-1} - (\Sig^*)^{-1}}_2 &\leq \frac{\tilde{C} \lambda_{\Sig} \sqrt{s_{\Sig} \vee n}}{\sigma_{*, \min} \phi^2_{\min}(\rho(\Sig))} \label{step3_thm_eq2}
\end{align}
for some constant $\tilde{C}$.

\end{theorem}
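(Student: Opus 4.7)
The strategy is to leverage the entry-wise bounds already established in Lemma~\ref{step1_lemma} and Theorem~\ref{step2_thm} and then invoke an existing graphical lasso convergence guarantee \emph{in correlation space}, before translating back to covariance space via the diagonal standard-deviation matrices. The choice of $\lambda_{\Sig}$ in \eqref{lambda_Sig} is engineered so that, on the event $\mathcal{E}^*$, Theorem~\ref{step2_thm} yields $\lambda_{\Sig} \geq \tfrac{1}{\epsilon_1} \, \|\hat{\mathbf{S}}_{\rho,\Sig} - \rho(\Sig)\|_{\infty,\mathrm{off}}$, which is precisely the regime in which the graphical lasso applied to a correlation-type input achieves fast rates.

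First, I would invoke a graphical lasso convergence theorem of the Rothman et al.~(2008) / \citet{zhou2014gemini} flavor for the penalized MLE
$$\hat\Sig_{\rho}^{-1} = \argmin_{\Sig_\rho^{-1}} \bigl\{-\log|\Sig_\rho^{-1}| + \mathrm{tr}(\hat{\mathbf{S}}_{\rho,\Sig} \Sig_\rho^{-1}) + \lambda_{\Sig} \|\Sig_\rho^{-1}\|_{1,\mathrm{off}}\bigr\}.$$
Under \ref{a2} (which forces $\phi_{\min}(\rho(\Sig))$ bounded away from $0$), \ref{a3} (sparsity of $\rho(\Sig)^{-1}$) and the entry-wise deviation bound from Theorem~\ref{step2_thm}, this yields
\begin{align*}
\|\hat\Sig_{\rho}^{-1} - \rho(\Sig)^{-1}\|_2 &\leq C_1 \lambda_{\Sig} \frac{\sqrt{s_{\Sig} \vee 1}}{\phi_{\min}^2(\rho(\Sig))}, \\
\|\hat\Sig_{\rho}^{-1} - \rho(\Sig)^{-1}\|_F &\leq C_1 \lambda_{\Sig} \frac{\sqrt{s_{\Sig} \vee n}}{\phi_{\min}^2(\rho(\Sig))},
\end{align*}
together with the parallel bounds on $\hat\Sig_\rho - \rho(\Sig)$ obtained by inversion (using that $\eta \leq 1/4$ keeps $\hat\Sig_\rho$ in a neighborhood of $\rho(\Sig)$ where inversion is Lipschitz with a constant proportional to $\|\rho(\Sig)^{-1}\|_2^2$, i.e.\ $1/\phi_{\min}^2(\rho(\Sig))$).

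Next comes the translation from correlation space to covariance space. Write $\Sig^* = \W \rho(\Sig) \W$ and $\hat\Sig = \hat\W \hat\Sig_\rho \hat\W$, with $\W = \mathrm{diag}(\Sig^*)^{1/2}$ and $\hat\W = \mathrm{diag}(\hat{\mathbf{S}}_\Sig)^{1/2}$. From the diagonal case of Lemma~\ref{step1_lemma}, $|[\hat{\mathbf{S}}_\Sig]_{ii}/\sigma_{*,ii} - 1| \leq \sum_k \tilde\eta_k = O(\lambda_{\Sig})$, so $\|\hat\W - \W\|_2 \lesssim \sqrt{\sigma_{*,\max}}\,\lambda_{\Sig}$ and $\|\hat\W^{-1} - \W^{-1}\|_2 \lesssim \lambda_{\Sig}/\sqrt{\sigma_{*,\min}}$, while $\|\hat\W\|_2, \|\W\|_2 \asymp \sqrt{\sigma_{*,\max}}$. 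Applying the telescoping identity
$$\hat\Sig - \Sig^* = (\hat\W - \W)\hat\Sig_\rho \hat\W + \W(\hat\Sig_\rho - \rho(\Sig))\hat\W + \W\rho(\Sig)(\hat\W - \W),$$
and the sub-multiplicative inequalities $\|\A\B\|_{2,F} \leq \|\A\|_2 \|\B\|_{2,F}$, the middle term dominates and produces the stated operator/Frobenius bounds with the $\sigma_{*,\max}\kappa(\rho(\Sig))^2$ prefactor (the $\kappa(\rho(\Sig))^2$ comes from inverting the Lipschitz constant above when passing from $\hat\Sig_\rho^{-1} - \rho(\Sig)^{-1}$ to $\hat\Sig_\rho - \rho(\Sig)$). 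The inverse bounds \eqref{step3_thm_eq1}--\eqref{step3_thm_eq2} follow from the analogous decomposition of $\smash[t]{\hat{\Sig}}^{-1} - (\Sig^*)^{-1} = \hat\W^{-1}\hat\Sig_\rho^{-1}\hat\W^{-1} - \W^{-1}\rho(\Sig)^{-1}\W^{-1}$, where now $\sigma_{*,\min}^{-1}$ replaces $\sigma_{*,\max}$ as the governing scale factor.

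\textbf{Main obstacle.} The plug-in of the graphical lasso bound is standard, and the telescoping in covariance space is routine algebra; the real work is in the bookkeeping of the $K$-fold sum. The condition $\sqrt{n} \geq p/\sqrt{p_k}$ for every $k$ ensures $\nu_{p_k} \gtrsim \nu_{n,k}$, so that the $\lambda_k$ in Lemma~\ref{step1_lemma} do not dominate $\lambda_{\Sig}$; without it, the $\tilde\mu_k$ terms could blow up the entry-wise bound feeding into Theorem~\ref{step2_thm}. One must also verify that $\sum_k \tilde\eta_k \leq \eta \leq 1/4$ guarantees both the non-trivial denominator $(1 - \sum_k \tilde\eta_k)^{-1}$ in Theorem~\ref{step2_thm} and the Lipschitz inversion of $\rho(\Sig)$, and that the implicit dependence on $\kappa(\rho(\Sig))$, $\sigma_{*,\min}$, $\sigma_{*,\max}$ collapses into the single constant $\tilde C$ once \ref{a1}--\ref{a4} are imposed. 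These uniform control issues — not any single analytic step — are where the argument must be handled with care.
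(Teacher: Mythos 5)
Your proposal is correct and follows essentially the same route as the paper's proof: control the sup-norm error of $\hat{\mathbf{S}}_{\rho,\Sig}$ via Theorem~\ref{step2_thm}, choose $\lambda_{\Sig}$ to dominate it, invoke the correlation-space graphical lasso rate (the paper cites Theorem 4.5 of Zhou (2014), which packages both the precision and correlation bounds, including the $\kappa(\rho(\Sig))^2$ factor, exactly as your Lipschitz-inversion sketch produces them), and then pass to covariance space by the same three-term telescoping in $\hat{\W}_{\Sig}$, $\W_{\Sig}$ (the paper uses Proposition 15.2 of Zhou's supplement for this step) with the diagonal control from Lemma~\ref{step1_lemma}. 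The only cosmetic difference is where the $\sqrt{s_{\Sig}\vee n}$ enters the Frobenius bound — you place it in the correlation-space graphical lasso rate, while the paper gets it from the $\norm{\rho(\Sig)}_F \le \sqrt{n}$ term in the diagonal-rescaling step — but both accountings yield the stated result.
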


\begin{proof}
Assume that $\mathcal{E}^*$ holds throughout this proof.

Next, recall that from Theorem~\ref{step2_thm},
\begin{align*}
\max_{i \neq j} \left | \left( \hat{\mathbf{S}}_{\rho, \Sig} - \rho(\Sig) \right)_{ij} \right | \leq \omega := \frac{2 \sum_{k=1}^{K} \tilde{\eta}_k}{1 - \sum_{k=1}^{K} \tilde{\eta}_k} \leq \frac{2 \sum_{k=1}^{K} {\eta}_k}{1 - \sum_{k=1}^{K} {\eta}_k}.
\end{align*}
By defining $C_{p,k} := \frac{2}{\epsilon} \frac{\norm{\rho(\Delt_k)^{-1}}_{1,\text{off}}}{p} + \frac{\norm{\rho(\Delt_k)^{-1}}_{1}}{p}$, it follows that $\eta_k = \left( \nu_{p_k} + \frac{\alpha_k}{1 - \alpha_k} C_{p,k} \right) (1 + o(1))$. Then because $C_{p,k} \asymp 1$ under \ref{a3} and $\nu_{p_k} \rightarrow 0$ as $n, p_k \rightarrow \infty$ under \ref{a1} and \ref{a2},  
\begin{align*}
\frac{2 \sum_{k=1}^{K} {\eta}_k}{1 - \sum_{k=1}^{K} {\eta}_k} \asymp \sum_{k=1}^{K} \nu_{p_k} \rightarrow 0 \quad \text{as } n, p_k \rightarrow \infty.
\end{align*}
Moreover, under \ref{a1}, we have $\omega \sqrt{s_{\Sig} \vee 1} = o(1)$. Thus, we can apply Theorem 4.5 from \citet{zhou2014gemini} to obtain for some constant $\tilde{C}$,
\begin{align*}
\norm{\hat{\Sig}_{\rho} - \rho(\Sig)}_2 &\leq \norm{\hat{\Sig}_{\rho} - \rho(\Sig)}_F \leq \tilde{C} \kappa(\rho(\Sig))^2 \lambda_{\Sig} \sqrt{s_{\Sig} \vee 1} \\
\text{and } \norm{\smash[t] {\hat{\Sig}}^{-1}_{\rho} - \rho(\Sig)^{-1}}_2 &\leq \norm{\smash[t] {\hat{\Sig}}^{-1}_{\rho} - \rho(\Sig)^{-1}}_F \leq \tilde{C} \lambda_{\Sig} \frac{\sqrt{s_{\Sig} \vee 1}}{2 \phi^2_{\min}(\rho(\Sig))}.
\end{align*}

Now since we have a bound on the correlation estimates, the next step is to consider the covariance estimates. Let us define $\W_{\Sig} = \text{diag}(\Sig^*)^{1/2} = \sqrt{\frac{n}{\mathrm{tr}(\Sig)}}\text{diag}(\Sig)^{1/2}$. By Lemma~\ref{step1_lemma}, 
\begin{align*}
| \hat{\W}_{\Sig,ii}^2 - \W_{\Sig,ii}^2 | = | \hat{\W}_{\Sig,ii}^2 - \sigma_{*,ii} | \leq \sigma_{*,ii} \sum_{k=1}^{K} \tilde{\eta}_k \quad \forall \: i. 
\end{align*}

Therefore,
\begin{align}
\norm{\hat{\W}_{\Sig} - \W_{\Sig}}_2 &\leq \sqrt{\sigma_{*,\max}} \left[ \left(\sqrt{1 + \sum_{k=1}^{K} \tilde{\eta}_k} - 1 \right) \vee \left(1 - \sqrt{1 - \sum_{k=1}^{K} \tilde{\eta}_k} \right) \right] \\
&\leq \sqrt{\sigma_{*,\max}} \sum_{k=1}^{K} \tilde{\eta}_k \label{step3_eq1} \\ 
\text{and } \norm{\hat{\W}_{\Sig}^{-1} - \W_{\Sig}^{-1}}_2 &\leq \frac{1}{\sqrt{\sigma_{*,\max}}} \left[ \frac{\sqrt{1 + \sum_{k=1}^{K} \tilde{\eta}_k} - 1}{\sqrt{1 + \sum_{k=1}^{K} \tilde{\eta}_k}} \vee \frac{1 - \sqrt{1 - \sum_{k=1}^{K} \tilde{\eta}_k}}{\sqrt{1-\sum_{k=1}^{K} \tilde{\eta}_k}}  \right] \\
&\leq \frac{1}{\sqrt{\sigma_{*,\max}}} \frac{\sum_{k=1}^{K} \tilde{\eta}_k}{\sqrt{1 - \sum_{k=1}^{K} \tilde{\eta}_k}}. \label{step3_eq2}
\end{align}

Using Proposition 15.2 in \citet{zhou2014supplement}, \eqref{step3_eq1}, and \eqref{step3_eq2}, we obtain
\begin{align*}
\norm{\hat{\Sig} - \Sig^*}_2 &= \norm{\hat{\W}_{\Sig} \hat{\Sig}_{\rho} \hat{\W}_{\Sig} - \W_{\Sig} \rho(\Sig) \W_{\Sig}}_2 \\
&\leq \left( \norm{\hat{\W}_{\Sig} - \W_{\Sig}}_2 + \norm{\W_{\Sig}}_2 \right)^2 \norm{\hat{\Sig}_{\rho} - \rho(\Sig)}_2 \nonumber \\
& \qquad + \norm{\hat{\W}_{\Sig} - \W_{\Sig}}_2 \left( \norm{\hat{\W}_{\Sig} - \W_{\Sig}}_2 + 2 \right) \norm{\rho(\Sig)}_2 \\
&\leq \left( \tilde{C} \kappa(\rho(\Sig))^2 \lambda_{\Sig} \sqrt{s_{\Sig} \vee 1} \right) \sigma_{*,\max} \left( 1 + \sum_{k=1}^{K} \tilde{\eta}_k \right)^2 \nonumber \\
&\qquad + \sigma_{*,\max} \sum_{k=1}^{K} \tilde{\eta}_k \left( \sum_{k=1}^{K} \tilde{\eta}_k + 2 \right) \norm{\rho(\Sig)}_2.
\end{align*}

And because $\lambda_{\Sig}$ was chosen to satisfy $\sum_{k=1}^{K} \tilde{\eta}_k < \lambda_{\Sig} (1 - \sum_{k=1}^{K} \tilde{\eta}_k)/2$ where  $\sum_{k=1}^{K} \tilde{\eta}_k \leq \sum_{k=1}^{K} \eta_k \leq 1/4$, we have
\begin{align*}
\norm{\hat{\Sig} - \Sig^*}_2 &\leq \left( \tilde{C} \kappa(\rho(\Sig))^2 \lambda_{\Sig} \sqrt{s_{\Sig} \vee 1} \right) \sigma_{*,\max} \left( 1 + \sum_{k=1}^{K} \tilde{\eta}_k \right)^2 \nonumber \\
&\qquad + \sigma_{*,\max} \frac{\lambda_{\Sig}}{2} \left( 1 - \sum_{k=1}^{K} \tilde{\eta}_k \right) \left( \sum_{k=1}^{K} \tilde{\eta}_k + 2 \right) \norm{\rho(\Sig)}_2 \\
&\leq 2\tilde{C} \kappa(\rho(\Sig))^2 \sigma_{*,\max} \lambda_{\Sig} \sqrt{s_{\Sig} \vee 1}.
\end{align*}

We can also bound the error on the Frobenius norm similarly. Using Proposition 15.2 in \citet{zhou2014supplement}, \eqref{step3_eq1}, \eqref{step3_eq2}, $\sum_{k=1}^{K} \tilde{\eta}_k < \lambda_{\Sig} (1 - \sum_{k=1}^{K} \tilde{\eta}_k)/2$, and $\sum_{k=1}^{K} \tilde{\eta}_k \leq \sum_{k=1}^{K} \eta_k \leq 1/4$ we see that
\begin{align*}
\norm{\hat{\Sig} - \Sig^*}_F &\leq \left( \norm{\hat{\W}_{\Sig} - \W_{\Sig}}_2 + \norm{\W_{\Sig}}_2 \right)^2 \norm{\hat{\Sig}_{\rho} - \rho(\Sig)}_F \nonumber \\
& \qquad + \norm{\hat{\W}_{\Sig} - \W_{\Sig}}_2 \left( \norm{\hat{\W}_{\Sig} - \W_{\Sig}}_2 + 2 \right) \norm{\rho(\Sig)}_F \\
&\leq \left( \tilde{C} \kappa(\rho(\Sig))^2 \lambda_{\Sig} \sqrt{s_{\Sig} \vee 1} \right) \sigma_{*,\max} \left( 1 + \sum_{k=1}^{K} \tilde{\eta}_k \right)^2 \nonumber \\
&\qquad + \sigma_{*,\max} \sum_{k=1}^{K} \tilde{\eta}_k \left( \sum_{k=1}^{K} \tilde{\eta}_k + 2 \right) \sqrt{n} \norm{\rho(\Sig)}_2 \\
&\leq 2 \tilde{C} \kappa(\rho(\Sig))^2 \sigma_{*,\max} \lambda_{\Sig} \sqrt{s_{\Sig} \vee n}.
\end{align*}

The same logic can be used to prove \eqref{step3_thm_eq1} and \eqref{step3_thm_eq2}, so we omit the details.
\end{proof}

To summarize the convergence results from Theorem~\ref{step3_thm}, if we set 
\begin{align}
\lambda_k  &= \frac{2 \alpha_k}{\epsilon (1 - \alpha_k)} \asymp \sqrt{\frac{\log(n \vee p_k)}{n}} \quad \forall \: k = 1, \dots, K, \label{lambda_k_order} \\
\text{and } \lambda_{\Sig} &= \frac{2 \sum_{k=1}^{K} \tilde{\eta}_k}{\epsilon_1 (1 - \sum_{k=1}^{K} \tilde{\eta}_k)} \asymp \sum_{k=1}^{K} \frac{p_k}{p} \sqrt{\frac{\log(n \vee p_k)}{p_k}}  \label{lambda_Sig_order},
\end{align}
then according to Theorem~\ref{step3_thm}, with probability $1 - \sum_{k=1}^{K} \frac{8}{(n \vee p_k)^2}$,
\begin{align}
\norm{\hat{\Sig} - \Sig^*}_2 &= O \left( \sum_{k=1}^{K}\frac{p_k}{p} \sqrt{\frac{(s_{\Sig} \vee 1) \log(n \vee p_k)}{p_k}} \right), \label{rate2} \\
\norm{\smash[t] {\hat{\Sig}}^{-1} - (\Sig^{-1})^*}_2 &= O \left( \sum_{k=1}^{K}\frac{p_k}{p} \sqrt{\frac{(s_{\Sig} \vee 1) \log(n \vee p_k)}{p_k}} \right), \label{rate2_inv} \\
\norm{\hat{\Sig} - \Sig^*}_F &= O \left( \sum_{k=1}^{K}  \frac{p_k}{p} \sqrt{\frac{(s_{\Sig} \vee n) \log(n \vee p_k)}{p_k}} \right), \label{rateF} \\
\norm{\smash[t] {\hat{\Sig}}^{-1} - (\Sig^{-1})^*}_F &= O \left( \sum_{k=1}^{K}  \frac{p_k}{p} \sqrt{\frac{(s_{\Sig} \vee n) \log(n \vee p_k)}{p_k}} \right) \label{rateF_inv}.
\end{align}

Subspace consistency for the additive $L_1$ correlation estimator, $\hat{\Sig}$, follows as a direct corollary of Theorem~\ref{step3_thm}.

\begin{corollary}\label{subspace_consistency_thm}
Let $\uu_i$ denote the eigenvector of $\Sig^*$ corresponding to the eigenvalue $\phi_i$ and $\hat{\uu}_i$ be the eigenvector of $\hat{\Sig}$ with eigenvalue $\hat{\phi}_i$, where the eigenvalues are sorted in descending order. Under the conditions stated in Theorem~\ref{step3_thm}, we have the following:
\begin{enumerate}[label=(\roman*)]
\item For each $i$,
\begin{align*}
| \hat{\phi}_i - \phi_i | = O\left( \sum_{k=1}^{K} \frac{p_k}{p} \sqrt{\frac{(s_{\Sig} \vee 1) \log (n \vee p_k)}{p_k}} \right).
\end{align*}
\item For each $i$ such that $\min (\phi_{i-1} - \phi_i, \: \phi_i - \phi_{i+1}) \neq 0$,
\begin{align*}
\norm{\hat{\uu}_i - \uu_i}_2 = O\left( \sum_{k=1}^{K} \frac{p_k}{p} \sqrt{\frac{(s_{\Sig} \vee 1) \log (n \vee p_k)}{p_k}} \right).
\end{align*}
\end{enumerate}
Thus, the eigenvectors and eigenvalues of the one-step additive $L_1$ correlation estimator, $\hat{\Sig}$, are consistent.
\end{corollary}

\begin{proof}
$(i)$ By Weyl's theorem \citep{horn2012matrix}, we have for each $i$,
\begin{align*}
| \hat{\phi}_i - \phi_i | \leq \norm{\hat{\Sig} - \Sig^*}_2.
\end{align*}
Thus, it follows from Theorem~\ref{step3_thm} that
\begin{align*}
| \hat{\phi}_i - \phi_i | = O\left( \sum_{k=1}^{K} \frac{p_k}{p} \sqrt{\frac{(s_{\Sig} \vee 1) \log (n \vee p_k)}{p_k}} \right).
\end{align*}

\noindent
$(ii)$ By a variant of the Davis-Kahan sin $\theta$ theorem given in \citet{yu2015daviskahan}, we have for each $i$,
\begin{align*}
\norm{\hat{\uu}_i - \uu_i}_2 \leq \frac{2^{3/2} \norm{\hat{\Sig} - \Sig^*}_2}{\min (\phi_{i-1} - \phi_i, \: \phi_i - \phi_{i+1})},
\end{align*}
assuming that $\min (\phi_{i-1} - \phi_i, \: \phi_i - \phi_{i+1}) \neq 0$ and $\hat{\vv}^T \vv \geq 0$. Note however that if $\hat{\uu}_i^T \uu_i < 0$, we can simply take the negative of $\hat{\uu}_i$ and apply the theorem to $-\hat{\uu}_i$ and $\uu_i$. Thus, for each $i$ such that $\min (\phi_{i-1} - \phi_i, \: \phi_i - \phi_{i+1}) \neq 0$, it follows from Theorem~\ref{step3_thm} that
\begin{align*}
\norm{\hat{\uu}_i - \uu_i}_2 = O\left( \sum_{k=1}^{K} \frac{p_k}{p} \sqrt{\frac{(s_{\Sig} \vee 1) \log (n \vee p_k)}{p_k}} \right).
\end{align*}

Since $\sum_{k=1}^{K} \frac{p_k}{p} \sqrt{\frac{(s_{\Sig} \vee 1) \log (n \vee p_k)}{p_k}}$ converges to $0$ as $n, p_1, \dots, p_K \rightarrow \infty$ under \ref{a1}, this gives us the consistency of the eigenvalues and eigenvectors of $\hat{\Sig}$.
\end{proof}

Note though that the consistency statements above assume that $\sqrt{n} \geq \frac{p}{\sqrt{p_k}} ~ \forall\: k = 1, \dots, K$ (i.e. the ``large $n$'' setting). If instead $\sqrt{n} < \frac{p}{\sqrt{p_k}} ~ \forall\: k = 1, \dots, K$ (i.e. the ``large $p$'' setting), then we can modify Algorithm~\ref{alg:glasso_off_cor} to first initialize an estimate of $\Sig$ assuming $\hat{\Delt}_k = \I$ for each $k$. Call this initial estimate $\hat{\Sig}^1$. Then estimate $\Delt_k$ given $\hat{\Sig}^1$. Call this estimate $\hat{\Delt}_k$. Lastly, obtain the final estimate of $\Sig$ given $\hat{\Delt}_1, \dots, \hat{\Delt}_K$. Denote this final estimate of $\Sig$ by $\hat{\Sig}$. Similar convergence rates can be obtained for the ``large $p$'' setting by using this modified algorithm. Namely, if the penalty parameters $\lambda_k$ and $\lambda_{\Sig}$ are chosen on the order of \eqref{lambda_k_order} and \eqref{lambda_Sig_order}, we can obtain the same rates as \eqref{rate2}-\eqref{rateF_inv}. The only additional assumption required here is a bound on $| \rho(\Sig)_{ij} |$, namely, $| \rho(\Sig)_{ij} | = O(\frac{\sqrt{n p_k}}{p})$ for each $k = 1, \dots, K$ and $i \neq j$. We omit this proof as it is a repetition of previous arguments with slight differences. A more thorough discussion of this scenario is presented in \citet{zhou2014gemini}.

Thus, in either the large $n$ or large $p$ case, we have a variant of the additive $L_1$ correlation Flip-Flop algorithm such that under certain assumptions, the estimate of $\Sig$ converges at a rate of $\displaystyle O \left( \sum_{k=1}^{K} \frac{p_k}{p} \sqrt{\frac{(s_{\Sig} \vee 1) \log (n \vee p_k)}{p_k}} \right)$ in the operator norm. As a result of convergence in the operator norm, we obtain consistency of the eigenvalues and eigenvectors of $\hat{\Sig}$ (see Corollary~\ref{subspace_consistency_thm}). Since eigenvectors of $\hat{\Sig}$ define the estimated iPCA subspace, this in turn implies subspace consistency of the one-step additive $L_1$ correlation estimator.

\section{Selecting Penalty Parameters} \label{sec:s_selecting_penalty_parameters}

Our missing data imputation framework for selecting penalty parameters is given in Algorithm~\ref{alg:select_pp}. In the subsequent sections, we discuss algorithms to impute the missing data in step~\ref{impute_step} of this algorithm. As one option, one could use the multi-cycle expectation-conditional maximization (MCECM) \citep{meng1993ecm} algorithm, which iterates between taking conditional expectations in the E-step and maximizing with respect to one variable at a time in the M-step. We derive the full MCECM algorithm for iPCA in Appendix~\ref{sec:s_full_mcecm}. This method generalizes the MCECM imputation method proposed in  \citet{allen2010transposable}, which only considered the $K=1$ case. However, as \citet{allen2010transposable} pointed out, the full MCECM algorithm is computationally expensive, so in practice, we also advocate using a faster one-step approximation to the MCECM algorithm, which we discuss in Appendix~\ref{sec:s_one_step_ecm}.

\begin{algorithm}\caption{Selecting Penalty Parameters via Missing Imputation Framework}\label{alg:select_pp}
\textbf{Given:} data $\X_1, \dots, \X_K$, space of penalty parameters $\Lambda$, type of penalized iPCA estimator
\begin{algorithmic}[1]
\For{$k = 1, \dots, K$}
\State \multiline{Randomly leave out $5\%$ of the elements in $\X_k$; denote these scattered missing elements by $\X_k^{m}$ }
\EndFor

\For{$\lambda$ in $\Lambda$}
\State \multiline{Impute missing values (preferably by Algorithm~\ref{alg:one_step}); denote these imputed values by $\Xhat_k^m$} \label{impute_step}
\EndFor

\State Select $\lambda$ which minimizes $\displaystyle \sum_{k=1}^{K} \frac{\norm{\Xhat_k^{m} - \X_k^{m}}_F^2}{\norm{\X_k^{m} - \Xbar_k^{m}}_F^2}$ , where $\Xbar_k^{m}$ are the values of the column mean matrix $\Xbar_k$ at the missing indicies.
\end{algorithmic}
\end{algorithm}

Following the notation in \citet{allen2010transposable}, we write $\X^{o} = (\X^o_1, \dots, \X^o_K)$ to denote the totality of observed entries of $\X = (\X_1, \dots, \X_K)$, and $\X^m = (\X^m_1, \dots, \X^m_K)$ to denote the missing entries of $\X$. Also define $\boldsymbol{\Theta} = (\mmu_1, \dots, \mmu_K, \Sig^{-1}, \Delt_1^{-1}, \dots, \Delt_K^{-1})$, and let $\boldsymbol{\Theta}'$ be the current estimates of $\boldsymbol{\Theta}$. Note that the MCECM and its one-step approximation for iPCA (where $K \geq 1$) are generalizations of the imputation algorithms from \citet{allen2010transposable} (where $K = 1$).

\subsection{Multi-Cycle Expectation-Conditional Maximization Algorithm} \label{sec:s_full_mcecm}

For concreteness, we will work with the multiplicative Frobenius penalty. The other penalized methods are very similar. We will proceed to derive the E-steps and M-steps with respect to each variable for the MCECM algorithm. Note here that we will estimate the column means $\mmu_k$ adaptively within the MCECM algorithm, rather than fixing them a priori. We find that this strategy works better in practice.

So first, in order to compute the E-steps, note that the $Q$ function is
\begin{align*}
Q(\boldsymbol{\Theta}; \boldsymbol{\Theta}') &:= \mathbb{E}_{\X^{m} | \X^{o}, \boldsymbol{\Theta}'}[\ell(\boldsymbol{\Theta} | \X)] \\
&= p \log | \Sig^{-1} | + n \sum_{k=1}^{K} \log | \Delt_k^{-1} | \nonumber \\
& \qquad - \mathbb{E}_{\X^{m} | \X^{o}, \boldsymbol{\Theta}'}\left[ \sum_{k = 1}^{K} \mathrm{tr}\left( \Sig^{-1} \left(\X_{k} - \mathbf{1}_{n} \mmu_{k}^{T} \right) \Delt_k^{-1} \left(\X_{k} - \mathbf{1}_{n} \mmu_{k}^{T} \right)^{T} \right) \right ] \nonumber \\ 
& \qquad - \norm{\Sig^{-1}}_F^2 \sum_{k=1}^{K} \rho_k \norm{\Delt_k^{-1}}_F^2
\end{align*}

Thus, for the E-step with respect to $\Sig$, we use linearity of the expectation and trace operators to obtain
\begin{align*}
&\mathbb{E}_{\X^{m} | \X^{o}, \boldsymbol{\Theta}'}\left[ \sum_{k = 1}^{K} \mathrm{tr} \left( \Sig^{-1} \left(\X_{k} - \mathbf{1}_{n} \mmu_{k}^{T} \right) \Delt_k^{-1} \left(\X_{k} - \mathbf{1}_{n} \mmu_{k}^{T} \right)^{T} \right) \right] \\
& \qquad = \mathrm{tr} \left( \Sig^{-1} \sum_{k = 1}^{K} \mathbb{E}_{\X^{m} | \X^{o}, \boldsymbol{\Theta}'}\left[ \left(\X_{k} - \mathbf{1}_{n} \mmu_{k}^{T} \right) \Delt_k^{-1} \left(\X_{k} - \mathbf{1}_{n} \mmu_{k}^{T} \right)^{T} \right] \right).
\end{align*}
Using the notation and proof of Proposition 3 in \citet{allen2010transposable}, the conditional expectation reduces to
\begin{align}
\mathbb{E}_{\X^{m} | \X^{o}, \boldsymbol{\Theta}'}\left[ \left(\X_{k} - \mathbf{1}_{n} \mmu_{k}^{T} \right) \Delt_k^{-1} \left(\X_{k} - \mathbf{1}_{n} \mmu_{k}^{T} \right)^{T} \right] = \sum_{k=1}^{K} \left( \Xhat_k \Delt_k^{-1} \Xhat_k^T + F_k(\Delt_k^{-1}) \right). \label{e_step_sig}
\end{align}

For the E-Step with respect to $\Delt_k$, a similar argument shows that 
\begin{align*}
& \mathbb{E}_{\X^{m} | \X^{o}, \boldsymbol{\Theta}'}\left[ \sum_{k = 1}^{K} \mathrm{tr} \left( \Sig^{-1} \left(\X_{k} - \mathbf{1}_{n} \mmu_{k}^{T} \right) \Delt_k^{-1} \left(\X_{k} - \mathbf{1}_{n} \mmu_{k}^{T} \right)^{T} \right) \right] \\
& \qquad = \sum_{k = 1}^{K} \mathrm{tr} \left( \mathbb{E}_{\X^{m} | \X^{o}, \boldsymbol{\Theta}'}\left[ \left(\X_{k} - \mathbf{1}_{n} \mmu_{k}^{T} \right)^{T} \Sig^{-1} \left(\X_{k} - \mathbf{1}_{n} \mmu_{k}^{T} \right) \right] \Delt_k^{-1} \right),
\end{align*}
and again from \citet{allen2010transposable}, we have that
\begin{align}
\mathbb{E}_{\X^{m} | \X^{o}, \boldsymbol{\Theta}'}\left[ \left(\X_{k} - \mathbf{1}_{n} \mmu_{k}^{T} \right)^{T} \Sig^{-1} \left(\X_{k} - \mathbf{1}_{n} \mmu_{k}^{T} \right) \right] = \Xhat_k^T \Sig^{-1} \Xhat_k + G(\Sig^{-1}). \label{e_step_deltk}
\end{align}

We next plug \eqref{e_step_sig} and \eqref{e_step_deltk} back into the $Q$ function and take partial derivatives to compute the M steps.

For the M-step with respect to $\Sig$, we have that
\begin{align*}
\frac{\partial Q}{\partial \Sig^{-1}} &= p \Sig - \sum_{k=1}^{K} \left( \Xhat_k \Delt_k^{-1} \Xhat_k^T + F_k(\Delt_k^{-1}) \right) - 2 \Sig^{-1} \sum_{k=1}^{K} \lambda_k \norm{\Delt_k^{-1}}_F^2 = 0,
\end{align*}
so given $\Delt_k^{-1}$ from the previous iteration, we can update $\Sig$ via an eigendecomposition of $\sum_{k=1}^{K} \left( \Xhat_k \Delt_k^{-1} \Xhat_k^T + F_k(\Delt_k^{-1}) \right)$. (The form of this update is analogous to the Flip-Flop updates in the multiplicative Frobenius Flip-Flop algorithm.)

Similarly, for the M-step with respect to $\Delt_k$,
\begin{align*}
\frac{\partial Q}{\partial \Delt_k^{-1}} = n \Delt_k - \left( \Xhat_k^T \Sig^{-1} \Xhat_k + G(\Sig^{-1}) \right) - 2 \lambda_k \Delt_k^{-1} \norm{\Sig^{-1}}_F^2 = 0,
\end{align*}
so given $\Sig^{-1}$ from the previous iteration, we can update $\Delt_k$ by an eigendecomposition of $\Xhat_k^T \Sig^{-1} \Xhat_k + G(\Sig^{-1})$.

Putting these E-steps and M-steps together, we provide the full MCECM algorithm to impute missing values in Algorithm~\ref{alg:full_mcecm}.

\begin{algorithm}
\caption{Full MCECM Algorithm for iPCA} \label{alg:full_mcecm}
\begin{algorithmic}[1]

\State Set $\hat{\mmu}_k$ to be the column means of $\X_k^o$  for each $k = 1, \dots, K$ \tikzmark{top1}
\State If $x^{k}_{ij}$ is missing, set $x^{k}_{ij} = \hat{\mmu}_k^{j}$.
\State Initialize $\smash[t] {\hat{\Sig}}^{-1}$ and $\smash[t] {\hat{\Delt}}^{-1}_1 \ldots \smash[t] {\hat{\Delt}}^{-1}_K$ to be symmetric positive definite. \tikzmark{bottom1}

\While{not converged}
\State Compute $\sum_{k=1}^{K} \left[ \Xhat_k \smash[t] {\hat{\Delt}}^{-1}_k \Xhat_k^T + F(\smash[t] {\hat{\Delt}}^{-1}_k) \right ]$ \Comment{E-Step ($\Sig$)} %\tikzmark{top2} \tikzmark{bottom2}
\State Update $\hat{\mmu}_k$ to be the column means of $\Xhat_k$ \tikzmark{top3}
\State Take eigendecomposition: $\sum_{k=1}^{K} \left[ \Xhat_k \smash[t] {\hat{\Delt}}^{-1}_k \Xhat_k^T + F(\smash[t] {\hat{\Delt}}^{-1}_k) \right ] = \U \boldsymbol{\Gamma} \U^T$ \tikzmark{right1}
\State Regularize eigenvalues: $\phi_i = \frac{1}{2p} \left( \gamma_i + \sqrt{\gamma_i^2 + 8p \sum_{k=1}^{K} \lambda_k \norm{\smash[t] {\hat{\Delt}}^{-1}_k}_F^2} \right)$
\State Update $\smash[t] {\hat{\Sig}}^{-1} = \U \boldsymbol{\Phi}^{-1} \U^T$ \tikzmark{bottom3}

\For{$k = 1, \dots, K$}
\State Compute $\Xhat_k^T \smash[t] {\hat{\Sig}}^{-1} \Xhat_k + G_k(\smash[t] {\hat{\Sig}}^{-1})$ \Comment{E-Step ($\Delt_k$)} %\tikzmark{top4} \tikzmark{bottom4} 
\State Update $\hat{\mmu}_k$ to be the column means of $\Xhat_k$ \tikzmark{top5}
\State Take eigendecomposition: $\Xhat_k^T \smash[t] {\hat{\Sig}}^{-1} \Xhat_k + G_k(\smash[t] {\hat{\Sig}}^{-1}) = \V \boldsymbol{\Phi} \V^T$
\State Regularize eigenvalues: $\gamma_i = \frac{1}{2n} \left( \phi_i + \sqrt{\phi_i^2 + 8n \lambda_k \norm{\smash[t] {\hat{\Sig}}^{-1}}_F^2} \right)$
\State Update $\smash[t] {\hat{\Delt}}^{-1}_k = \V \boldsymbol{\Gamma}^{-1} \V^T$ \tikzmark{bottom5}
\EndFor
\EndWhile
\vspace{-16pt}
\end{algorithmic}
\AddNote{top1}{bottom1}{right1}{Initialization}
%\AddNote{top2}{bottom2}{right1}{E-Step ($\Sig$)}
\AddNote{top3}{bottom3}{right1}{M-Step ($\Sig$)}
%\AddNote{top4}{bottom4}{right1}{E-Step ($\Delt_k$)}
\AddNote{top5}{bottom5}{right1}{M-Step ($\Delt_k$)}
\end{algorithm}

\subsection{One-Step Approximation} \label{sec:s_one_step_ecm}

Algorithm~\ref{alg:full_mcecm} is a generalization of the TRCMAimpute algorithm from \citet{allen2010transposable}, and as discussed in \citet{allen2010transposable}, it is computationally expensive to compute $F(\smash[t] {\hat{\Delt}}^{-1}_k)$ and $G(\smash[t] {\hat{\Delt}}^{-1}_k)$. Hence, rather than using the full MCECM algorithm to impute the missing values in step~\ref{impute_step} of Algorithm~\ref{alg:select_pp}, we advocate, as in \citet{allen2010transposable}, using a one-step approximation to the MCECM algorithm. We will empirically show that the one-step approximation is both a good approximation to the full MCECM algorithm and works well in practice.

The idea behind the one-step approximation is that since the first step of the MCECM algorithm typically gives the steepest decrease in the objective function, we will quickly approximate the MCECM algorithm by first obtaining a decent initial imputation and then stopping the algorithm after one M-step and one E-step. We detail the initial imputation step, M-step, and E-step as follows.

For the initial imputation step, we impute missing values assuming $\Sig = \I$. If we assume $\Sig = \I$, then $\X_k \sim \N_{n,p_k}(\mathbf{1}_n \mmu_k^T, \I \otimes \Delt_k)$ for each $k = 1, \dots, K$, or equivalently, $\x_1^k, \dots, \x^k_n \stackrel{iid}{\sim} N(\mmu_k, \Delt_k)$, where $x^k_i$ is the $i^{th}$ row of $\X_k$. Since this reduces to the familiar multivariate case, we can initially impute the missing values in $\X_k$ using any (regularized) multivariate normal imputation method such as RCMimpute from \citet{allen2010transposable} for each $k = 1, \dots, K$.

Given the initial imputation for $\X_1, \dots, \X_K$ from the previous step, we next compute the M-step and update $\mmu_1, \dots, \mmu_K, \Sig, \Delt_1, \dots, \Delt_K$ using the penalized Flip-Flop algorithms derived in Appendix~\ref{sec:s_cov_est}.

In the next and final step, we take an E-step to impute the missing values by $\Xhat^m_k = \mathbb{E}[\X^m_k | \X^o_k, \hat{\mmu}_k, \hat{\Sig}, \hat{\Delt}_k]$ for each $k = 1, \dots, K$. This can be done using the Alternating Conditional Expectations Algorithm from \citet{allen2010transposable}, applied to each $\X_k$ separately. The only difference is that we specialize to the case where the mean matrix is $\M_k = \mathbf{1}_n \mmu_k^T$. We summarize this one-step approximation method for missing data imputation in Algorithm~\ref{alg:one_step}.

\begin{algorithm}\caption{One-Step MCECM Approximation}\label{alg:one_step}
\begin{algorithmic}[1]

\For{$k = 1, \dots, K$} \Comment{Initial Imputation // E-Step}
\State Impute missing values in $\X_k$ assuming $\Sig = \I$; call it $\Xhat_{k,0}$
\Statex \hspace{8mm} \textbullet~Use any regularized multivariate normal imputation method
\EndFor

\State \multiline{Estimate $\mmu_1, \dots, \mmu_K, \Sig, \Delt_1, \dots, \Delt_K$ given $\Xhat_{1,0}, \dots, \Xhat_{K,0}$ via penalized MLE Flip-Flop algorithm\Comment{M-Step}} 
\For{$k = 1, \dots, K$} \Comment{E-Step}
\State \multiline{Set the missing values $\Xhat_k^{m} = \mathbb{E}[\X_k^{m} | \X_k^{o}, \hat{\mmu}_k, \hat{\Sig}, \hat{\Delt}_k]$ using the alternating conditional expectations algorithm as in \citet{allen2010transposable}}
\EndFor

\end{algorithmic}
%\AddNote{top}{bottom}{right}{Impute missing values via One-Step MCECM}
\end{algorithm}

In Figure~\ref{fig:em_conv}, we compare the numerical convergence of the one-step approximation and the full MCECM algorithm for a small simulation. From this plot, we note two important observations. First, the initialization (assuming $\Sig = \I$) in the one-step approximation algorithm makes a significant difference, compared to initializing missing elements to their respective column means as in the full MCECM algorithm. Second, the first update of the MCECM algorithm results in the largest increase in the log-likelihood function. As discussed in \citet{allen2010transposable}, these two observations motivate the one-step approximation algorithm, which takes advantage of a good initialization and one update step to main sufficient accuracy while reducing the computational workload. Figure~\ref{fig:em_conv} also shows that the likelihood function after a good initialization and one iteration is on par with the full MCECM algorithm after 15 iterations. For a more detailed discussion on computation and timing comparisons between the one-step approximation and the full MCECM algorithm, we refer to \citet{allen2010transposable}. Note that though \citet{allen2010transposable} only treats the $K=1$ case, the results are applicable to the $K>1$ case due to the separability of the log-likelihood function.

\begin{figure}[h]
\centering
\includegraphics[width =  .5\linewidth]{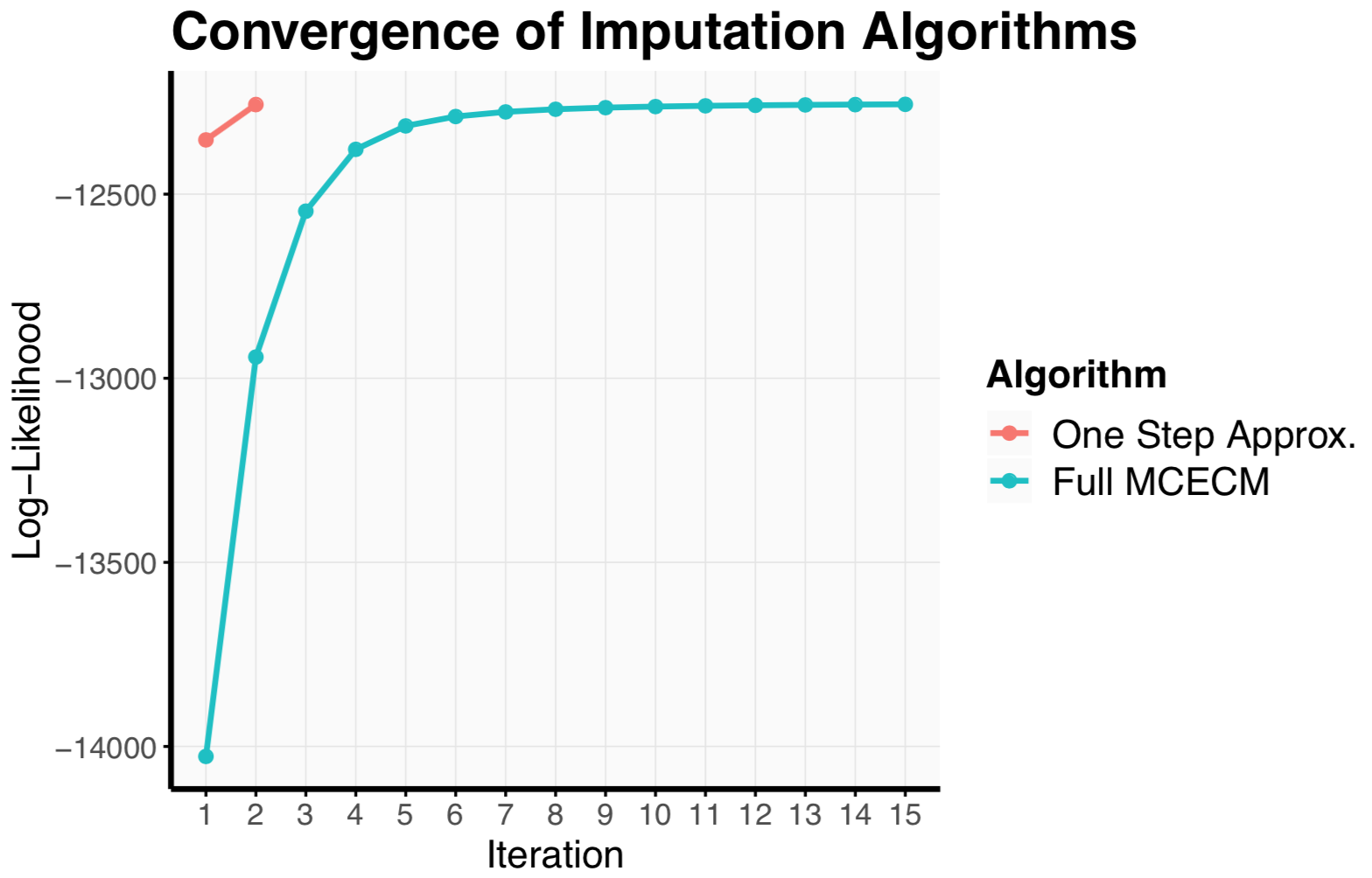}
\caption{\em \footnotesize  We use the same simulation as that in Figure~\ref{fig:cv} and randomly leave out $5\%$ of the entries in each data set. We impute the missing values using the full MCECM and one-step approximation algorithms with the multiplicative Frobenius penalty $(\lambda = (1,1))$, and we plot the log-likelihood value over each iterate. The log-likelihood obtained by the one-step approximation is on par with the log-likelihood after 15 iterations of the MCECM algorithm.}
\label{fig:em_conv}
\end{figure}

Figure~\ref{fig:cv} compares the average imputation errors from the full MCECM and the one-step approximation for a small simulation. In this case, for both the one-step approximation and the full MCECM, $\lambda = (10^{-.5}, 100) \approx (0.32, 100)$ gave the lowest average imputation error, and hence, both imputation methods selected $\lambda = (10^{-.5}, 100)$ for the multiplicative Frobenius iPCA estimator. This further supports the use of the one-step algorithm as an approximation to the full MCECM in practice. Moreover, we verified that the minimum subspace recovery error of $2.00$ was obtained at $\lambda^* = (0.01, 10^{1.5}) \approx (0.01, 31.62)$ and that $\lambda = (10^{-.5}, 100)$ yielded a similar subspace recovery error of $2.08$. This preliminary empirical evidence leads us to believe that the one-step imputation algorithm is indeed a good approximation to the full MCECM algorithm.

\begin{figure}[h]
\centering
\includegraphics[width =  1\linewidth]{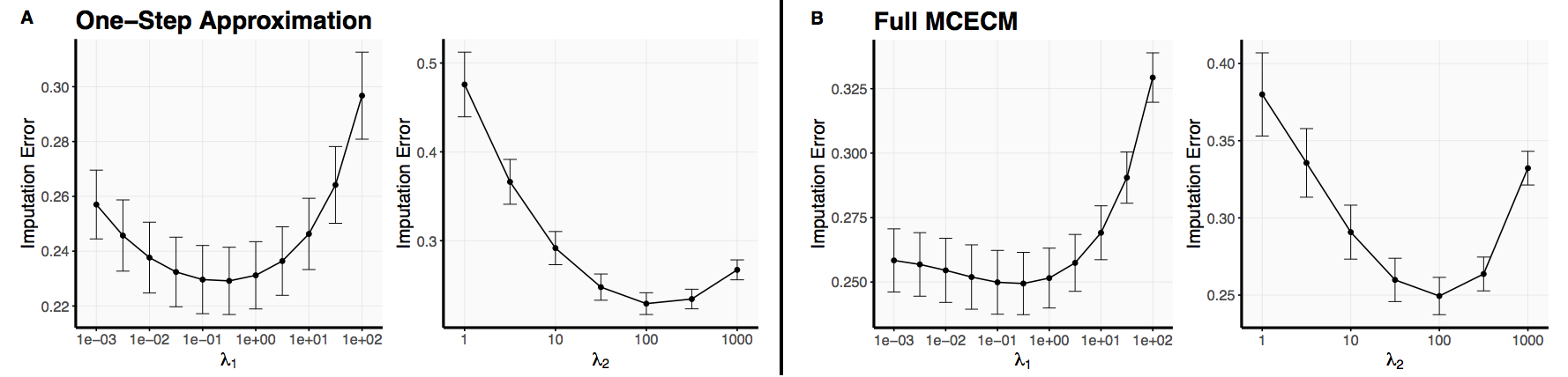}
\caption{\em \footnotesize We simulated two coupled data matrices $\X_1, \X_2$ with $n = 50$, $p_1 = 60, p_2 = 70$ according to the iPCA model \eqref{pop_model}. Here, we took $\Sig$ to be as in the base simulation described in Section~\ref{sec:sims}, $\Delt_1$ to be an autoregressive Toeplitz matrix with the $(ij)^{th}$ entry given by $.9^{|i-j|}$, and $\Delt_2$ to be a block-diagonal matrix with five equal-sized blocks. Then, we randomly removed $5\%$ of the elements in $\X_1$ and $\X_2$ and imputed these missing values using the full MCECM and the one-step approximation algorithms with the multiplicative Frobenius penalty. We plot the average imputation error $\sum_{k=1}^{K} \norm{\Xhat_k^{m} - \X_k^{m}}_F^2/\norm{\X_k^{m} - \Xbar_k^{m}}_F^2$ plus or minus one standard error, taken over 10 trials. In the left graph of each panel (A and B), $\lambda_1$ varies while $\lambda_2$ is fixed at its optimal value (i.e. $\lambda_2 = 100$), and in the right graph of each panel, $\lambda_2$ varies while $\lambda_1$ is fixed at its optimal value (i.e. $\lambda_1 = 10^{-.5}$). The minimum average imputation error is achieved at $\lambda = (10^{-.5}, 100)$ for both imputation algorithms.}
\label{fig:cv}
\end{figure}

\section{Simulations} \label{sec:s_sims}

In order to check that the simulation results in Figure~\ref{fig:sims} are not heavily dependent on our choice of $\Sig$ and $\Delt_1, \Delt_2, \Delt_3$, we ran additional simulations, varying the dimension of the true underlying subspace $\U$ and the number of data sets $K$. These simulation results are shown in Figure~\ref{fig:sims2}.

\begin{figure}
\centering
\includegraphics[width =  .8\linewidth]{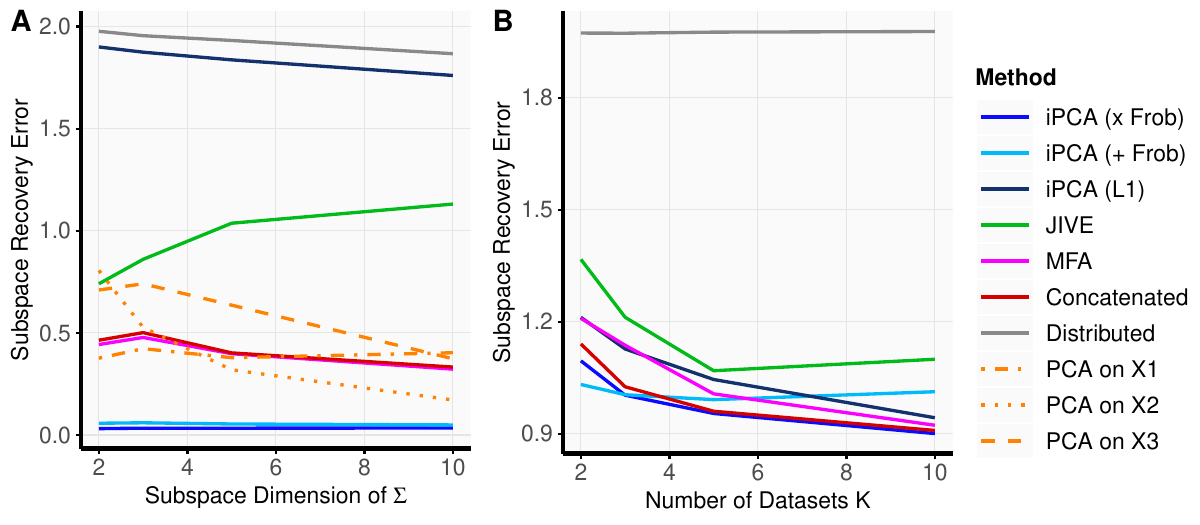}
\caption{\em \footnotesize Additional Simulations: (A) The Frobenius iPCA estimators yield the lower subspace recovery error regardless of the subspace dimension of $\Sig$; (B) As the number of integrated data sets increases, most methods tend to do better, with the multiplicative Frobenius iPCA penalty slightly outperforming the others when $K=10$. Note that we did not run individual PCAs in (B) because the number of data sets is changing.}
\label{fig:sims2}
\end{figure}

For the simulations in Figure~\ref{fig:sims2}A, we took $\Delt_1, \Delt_2, \Delt_3$ to be the same as in the base simulation, and we put $\Sig$ to be of the form $\U \D \U^T$, where $\U$ was a random $n \times n$ orthogonal matrix, and $\D = \text{diag}(d_1, \dots, d_n)$ was simulated by $d_i \sim \text{Unif}(5,75)$ for $i = 1, \dots, D$, and $d_i = 1$ otherwise. Here, $D$ is the dimension of the true underlying subspace of $\Sig$ (e.g. $D$ was taken to be 2 in the base simulation). Then, like in the base simulation, we generated $\X_k$ for each $k = 1, 2, 3$ by $\X_k \sim N(\mathbf{0}, \Sig \otimes \Delt_k)$, or equivalently $\X_k = \Sig^{1/2} \boldsymbol{\Omega}_k \Delt_k^{1/2}$, where $\boldsymbol{\Omega}_k$ is an $n \times p_k$ random matrix with $i.i.d.$ $N(0,1)$ entries.

For the simulations in Figure~\ref{fig:sims2}B, we took $\Sig$ to be the same as in the base simulation, and we generated $\Delt_k$ from a random choice among the covariance types: 
\begin{enumerate}[label=(\roman*)]
\item Autoregressive Toeplitz matrix with the $(ij)^{th}$ entry given by $\rho^{|i-j|}$, where $\rho \sim \text{Unif}(-.9,.9)$;
\item Block diagonal matrix with $B$ blocks of the entries $q_1, \dots, q_B$, where $B \sim \text{Unif}(3,10)$ and $q_i \sim \text{Unif}(0, .9)$;
\item Spiked covariance matrix $\U \D \U^T$, where $\U$ is a random orthogonal matrix, $d_i \sim \text{Unif}(5,75)$ for $i = 1, \dots, D$, $d_i = 1$ for $i = D+1, \dots, p_k$, and $D \sim \text{Unif}(5,50)$;
\item Observed covariance matrix of real miRNA data from TCGA Ovarian Cancer. (Note that this covariance matrix can only be used for one $k \in \{ 1, \dots, K\}$)
\end{enumerate}
The number of features was randomly selected by $p_k \sim \text{Unif}(200, 500)$, and we ensured that $\norm{\Delt_k}$ was larger than that of $\Sig$ so that the individual signal was larger than the joint signal. 

As conveyed in Figure~\ref{fig:sims2}A, the Frobenius iPCA estimators outperformed its competitors regardless of the subspace dimension of $\Sig$. It is also encouraging to see that the strong performance of the Frobenius iPCA estimators is not dependent on the cluster model for $\Sig$, which was used in the simulations in Figure~\ref{fig:sims} but not in Figure~\ref{fig:sims2}A. Note that since the simulated $\Sig$ was dense, the additive $L_1$ penalized iPCA estimator should perform poorly, as it does. We also point out that JIVE tends to do worse as the subspace dimension of $\Sig$ increases because JIVE tends to underestimate the rank of the joint variation matrix when the subspace dimension of $\Sig$ is larger. This is one of the disadvantages of matrix factorizations, as they require the rank of the factorized matrices to be chosen a priori.

In Figure~\ref{fig:sims2}B, we see that most of the methods perform better as the number of integrated data sets increases and that the multiplicative Frobenius iPCA estimator slightly outperforms all the other methods when $K = 10$. We speculate that the additive Frobenius iPCA estimator does worse for large values of $K$ because the grid of possible penalty parameters was too course, and as $K$ increase, so does the number of penalty parameters. Hence, it becomes more difficult to choose appropriate penalty parameters when $K$ is large.

For the Laplacian simulations in Figure~\ref{fig:sims_robust}A, we generated $\X_k$ for each $k = 1, 2, 3$ by $\X_k = \Sig^{1/2} \boldsymbol{\Omega} \Delt_k^{1/2} + \mathbf{E}_k$, where $\Sig$ and $\Delt_k$ were taken as in the base simulation, and $\mathbf{E}_k$ was an $n \times p_k$ random matrix with $i.i.d.$ $\text{Laplace}(0,b)$ entries. 

For the simulations in Figure~\ref{fig:sims_robust}B, we simulated three coupled data sets from an instance of the JIVE model: for each $k = 1, 2, 3$, $\X_k = \mathbf{J}_k + \A_k + \mathbf{E}_k$, where $\mathbf{J} = [\mathbf{J}_1, \mathbf{J}_2, \mathbf{J}_3]$ is the joint variation matrix of rank $r = 5$, $\A_1, \A_2, \A_3$ are the individual variation matrices of rank $r_1 = 10$, $r_2 = 15$, $r_3 = 20$, respectively, and $\mathbf{E}_k$ are error matrices with independent entries from $N(0, \sigma^2)$. Similar to the simulations in \citet{lock2013joint}, we set $\mathbf{J}$ and $\A_k$ by $\mathbf{J} = \U \V_k$ and $\A_k = \U_k \W_k$, where $\U \in \mathbb{R}^{n \times r}$, $\V_k \in \mathbb{R}^{r \times p_k}$, $\U_k \in \mathbb{R}^{n \times r_k}$, and $\W_k \in \mathbb{R}^{r_k \times p_k}$. Here, $\U, \U_k, \V_k, \W_k$ are matrices whose entries are randomly generated $i.i.d.$ from one of the following distributions: $N(0,1)$, $\text{Unif}(0,1)$, $\text{Exp}(1)$, and the discrete random variable $\{-2,-1,0,1,2\}$ with uniform probabilities. Note that under this JIVE model, the true joint covariance matrix is given by $\Sig = \mathbf{J} \mathbf{J}^T$.

In addition to the robustness simulations in Figure~\ref{fig:sims_robust}, we also ran simulations under the CMF model and simulations with uncommon row covariance matrices. When simulating from the CMF model, we generated 3 coupled data matrices with $n = 150$ and $p_1 = 300$, $p_2 = 500$, $p_3 = 400$ via the model $X_k = \U \V_k^T + \mathbf{E}_k$. Here, $\U \in \mathbb{R}^{n \times 2}$ was taken to be a random two-dimensional subspace from a cluster model with three clusters (as shown in Figure~\ref{fig:illustrative}), each $\V_k \in \mathbb{R}^{p_k \times 2}$ was taken to be the two top eigenvectors from the base simulation's $\Delt_k$ (e.g., $\V_1$ was taken to be the top two eigenvectors of the autoregressive Toeplitz matrix with entry $(i,j)$ given by $.9^{|i-j|}$), and $\mathbf{E}_k$ is a random matrix with $i.i.d.$ $N(0, \sigma^2)$ entries. We summarize the simulation results from the CMF model with increasing levels of noise $\sigma$ in Figure~\ref{fig:cmf}. From this figure, we see that when the additive noise is small, CMF (and concatenated PCA) yield slightly lower subspace recovery errors than the multiplicative Frobenius iPCA estimator, but as $\sigma$ increases, this improvement over the multiplicative Frobenius iPCA estimator diminishes.

\begin{figure}
\centering
\includegraphics[width =  .5\linewidth]{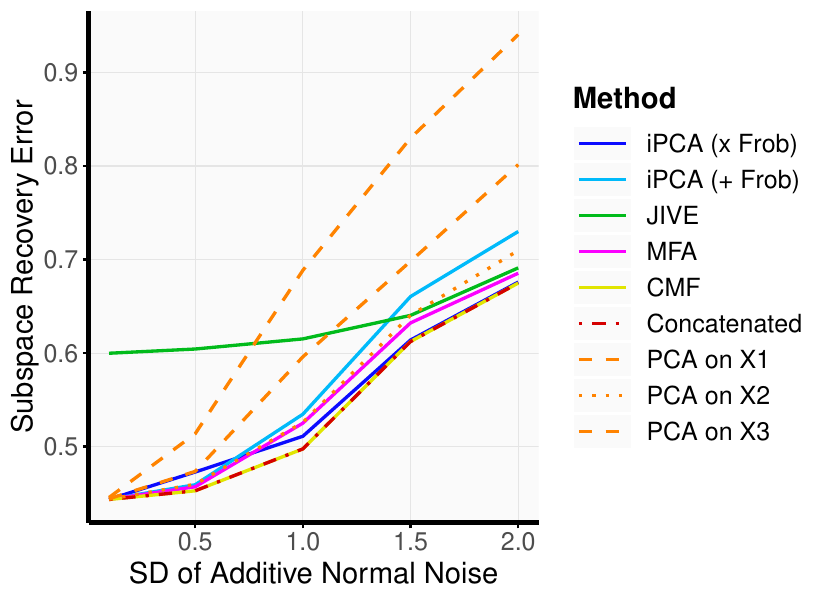}
\caption{\em \footnotesize CMF Model Simulations: As the noise level $\sigma$ of the CMF model increases, the multiplicative Frobenius iPCA estimator performs on par with CMF (and concatenated PCA).}
\label{fig:cmf}
\end{figure}

For the simulations with uncommon row covariance matrices, we modified the base simulation so that two out of the three data sets arose from the model $\X_k \sim N_{n, p_k}(\mathbf{0}, \Sig \otimes \Delt_k)$ while the final data set arose from the model $\X_k \sim N_{n, p_k}(\mathbf{0}, \tilde{\Sig} \otimes \Delt_k)$. Here, $\Sig$ and $\Delt_1, \Delt_2, \Delt_3$ are as in the base simulation while $\tilde{\Sig}$ is a rank-2 spiked covariance with eigenvectors generated from $i.i.d.$ random normal entries. We summarize the simulation results from this uncommon row covariance model in Figure~\ref{fig:uncommon_sig}. In Figure~\ref{fig:uncommon_sig}A, we see that regardless of which $\X_k$ is generated from the uncommon $\tilde{\Sig}$, the Frobenius iPCA estimators are relatively robust to the model misspecification. Figure~\ref{fig:uncommon_sig}B shows the difference in subspace recovery error when applying the methods to all three data sets (i.e. mixture) versus applying them to the two data sets generated by the common $\Sig$ (i.e. oracle). Here, we see that while the Frobenius iPCA estimators perform better with oracle knowledge of the two data sets generated by the common $\Sig$, the decline in performance is relatively small when adding the outlying data set, again demonstrating the robustness of the Frobenius iPCA estimators.

\begin{figure}
\centering
\includegraphics[width =  1\linewidth]{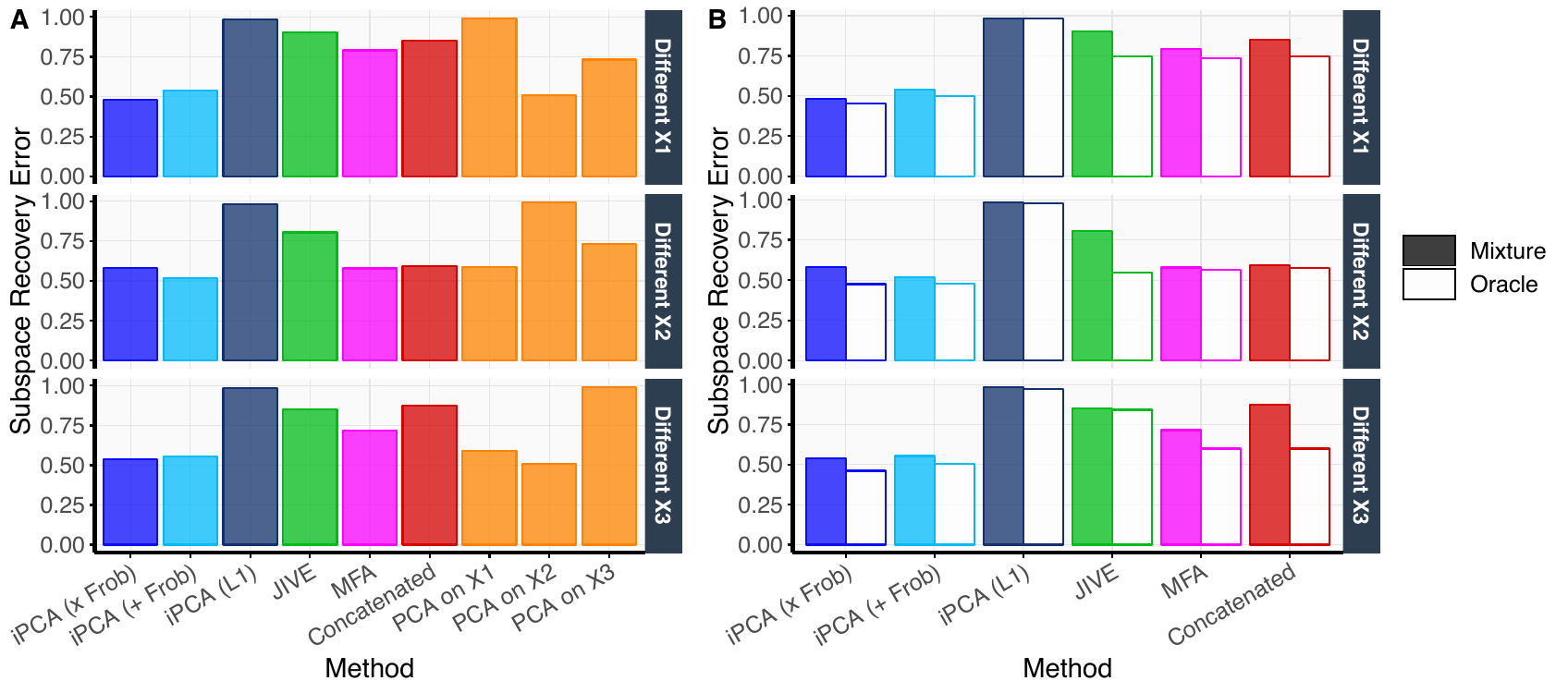}
\caption{\em \footnotesize Uncommon Row Covariance Simulations: (A) We plot the subspace recovery errors for various method under the uncommon row covariance model. In the top panel, $\X_1$ was generated from the uncommon $\tilde{\Sig}$. In the middle panel, $\X_2$ was generated differently, and in the bottom panel, $\X_3$ was different. (B) We compare the subspace recovery errors from various methods when applied to the mixture of all three data sets (i.e. mixture) versus when applied to the two data sets which were generated by the common $\Sig$ (i.e. oracle).}
\label{fig:uncommon_sig}
\end{figure}

In the last set of simulations, we empirically study the iPCA estimators under the sparse setting. For these sparse simulations, we generated two data sets with $n = 50$ and $p_1 = 50$, $p_2 = 100$ according to the Kronecker product model $\X_k \sim N_{n, p_k}(\mathbf{0}, \Sig \otimes \Delt_k)$. Because we are interested in uncovering the low-rank sparse structure when performing dimension reduction, we generate $\Sig$ as follows: Let $\mathbf{U}$ denote the eigenvectors of a block diagonal matrix with $B$ equally-sized blocks, and put $\mathbf{D} = \text{diag}(25, 12.5, 1, \dots, 1) \in \mathbb{R}^{n \times n}$. Then simulate $\Sig$ = $\mathbf{U} \mathbf{D} \mathbf{U}^T$ so that $\Sig$ is a low-rank block diagonal matrix with $B$ blocks. To generate $\Delt_1$, we use the \texttt{huge} package \citep{hugeR} in \texttt{R} to obtain the sparse covariance matrix associated with multivariate normal data being generated from a sparse banded graph (with bandwidth = 4). Lastly, we took $\Delt_2$ to be the block diagonal matrix (with 5 equally-sized blocks), created by taking the observed covariance matrix of miRNA data from TCGA ovarian cancer \citep{cancer2011integrated} and zeroing out the entries off of the block diagonal. The results from this sparse simulation study as we increase the number of blocks in $\Sig$ (and hence increase the sparsity in $\Sig$) are shown in Figure~\ref{fig:sparse}. From this study, we see that when the amount of sparsity is relatively low, the multiplicative Frobenius iPCA estimator and the additive $L_1$ iPCA covariance estimator perform the best while the additive $L_1$ iPCA correlation estimator performs poorly. We believe that in this low sparsity scenario, the additive $L_1$ iPCA correlation estimator struggles with choosing the appropriate penalty parameters and thus does not perform as well. However, as the sparsity level increases, both the additive $L_1$ iPCA covariance and correlation estimators outperform the dense Frobenius iPCA estimators.

\begin{figure}
\centering
\includegraphics[width =  .5\linewidth]{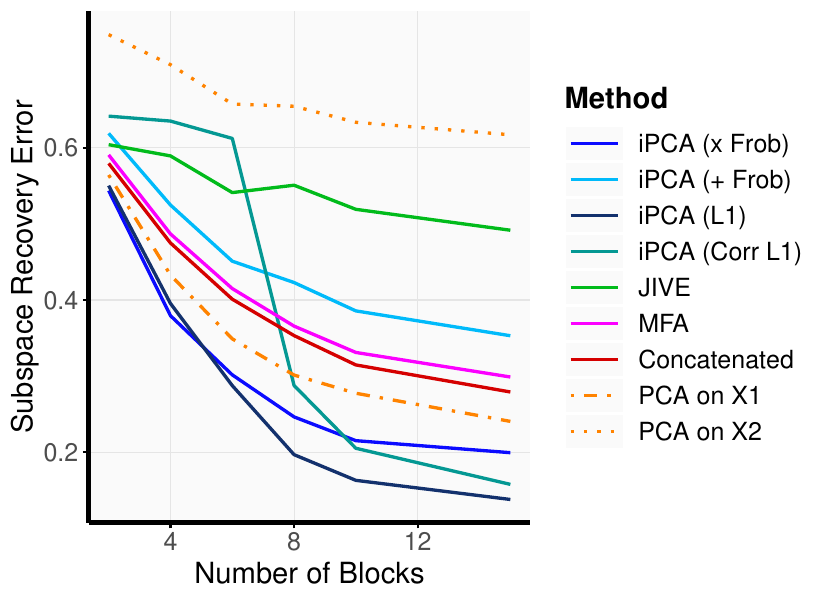}
\caption{\em \footnotesize Sparse Simulations: As we increase the number of blocks in $\Sig$ (and hence also the sparsity of $\Sig$), the sparse iPCA estimators improve over the Frobenius iPCA estimators. However, when the sparsity level is relatively low, the multiplicative Frobenius iPCA estimator performs similarly to the additive $L_1$ iPCA covariance estimator.}
\label{fig:sparse}
\end{figure}

We finally note that other metrics such as canonical angles, which also quantify the distance between subspaces, these metrics behave similarly to the subspace recovery error, so we omitted the results for brevity. Other common metrics such as $| \norm{\hat{\Sig}}_2 - \norm{\Sig}_2 |$ or $\norm{\hat{\Sig} - \Sig}_F^2$ are not appropriate for our study because we are interested in the distance between subspaces of eigenvectors, and eigenvectors are scale-invariant while these metric are not.

\section{Case Study: Integrative Genomics of Alzheimer's Disease}\label{sec:s_rosmap}

The ROSMAP data originally contained $309$ miRNAs, $41,809$ genes, and $420,132$ CpG sites, so we aggressively preprocessed the number of features in the RNASeq and methylation data sets to manageable sizes. First, we transformed the methylation data to m-values and log-transformed the RNASeq counts, as is common in most analyses for these data types. Then, we removed batch (experimental) effects from both data sets via ComBat \citep{johnson2007combat}. We next filtered the features by taking those with the highest variance (top 20,000 genes for RNASeq and top 50,000 CpG sites for methylation). Then, we performed univariate filtering and kept the features with the highest association to clinician's diagnosis. This left us with $p_1 = 309$, $p_2 = 900$, $p_3 = 1250$ in the miRNA, RNASeq, and methylation data sets, respectively. 

R code can be found at \url{https://github.com/DataSlingers/iPCA}.

\vskip 0.2in
\bibliography{ipca}

\end{document}